\theoremstyle{remark}
\newtheorem{definition}{Definition}
\newtheorem{problem}{Problem}
\newtheorem{corollary}{Corollary}
\newtheorem{proposition}{Proposition}
\newtheorem{theorem}{Theorem}
\newtheorem{lemma}{Lemma}
\newtheorem{remark}{Remark}
\newtheorem{example}{Example}
\begin{document}

\title{\vspace{0.4cm}Computing Linear Transformations with Unreliable Components}
\author{
\IEEEauthorblockN{Yaoqing Yang, Pulkit Grover and Soummya Kar}
%
\thanks{A preliminary version of this work~\cite{yang2016computing} was presented in part at the 2016 IEEE International Symposium on Information Theory (ISIT). This work is supported by NSF ECCS-1343324, NSF CCF-1350314 (NSF CAREER) and NSF CNS-1702694 for Pulkit Grover, NSF ECCS-1306128, NSF CCF-1513936, the Bertucci Graduate Fellowship for Yaoqing Yang, and by Systems on Nanoscale Information fabriCs (SONIC), one of the six SRC STARnet Centers, sponsored by MARCO and DARPA.

Y. Yang, P. Grover and S. Kar are with the Department of Electrical and Computer Engineering, Carnegie Mellon University, Pittsburgh, PA, 15213, USA. Email: \{yyaoqing,pgrover,soummyak\}@andrew.cmu.edu}
}
\maketitle
\rfoot{}
\renewcommand{\headrulewidth}{0pt}


\vspace{-0.6in}

\begin{abstract}
We consider the problem of computing a binary linear transformation when all circuit components are unreliable. Two models of unreliable components are considered: probabilistic errors and permanent errors. We introduce the ``ENCODED'' technique that ensures that the error probability of the computation of the linear transformation is kept bounded below a small constant independent of the size of the linear transformation even when all logic gates in the computation are noisy. By deriving a lower bound, we show that in some cases, the computational complexity of the ENCODED technique achieves the optimal scaling in error probability. Further, we examine the gain in energy-efficiency from use of a ``voltage-scaling'' scheme where gate-energy is reduced by lowering the supply voltage. We use a gate energy-reliability model to show that tuning gate-energy appropriately at different stages of the computation (``dynamic'' voltage scaling), in conjunction with ENCODED, can lead to orders of magnitude energy-savings over the classical ``uncoded'' approach. Finally, we also examine the problem of computing a linear transformation when noiseless decoders can be used, providing upper and lower bounds to the problem.
\end{abstract}

\textbf{\textit{Index terms}}: error-correcting codes, encoding and decoding errors, unreliable components, energy of coding and decoding.

\section{Introduction}\label{Intro}
It is widely believed that noise and variation issues in modern low-energy and low-area semiconductor devices call for new design principles of circuits and systems~\cite{Bor_Micro_05,Shan_DTC_08,haque2010hard}. From an energy-viewpoint, an urgent motivation for studying noise in circuits comes from saturation of ``Dennard's scaling'' of energy with smaller technology~\cite{Dennard}. Reducing CMOS transistor size no longer leads to a guaranteed reduction in energy consumption. Many novel devices are being explored to continue reducing energy consumption, e.g.~\cite{Nat_CUP_14}. However, such emerging low-energy technologies generally lack the reliability of CMOS. On the other hand, aggressive design principles, such as ``voltage-scaling''~(which is commonly used in modern circuits), reduce energy consumption~\cite{Pil_ACM_01}, but often at a reliability cost: when the supply voltage is reduced below the transistor's threshold voltage, component variability results in reduced control of component reliability. From an area viewpoint, as transistors become smaller and clock frequencies become higher, noise margin of semiconductor devices is reduced~\cite{Zhao_TR_07}. In fact, voltage variation, crosstalk, timing jitter, thermal noise caused from increased power density and quantum effects can all jeopardize reliability.\footnote{e.g.~in source-drain channels of small CMOS transistors, the number of electrons can be so few that laws of large numbers may not apply~\cite{Mir_Spec_12}, increasing the component variability.} Beyond CMOS, circuits for many emerging technologies, such as those built out of carbon-nanotubes~\cite{Shu_Nature_13}, suffer from reliability problems, such as wire misalignment and metallic carbon-nanotubes~\cite{Pat_TCAD_08}. Thus, for a host of factors, circuit reliability is becoming an increasingly important issue.

While most modern implementations use overwhelmingly reliable transistors, an appealing idea is to deliberately allow errors in computation, and design circuits and systems that ``embrace randomness and statistics, treating them as opportunities rather than problems''~\cite{Shan_DTC_08}. Inspired by the triumph of Shannon theory in dealing with noise in communication channels~\cite{Sha_Bel_48}, von Neumann initialized the study of noise in circuits~\cite{Neu_Aut_56}. He showed that even when circuit components are noisy, it is possible to bias the output towards the correct output using repetition-based schemes. Repeated computations, followed by majority voting, have been used in some applications to make circuits error-tolerant~\cite{Abda_JSSC_13,Han_VLSI_14,Han_DTC_05}. In fact, many functional-block-level or algorithmic error-tolerant designs have been tested on real systems~\cite{Kim_ACMMi_03,Cho_TCAD_12,Abda_JSSC_13}. However, in absence of a comprehensive understanding of the fundamental tradeoffs between redundancy and reliability, these designs have no guarantees on the gap from optimality.
\pagestyle{fancy}
\rfoot{}

Can use of sophisticated codes help? For storage, which can be viewed as computing the identity function, Low-Density Parity-Check (LDPC) codes~\cite{Gal_TIT_62} and Expander codes~\cite{Sip_FCS_96} have been used to correct errors~\cite{Tay_Bel_68,Kuz_PIT_73,Chi_ITW_07,Gun_IAC_08}. Closer in spirit of computation with noisy elements, in~\cite{Tay_Bel_68,Kuz_PIT_73,Chi_ITW_07}, the decoders (though not the encoders) for storage are assumed to be noisy as well. In \cite{huang2015acoco}, adaptive coding is used to correct memory faults for fault-tolerant approximate computing. Decoding with noisy elements has become an area of active research~\cite{Var_TIT_11,Yaz_TC_01,Huang_TC_14,ref_1,ref_2}. In~\cite{Var_TIT_11,Yaz_TC_01,Huang_TC_14,ref_1,ref_2}, noisy decoders performing message-passing algorithms are analyzed \textcolor{black}{using} the density evolution \textcolor{black}{technique}~\cite{Ric_TIT_01_2,Len_TIT_01}. The idea of using noisy decoders, and yet achieving reliable performance, is further extended to noisy discrete-time error-resilient linear systems in~\cite{Had_TIT_05}, where LDPC decoding is utilized to correct state errors after each state transition. Error control coding is also used in fault-tolerant parallel computing~\cite{Spi_FCS_96}, AND-type one-step computing with unreliable components~\cite{Rac_ISIT_08} and applied to error-resilient systems on chips (SoCs)~\cite{Bert_CAD_05}. Unfortunately, all of the above works on using sophisticated codes in noisy computing have one major intellectual and practical shortcoming: while they use noisy gates to perform some computations, they all assume absolute reliability in either the encoding part, or the decoding part, or both.

The perspective of allowing some noiseless gates in noisy computing problems has permeated in the investigation of fundamental limits as well (e.g.~\cite{Simon_ITW_11,Simon_ITW_10,Koch1}), where, assuming that encoding and/or decoding are free, the authors derive fundamental limits on required resources for computation with noisy elements with no assumptions on the computation strategy. Can one choose to ignore costs associated with encoding or decoding? While ignoring these costs is reasonable in long-range noisy communication problems~\cite{Gro_JSAC_11}, where the required transmit energy tends to dominate encoding/decoding computation energy, recent work shows this can yield unrealistically optimistic results in short-range communication~\cite{Gro_JSAC_11,Gro_ISIT_12,Gro_TIT_15,Blake1,Blake2} and noisy computing~\cite{Gro_ISIT_14}, especially in the context of energy. These works derive fundamental limits for simplistic implementation models that account for total energy consumption, including that of encoding and decoding, in communication~\cite{Gro_TIT_15,Gro_ISIT_12,Blake1,Blake2} and computing~\cite{Gro_ISIT_14}.

In this paper, we investigate the problem of reliable\footnote{Note that the notion of reliability here differs from that in Shannon theory. The goal here is to bound the error-probability by a small constant that depends on the gate-error probability, but does \textit{not} depend on the size of the computation.} computation of binary linear transformations using circuits built entirely out of unreliable components, \textit{including} the circuitry for introducing redundancy and correcting errors. In Section~\ref{main_results}, we study the problem of computing linear transformations using homogeneous noisy gates, all of which are drawn from the same faulty gate model. We consider both probabilistic error models (transient gate errors)~\cite{Kar_TDSC_04} and permanent-errors models ({defective} gates)~\cite{Huang_TC_14}. The problem formulation and reliability models are detailed in Section~\ref{System_Model}.

The key to our construction is the ``ENCODED'' technique (\textbf{En}coded \textbf{Co}mputation with \textbf{De}coders Embedde\textbf{D}), in which noisy decoders are embedded inside the noisy encoder to repeatedly suppress errors (Section~\ref{Encoder_Sec}). The entire computation process is partitioned into multiple stages by utilizing the properties of an encoded form of the linear transformation matrix (see Section~\ref{pipeline_str} for details). In each stage, errors are introduced due to gate failures, and then suppressed by embedded noisy decoders~\cite{Yaz_TC_01}, preventing them from accumulating. Intuition on why embedded decoders are useful is provided in Section~\ref{Intuition}.

In Section~\ref{main_results} and~\ref{Encoder_Sec}, we show that using ENCODED with LDPC decoders, an $L\times K$ binary linear transformation can be computed with $\mathcal{O}(L)$ operations per output bit, while the output bit error probability is maintained below a small constant that is independent of $L$ and $K$. In Section~\ref{expander_encoder}, we use expander LDPC codes to achieve worst-case error tolerance using these codes, while still using error-prone decoding circuitry. We show that ENCODED can tolerate defective gates errors as long as the fraction of defective gates is below a small constant. We also obtain a stronger result on the computational complexity when the block error probability, instead of bit error probability, is specified: by deriving a fundamental lower bound (when the linear transform has full row rank), we show that the computational complexity per bit matches the lower bound in the scaling of the target error probability, as the required block error probability approaches zero. Interestingly, in the derivation of this lower bound, we allow the circuit to use noiseless gates to perform decoding operations. In Section~\ref{simulation_sec}, we use simulations to show that using exactly the same types of noisy gates (even with the same fan-in), the achieved bit error ratio and the number of iterations of ENCODED are both smaller than those of repetition-based schemes. Since computing energy is closely related to the number of operations, this shows an energy advantage of our ENCODED technique as well.

In Section~\ref{vs}, we go a step further and systematically study the effect of tunable supply voltage (``dynamic'' voltage scaling) on the total energy consumption by modeling energy-reliability tradeoffs at gate-level. For dynamic scaling, the gates are no longer homogeneous. We introduce a two phase algorithm in which the first phase is similar to ENCODED with homogeneous gates, but in the second phase, the voltage (and hence gate-energy) is tuned appropriately, which leads to orders of magnitude energy savings when compared with ``static''~voltage scaling (where the supply voltage is kept constant through the entire computation process). For example, when the required output bit error probability is $p_\text{tar}$, for polynomial decay of gate error probability $\epsilon$ with gate energy $E$ (i.e., $\epsilon=\frac{1}{E^c}$), the energy consumption per output bit is $\mathcal{O} \left( \frac{N}{K}\max\left\{L,\left(\frac{1}{p_\text{tar}}\right)^{\frac{1}{c}}\right\} \right)$ with dynamic voltage scaling, while it is $\Theta(\frac{NL}{K}(\frac{1}{p_\text{tar}})^{\frac{1}{c}})$ for the static case (we note that energy for ENCODED with static voltage scaling is still smaller than ``uncoded'' with static voltage scaling, {which is $\Omega(L(\frac{L}{p_\text{tar}})^{\frac{1}{c}})$}). Finally, in Section~\ref{nless_dec}, for deriving a lower bound as well as to connect with much of the existing literature, we allow the circuit to use noiseless gates for decoding. We derive (asymptotically) matching upper and lower bounds on required number of gates to attain a target error-probability.

\subsection{Related Work}
In spirit, our scheme is similar to von Neumann's  repetition-based construction~\cite{Neu_Aut_56} where an error-correction stage follows each computation stage to keep errors suppressed. Subsequent works~\cite{Dob_PPI_77,Pip_FOC_85,Pip_TIT_91} focus on minimizing the number of redundant gates while making error probability below a small constant. The difference from our work here is that these works do not allow (noiseless) precomputation based on the the knowledge of the required function, which our scheme (ENCODED) explicitly relies on. Therefore, our results are applicable when the same function needs to be computed multiple times for (possibly) different inputs, and thus the one-time cost of a precomputation is worth paying for. Thus, we do not include the preprocessing costs of the linear transformation matrix in the computational complexity calculation, and we assume all preprocessing can be done offline in a noise-free fashion\footnote{This difference in problem formulation is also why some of our achievable results on computational complexity might appear to beat the lower bounds of~\cite{Dob_PPI_77,Pip_FOC_85,Pip_TIT_91}.}.

We note that the algorithm introduced by Hadjicostis in~\cite{Had_TAC_03}, which is applied to finite-state linear systems, is similar to ours in that he also uses a matrix encoding scheme. However, \cite{Had_TAC_03} assumes that encoding and decoding procedures are noiseless, which we do not assume. In~\cite{laurenciu2016error}, Cucu Laurenciu, et al. designed a fault-tolerant computing scheme that embeds the encoding of an error control code into the logical functionality of the circuit. Decoding units are allowed to be noisy as well. However, the computation size is small, and theoretical guarantees are not provided. In~\cite[Theorem 4.4]{Pip_FOC_85}, Pippenger designed an algorithm to compute a binary linear transformation with noisy gates. The algorithm requires gates with fan-in $2^{23}$ and a gate-error probability of $35\cdot2^{-50}$. While the fan-in values are unrealistically high, the gate-error probability is also low enough that most practical computations can be executed correctly using ``uncoded'' strategies, possibly the reason why it has not received significant attention within circuits community. At a technical level, unlike the multi-stage computing scheme used in our work, Pippenger uses exhaustive enumeration of all linear combinations with length $\frac{1}{3}\log L$ for computing, where $L$ is the number of rows in the binary linear transformation. Lastly, we note here that Pippenger's scheme only works for the case when the number of columns $K$ in the binary linear transformation matrix and the code length $N$ of the utilized LDPC code satisfies $K=\Theta(N^3)$, while our algorithm works in a more practical scenario where $K=\Theta(N)$.

This work builds on our earlier work~\cite{Yang_All_14}, in which the problem of reliable communication with a noisy encoder is studied. In~\cite{Yang_All_14}, noisy decoders are embedded in the noisy encoder to repeatedly suppress errors. The noisy encoding problem is a special case of computing noisy linear transformation when the linear transformation matrix is the generator matrix of an error-correcting code. In~\cite{dupraz2016practical}, an augmented encoding approach was introduced to protect the encoder from hardware faults using extra parity bits. In~\cite{Hac_ISIT_13}, which considers a similar problem, errors are modelled as erasures on the encoding Tanner graph. 

Outside information theory, fault-tolerant linear transformations and related matrix operations have been studied extensively in algorithm-based fault tolerance~\cite{Huang_TC_84,Wang_TC_94,Ding_ISPA_11,Anf_TC_88,Chen_HPCN_09}. The main difference in our model is that faults happen at the circuit-level, e.g., in AND gates and XOR gates. Instead, in \cite{Huang_TC_84,Wang_TC_94,Ding_ISPA_11,Anf_TC_88,Chen_HPCN_09}, each functional block, e.g. a vector inner product, fails with a constant probability. If errors are considered at gate level, the error probability of a vector inner product will approach $1/2$~\cite{Had_TIT_05} as vector size grows, and one may not be able to use these schemes. Fault-detection algorithms on circuits and systems with unreliable computation units have also been studied extensively~\cite{Abda_JSSC_13,Radha_Asil_13,Bowman_ADAC_07,Choi_TSP_07,Had_CDC_01}. However, these algorithms assume that the detection units are reliable, which we do not assume. Moreover, using error control coding, we can combine the error detection and correction in the same processing unit.

\section{System Model and Problem Formulation}\label{System_Model}

\subsection{Circuit Model}\label{Model}

We first introduce unreliable gate models and circuit models that we will use in this paper. We consider two types of unreliable gates: probabilistic gates and defective gates.
\begin{definition}\label{ng1}(Gate Model I ($D,\epsilon$))
The gates in this model are probabilistically unreliable in that they compute a deterministic boolean function $g$ with additional noise $z_g$
\begin{equation}\label{noisy_gate}
  y=g(u_1,u_2,...,u_{d_g})\oplus z_g,
\end{equation}
where $d_g$ denotes the number of inputs and is bounded above by a constant $D>3$, $\oplus$ denotes the XOR-operation and $z_g$ is a boolean random variable which takes the value $1$ with probability \textcolor{black}{$\epsilon$ which is assumed to be smaller than $\frac{1}{2}$}. The event $z_g=1$ means the gate $g$ fails and flips the correct output. Furthermore, in this model, all gates fail independently of each other and the failure events during multiple uses of a single gate are also independent of each other. We allow different kinds of gates (e.g. XOR, majority, etc.) to fail with different probabilities. However, different gates of the same kind are assumed to fail with the same error probability\footnote{A weaker assumption is that different gates fail independently, but with different probabilities all smaller than $\epsilon$, which is called $\epsilon$-approximate~\cite{Pip_FOC_85}. The ENCODED technique also works for this model. Also note that our model is limited in the sense that the error probability $\epsilon$ does not depend on the gate input. This may not be realistic because the gate error probability can also depend on the input and even the previous gate outputs, which is also noted in \cite{nahlus2014energy}. However, the assumption that $\epsilon$ does not depend on the gate input can be relaxed by assuming that $\epsilon$ is the maximum error probability over all different input instances.}.
\end{definition}
This model is similar to the one studied in~\cite{Pip_TIT_91} and the failure event is often referred to as a transient fault. Our next model abstracts defective gates that suffer from permanent failures.
\begin{definition}\label{ng2}(Gate Model II ($D,n,\alpha$))
In a set of $n$ gates, each gate is either perfect or defective. A perfect gate always yields a correct output function
\begin{equation}\label{noisy_gate_2}
  y=g(u_1,u_2,...,u_{d_g}),
\end{equation}
where $d_g$ denotes the number of inputs and is bounded above by a constant $D> 3$. A defective gate outputs a deterministic boolean function of the correct output $\tilde{y}=f(g(\cdot))$. This function can be either $f(x)=\bar{x}$ (NOT function), $f(x)=0$ or $f(x)=1$. The fraction of defective gates in the set of $n$ gates is denoted by $\alpha$. We assume that measurement techniques cannot be used to distinguish between defective gates and perfect gates\footnote{Defective gates may result from component aging after being sold, and examining each gate in circuitry is in practice extremely hard. The storage units are easier to examine~\cite{Goor_DTC_93}, but replacing faulty memory cells requires replacing an entire row or column in the memory cell array~\cite{Kim_TC_98}.}.
\end{definition}
\begin{figure}
  \centering
  \includegraphics[scale=0.27]{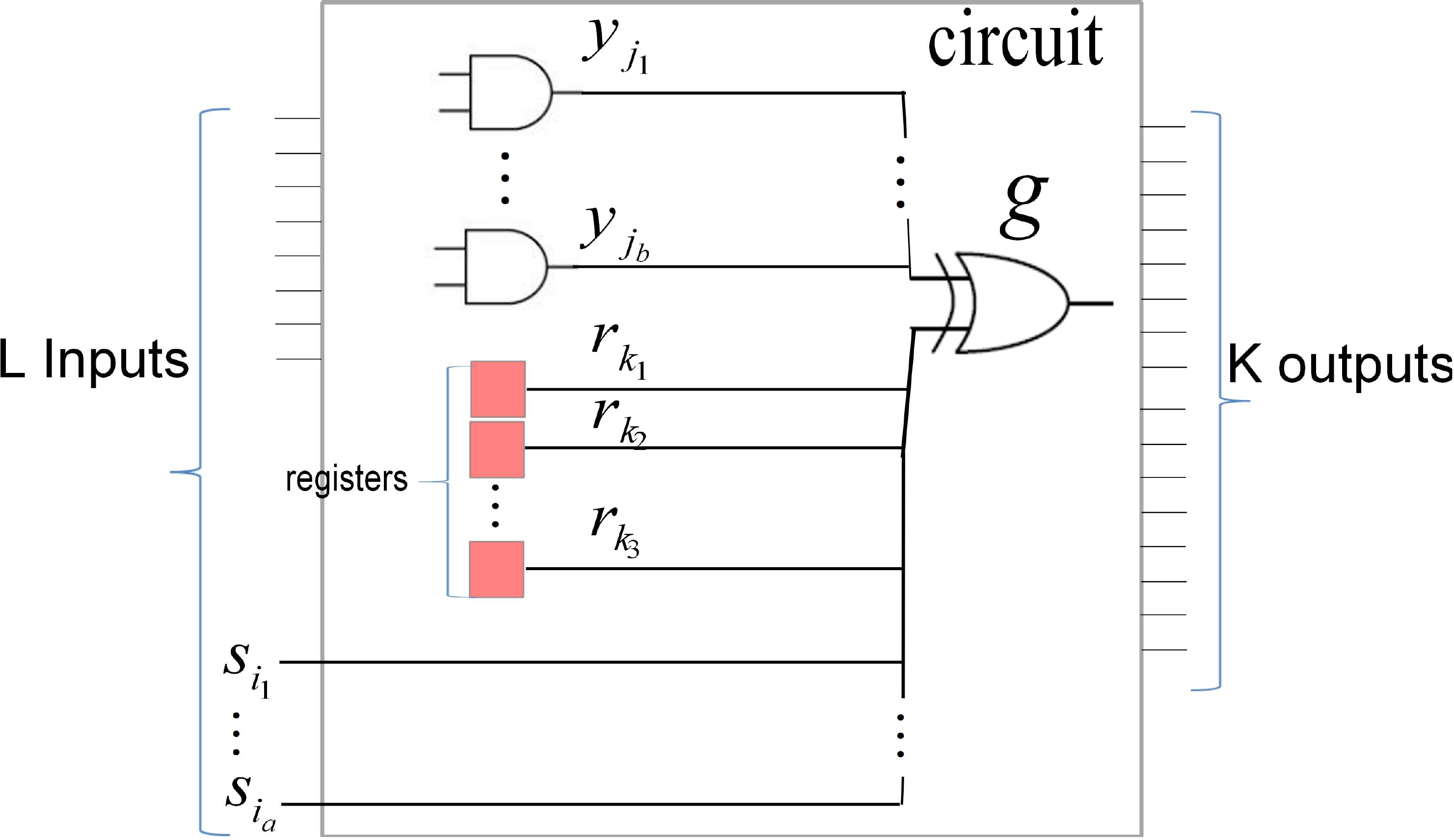}\\
  \caption{This figure shows an unreliable gate $g$ (Gate Model I or II) in a noisy circuit defined in Definition~\ref{NC}. A noisy circuit is constituted by many unreliable gates, which form the set $\mathcal{G}$.}\label{noisy_gate_fig}
\end{figure}
From the definition, a defective gate may repeatedly output the value 1 no matter what the input is, which is often referred to as a \textquotedblleft stuck-at error\textquotedblright. This might happen, for example, when a circuit wire gets shorted.

\begin{remark}
In fact, we can generalize all the results in this paper on the permanent error model (Gate Model II) to arbitrary but bounded error model, in which errors occur in a worst-case fashion but no more than a fixed fraction. The latter error model has been used in worst-case analyses in coding theory and arbitrarily varying channels~\cite{AhlswedeAVCs}. However, for consistency with the existing literature on error-prone decoding with permanent errors~\cite{Huang_TC_14,Chi_ITW_07}, we limit our exposition to these errors.
\end{remark}

The computation in a noisy circuit is assumed to proceed in discrete steps for which it is helpful to have circuits that have storage components.
\begin{definition}(Register) A register is an error-free storage unit that outputs the stored binary value. A register has one input. At the end of a time slot, the stored value in a register is changed to its input value if this register is chosen to be \emph{updated}.
\end{definition}

\begin{remark}
We assume that registers are noise-free only for clarity of exposition. It is relatively straightforward to incorporate in our analysis the case when registers fail probabilistically. A small increase in error probability of gates can absorb the error probability of registers. A similar change allows us to incorporate permanent errors in the registers as well.
\end{remark}

\begin{definition}\label{NC}(Noisy Circuit Model ($\mathcal{G},\mathcal{R}$))
A noisy circuit is a network of binary inputs $\textbf{s}=(s_1, s_2,...s_L)$, unreliable gates $\mathcal{G}=\{g_1,g_2,...,g_\mathscr{S}\}$ and registers $\mathcal{R}=\{r_1,r_2,...,r_\mathscr{T}\}$. Each unreliable gate $g\in \mathcal{G}$ can have inputs that are elements of \textbf{s}, or outputs of other gates, or from outputs of registers. That is, the inputs to an unreliable gate $g$ are $s_{i_1},\dots,s_{i_{a}},y_{j_1},\dots,y_{j_b},r_{k_1},\dots,r_{k_c}$, where $a+b+c=d_g$, the total number of inputs to this gate. Each register $r\in \mathcal{R}$ can have its single input from the circuit inputs $\textbf{s}$, outputs of unreliable gates or outputs of other registers. For simplicity, wires in a noisy circuit are assumed to be noiseless.
\end{definition}
\begin{definition}\label{NC_model}(Noisy Computation Model ($L,K,\mathscr{N}_\mathrm{comp}$))
A computing scheme $\mathcal{F}$ employs a noisy circuit to compute a set of binary outputs $\textbf{r}=(r_1, r_2,...r_K)$ according to a set of binary inputs $\textbf{s}=(s_1, s_2,...s_L)$ in multiple stages. At each stage, a subset of all unreliable gates $\mathcal{G}$ are \textit{activated} to perform a computation and a subset of all registers $\mathcal{R}$ are updated. At the completion of the final stage, the computation outputs are stored in a subset of $\mathcal{R}$. The number of \textit{activated} unreliable gates in the $t$-th stage is denoted by $\mathscr{N}_\mathrm{comp}^t$. Denote by $\mathscr{N}_\mathrm{comp}$ the total number of unreliable operations (one unreliable operation means one activation of a single unreliable gate) executed in the noisy computation scheme, which is obtained by
\begin{equation}
  \mathscr{N}_\mathrm{comp}=\mathop\sum_{t=1}^{T} \mathscr{N}_\mathrm{comp}^t,
\end{equation}
where $T$ is the total number of stages, which is predetermined.
\end{definition}
The noisy computation model is the same as a sequential circuit with a clock. The number of stages $T$ is the number of time slots that we use to compute the linear transform. In each time slot $t$, the circuit computes an intermediate function $f_t(x)$ using the computation units on the circuit, and the result $f_t(x)$ is stored in the registers for the computation in the next time slot $t+1$. The overall number of stages $T$ is predetermined (fixed before the computation starts).
\begin{remark}\label{feasible}
A computing scheme should be \emph{feasible}, that is, in each time slot, all the gates that provide inputs to an activated gate, or a register to be updated, should be activated.
\end{remark}
In this paper, we will only consider noisy circuits that are either composed entirely of probabilistic gates defined in Gate Model I or entirely of unreliable gates in Gate Model II. Note that if we consider probabilistic gates, the noisy circuit can be transformed into an equivalent circuit that does not have registers. This is because, since the probabilistic gate failures are (assumed to be) independent over operations, we can replicate each gate in the original circuit multiple times such that each gate in the equivalent circuit is only activated once. This circuit transformation is used in the proof of Theorem~\ref{low_bound_thm}.

\subsection{Problem Statement}\label{p_s}
The problem considered in this paper is that of computing a binary linear transformation $\mathbf{r}=\mathbf{s}\cdot\mathbf{A}$ using a noisy circuit, where the input vector $\textbf{s}=(s_1, s_2,...s_L)$, the output vector $\textbf{r}=(r_1, r_2,...r_K)$ and the $L$-by-$K$ (linear transformation) matrix $\mathbf{A}$ are all composed of binary entries. We consider the problem of designing a feasible (see Remark~\ref{feasible}) computing scheme $\mathcal{F}$ for computing $\mathbf{r}=\mathbf{s}\cdot\mathbf{A}$ with respect to Definition~\ref{NC_model}. Suppose the correct output is $\textbf{r}$. Denote by $\hat{\textbf{r}}=(\hat r_1, \hat r_2,...\hat r_K)$ the (random) output vector of the designed computing scheme $\mathcal{F}$. Note that the number of operations $\mathscr{N}_\mathrm{comp}$ has been defined in Definition~\ref{NC_model}. The computational complexity per bit $\mathscr{N}_\mathrm{per\text{-}bit}$ is defined as the total number of operations per output bit in the computing scheme. That is
\begin{equation}\label{communication_complexity}
  \mathscr{N}_\mathrm{per\text{-}bit}=\mathscr{N}_\mathrm{comp}/K.
\end{equation}

For gates from Gate Model I (Definition~\ref{ng1}), we are interested in the usual metrics of bit-error probability $P_e^{\text{bit}}=\frac{1}{K}\mathop\sum\limits_{k=1}^K\Pr ({{\hat{r}}_{k}}\ne {{r}_{k}})$ and block-error probability $P_e^{\text{blk}}=\Pr (\hat{\mathbf{r}}\ne \mathbf{r})$, averaged over uniformly distributed inputs $\mathbf{s}$ and noise realizations. In addition, in the spirit of ``excess distortion'' formulation in information theory~\cite{MartonExcessDistortion}, we are also interested in keeping the fraction of (output) errors bounded with high probability. This could be of interest, e.g., in approximate computing problems. To that end, we define another metric, $\delta_e^{\text{frac}}$, the ``bit-error fraction,'' which is simply the Hamming distortion between the computed output and the correct output (per output bit). That is, $\delta_e^{\text{frac}}=\max_{\mathbf{s}}\frac{1}{K}\mathop\sum\limits_{k=1}^K \mathbbm{1}_{\{{{\hat{r}}_{k}}\ne {{r}_{k}}\}}$, where $\mathbbm{1}_{\{\cdot\}}$ is the indicator function. The bit-error fraction depends on the noise, which is random in Gate Model I. Thus, we will constrain it probabilistically (see \textit{Problem~\ref{pro_2}}\footnote{\textcolor{black}{We will show that the bit-error fraction is constrained probabilistically (see \textit{Problem~\ref{pro_2}}) for all input vector $\mathbf{s}$.}}). The resulting problems are stated as follows:
\begin{problem}\label{pro_1}
\begin{equation}\label{op_problem_1}
\min_{\mathcal{F}}{\;\;\;}\mathscr{N}_\mathrm{per\text{-}bit},{\;}\text{s.t.}{\;}P_e<p_{\text{tar}},
\end{equation}
where $p_{\text{tar}}>0$ is the target bit error probability, and $P_e$ could be $P_e^{\text{bit}}$ or $P_e^{\text{blk}}$.
\end{problem}
\begin{problem}\label{pro_2}
\begin{equation}\label{op_problem_2}
\min_{\mathcal{F}}{\;\;\;}\mathscr{N}_\mathrm{per\text{-}bit},{\;}\text{s.t.}{\;}\Pr(\delta_e^{\text{frac}}<p_{\text{tar}})>1-\delta,
\end{equation}
where $p_{\text{tar}}>0$ is the target block error fraction and $\delta$ is a small constant.
\end{problem}

When we consider the Gate Model II (Definition~\ref{ng2}), since all gates are deterministic functions, we are interested in the worst-case fraction of errors $\delta_e^{\text{frac}}$. Thus, the optimization problem can be stated as follows:
\begin{problem}\label{pro_3}
\begin{equation}\label{op_problem_3}
\min_{\mathcal{F}}{\;\;\;}\mathscr{N}_\mathrm{per\text{-}bit},{\;}\text{s.t.}{\;} \max_{\mathbf{s},\mathcal{S}_\text{def}^i\;\text{s.t.}|\mathcal{S}_\text{def}^i|<\alpha_i n_{\mathcal{F},i},\forall i\in W }\delta_e^{\text{frac}}<p_{\text{tar}},
\end{equation}
where $\mathbf{s}$ is the input vector, $\mathcal{S}_\text{def}^i$ is the set of defective gates of type $i$, $W$ is the set of indices of different types of noisy gates (such as AND gates, XOR gates and majority gates), $\alpha_i$ is the error fraction of the gates of type $i$, $n_{\mathcal{F},i}$ is the total number of gates of type $i$ in the implementation of $\mathcal{F}$, and $p_{\text{tar}}>0$ is the target fraction of errors. {Note that $n_{\mathcal{F},i}$ is chosen by the designer as a part of choosing $\mathcal{F}$, while the error-fraction $\alpha_i$ is assumed to be known to the designer in advance.}
\end{problem}

Throughout this paper, we rely on the family of Bachmann-Landau notation~\cite{knuth} (i.e.~``big-O'' notation). For any two functions $f(x)$ and $g(x)$ defined on some subset of $\mathbb{R}$, asymptotically (as $x \rightarrow \infty$), $f(x) = \mathcal{O}(g(x))$
	if $|f(x)| \leq c_2 |g(x)|$; $f(x) = \Omega(g(x))$ if
	$|f(x)| \geq c_1 |g(x)|$; and $f(x) = \Theta(g(x))$
	if $c_3 |g(x)| \leq |f(x)| \leq c_4 |g(x)|$ for some
	positive real-valued constants $c_1, c_2, c_3, c_4$.

\subsection{Technical Preliminaries}\label{code_section}
First we state a lemma that we will use frequently.
\begin{lemma}[\cite{Gal_TIT_62}, pp. 41, Lemma 4.1]\label{idp_odd}
Suppose $X_i,i=1,\dots,L,$ are independent Bernoulli random variables and $\Pr(X_i=1)=p_i,\forall i$. Then
\begin{equation}
  \Pr(\sum_{i=1}^L X_i=1)=\frac{1}{2}\left[1-\mathop\prod\limits_{i=1}^L (1-2p_i)\right],
\end{equation}
where the summation is over $\mathbb{F}_2$, i.e., $1+1=0$.
\end{lemma}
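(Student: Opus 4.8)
The plan is to prove the identity by induction on $L$, the number of Bernoulli variables. The base case $L=1$ is immediate: $\Pr(X_1 = 1) = p_1 = \frac{1}{2}[1 - (1-2p_1)]$, so the formula holds. For the inductive step, I would write $S_L = \sum_{i=1}^L X_i$ over $\mathbb{F}_2$ and condition on the value of $X_L$, using independence of $X_L$ from $S_{L-1} = \sum_{i=1}^{L-1} X_i$. Since addition is mod $2$, the event $\{S_L = 1\}$ splits as $\{X_L = 0, S_{L-1} = 1\} \cup \{X_L = 1, S_{L-1} = 0\}$, which are disjoint, so
\begin{equation}
\Pr(S_L = 1) = (1-p_L)\Pr(S_{L-1} = 1) + p_L\Pr(S_{L-1} = 0).
\end{equation}

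Next I would substitute the induction hypothesis. Writing $q := \Pr(S_{L-1} = 1) = \frac{1}{2}[1 - \prod_{i=1}^{L-1}(1-2p_i)]$ and hence $\Pr(S_{L-1} = 0) = 1 - q = \frac{1}{2}[1 + \prod_{i=1}^{L-1}(1-2p_i)]$, the recursion becomes
\begin{equation}
\Pr(S_L = 1) = (1-p_L)q + p_L(1-q) = q + p_L(1 - 2q).
\end{equation}
Substituting $1 - 2q = \prod_{i=1}^{L-1}(1-2p_i)$ gives $\Pr(S_L = 1) = \frac{1}{2}[1 - \prod_{i=1}^{L-1}(1-2p_i)] + p_L \prod_{i=1}^{L-1}(1-2p_i) = \frac{1}{2}[1 - (1-2p_L)\prod_{i=1}^{L-1}(1-2p_i)] = \frac{1}{2}[1 - \prod_{i=1}^{L}(1-2p_i)]$, which is exactly the claimed formula for $L$. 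This closes the induction.

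There is no real obstacle here; the only thing to be careful about is the bookkeeping in the conditioning step — making sure that the decomposition of $\{S_L = 1\}$ into the two disjoint events is correct under mod-$2$ addition, and that independence is invoked properly to factor $\Pr(X_L = \cdot, S_{L-1} = \cdot)$ as a product. An alternative, perhaps slicker, route is a generating-function / Fourier argument: note $\mathbb{E}[(-1)^{S_L}] = \prod_{i=1}^L \mathbb{E}[(-1)^{X_i}] = \prod_{i=1}^L (1-2p_i)$ by independence, and then use $\Pr(S_L = 1) = \frac{1}{2}(1 - \mathbb{E}[(-1)^{S_L}])$ since $(-1)^{S_L}$ equals $+1$ when $S_L = 0$ and $-1$ when $S_L = 1$. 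Either argument is short; I would likely present the induction for elementary self-containedness, but the Fourier version makes the structure of the identity most transparent.
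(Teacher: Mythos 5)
Your proof is correct; the paper itself gives no proof of this lemma, simply citing Gallager's 1962 monograph, and your induction is essentially Gallager's original argument (the conditioning on $X_L$ and the resulting recursion $q \mapsto q + p_L(1-2q)$ are exactly the standard steps). The Fourier identity $\Pr(S_L=1)=\tfrac{1}{2}\bigl(1-\mathbb{E}[(-1)^{S_L}]\bigr)$ you mention as an alternative is also valid and arguably the cleanest way to see why the product of the $(1-2p_i)$ terms appears.
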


We will use error control coding to facilitate the computation of the binary linear transformation. Here, we introduce some notations related to  the codes that we will use. We will use a regular LDPC code~\cite{Gal_TIT_62,Ric_TIT_01_2} with code length $N$, dimension $K$ and a $K\times N$ generator matrix $\mathbf{G}$ written as
\begin{equation}\label{encoding_matrix}
\mathbf{G}=\left[\begin{matrix}
\leftarrow&\mathbf{g}_1&\rightarrow\\
\leftarrow&\mathbf{g}_2&\rightarrow\\
\leftarrow&...&\rightarrow\\
\leftarrow&\mathbf{g}_K&\rightarrow
\end{matrix}\right].
\end{equation}
where each row $\mathbf{g}_k$ is a length-$N$ codeword. In the LDPC Tanner graph, denote the degree of a variable node $v$ by $d_v$ and the degree of a parity check node $c$ by $d_c$. The embedded decoders use either the Gallager-B decoding algorithm which is a $1$-bit hard-decision based decoding algorithm proposed in~\cite{Gal_TIT_62} and is included for completeness in Appendix~\ref{GB_details}, or the parallel bit flipping (PBF) algorithm, which is also a hard-decision algorithm proposed in \cite{Sip_FCS_96}. In particular, we use the modified parallel bit flipping algorithm defined in \cite{burshtein2008error}.
\begin{definition} The PBF algorithm is defined as follows
\begin{itemize}
  \item Flip each variable node that is connected to more than $\frac{d_v}{2}$ unsatisfied parity check nodes;
  \item Set the value of each variable node connected to exactly $d_v/2$ unsatisfied parity-check nodes to 0(or 1) with probability $1/2$;
  \item Update all parity check nodes;
  \item Repeat the first three steps for $c_e\log N$ times, where $c_e$ is a constant.
\end{itemize}
\end{definition}
The PBF algorithm can be used to correct a constant fraction of errors after $\Theta(\log N)$ decoding iterations when the computing components in the decoder are noiseless and the error fraction is small enough. However, since we will consider noisy decoders, we will build on a more refined result, which concerns a single decoding iteration of the algorithm (see the following requirement (A.3) and Lemma~\ref{expander_fraction}).

In our main results, we may require the utilized LDPC code to satisfy some of (not all) the following conditions.
\begin{itemize}
  \item \textbf{(A.1) Degree Bound}:  \textcolor{black}{The variable node degree $d_v$ and the parity check node degree $d_c$ are both less than or equal to $D$}, so that each majority or XOR-operation \textcolor{black}{(in the Gallager-B decoding algorithm)} can be carried out by a single unreliable gate. Moreover, we assume that the variable node degree $d_v\ge 4, \forall v$.
  \item \textbf{(A.2) Large Girth}: The girth $l_g=\Theta(\log N)$. An LDPC code with the following girth lower bound is obtained in~\cite{Gal_TIT_62,Len_TIT_01}:
    \begin{equation}\label{girth_lb}
         l_g>\frac{2\log N}{\log((d_v-1)(d_c-1))}-2c_g,
    \end{equation}
    where $c_g=1-\frac{\log\frac{d_cd_v-d_c-d_v}{2d_c}}{\log((d_v-1)(d_c-1))}$ is a constant that does not depend on $N$.
  \item \textbf{(A.3) Worst-case Error Correcting}:One iteration of the PBF algorithm using a noiseless decoder can bring down the number of errors in the codeword from $\alpha_0N$ to $(1-\theta)\alpha_0N$ for two constants $\alpha_0,\theta\in(0,1)$, for any possible patterns of $\alpha_0N$ errors.
\end{itemize}
The requirement in (A.2) can be met by using codes introduced in~\cite{Len_TIT_01} or using the PEG construction proposed in \cite{ref_3}. The requirement in (A.3) can be met either by using $(d_v,d_c)$-regular random code ensembles and using the analysis in \cite{burshtein2008error}, or by using regular Expanders \cite{Sip_FCS_96}. \textcolor{black}{In particular, in Appendix~\ref{worst_case}, we show that almost all codes in the $(9,18)$-regular code ensemble of sufficiently large length $N$ can reduce the number of errors by $\theta=15\%$ after one iteration of the PBF algorithm, if the fraction of errors is upper-bounded by $\alpha_0\le 5.1\cdot 10^{-4}$. We also show that at least $4.86\%$ of the $(9,18)$-regular codes of length $N=50,000$ can reduce the number of errors by $\theta=15\%$ after one iteration of the PBF algorithm, if the number of errors satisfies $\alpha_0 N\le 20$, which is equivalent to $\alpha_0\le 0.0004$.}


\section{ENCODED: \textbf{En}coded \textbf{Co}mputation with \textbf{De}coders Embedde\textbf{D}}\label{main_results}
In this section, we present the main scheme that we use for noisy computation of linear transformations. We call this scheme ``ENCODED'' (\textbf{En}coded \textbf{Co}mputation with \textbf{De}coders Embedde\textbf{D}). We aim to provide an overview of ENCODED in this section. Then, this scheme will be modified to a tree-structured scheme, ENCODED-T, in Section~\ref{Construction} and further modified to a bit-flipping-based technique ENCODED-F in Section~\ref{expander_encoder}.

\subsection{ENCODED: A Multi-stage Error-Resilient Computation Scheme}\label{pipeline_str}
Instead of computing a binary linear transformation $\mathbf{r}=\mathbf{s}\cdot\mathbf{A}$ without using any redundancy, we will compute
\begin{equation}\label{coded_computing}
  \mathbf{x}=\mathbf{r}\cdot \mathbf{G}=\mathbf{s}\cdot\mathbf{A}\mathbf{G},
\end{equation}
where $\mathbf{G}=[\mathbf{I},\mathbf{P}]=[\mathbf{g}_1; \mathbf{g}_2;...; \mathbf{g}_K]$ is the $K\times N$ generator matrix of the chosen systematic LDPC code. The matrix product $\mathbf{A}\mathbf{G}$ is assumed to be computed offline in a noise-free fashion.
\begin{figure}
  \centering
  \includegraphics[scale=0.26]{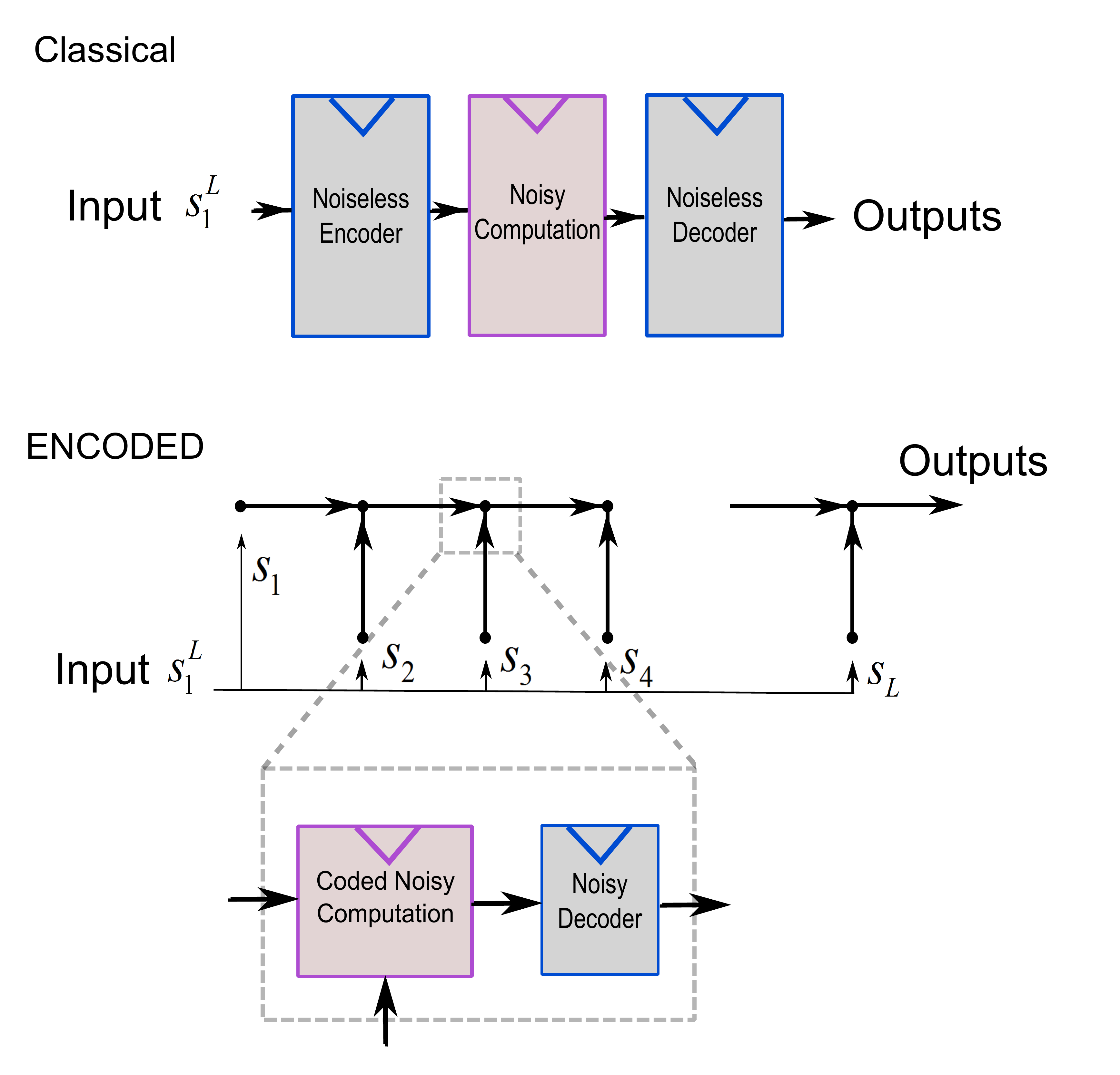}\\
  \caption{An illustration of the conceptual difference between classical noisy computing schemes and the ENCODED technique.}\label{coded_computing_fig}
\end{figure}
An important observation is that since all rows in the matrix product $\mathbf{A}\mathbf{G}$ are linear combinations of the rows in the generator matrix $\mathbf{G}$, the rows of $\mathbf{A}\mathbf{G}$ are codewords as well. That is,
\begin{equation}\label{quiv_encoding_matrix}
\tilde{\mathbf{G}}=\mathbf{A}\mathbf{G}=\left[\begin{matrix}
\leftarrow&\tilde{\mathbf{g}}_1&\rightarrow\\
\leftarrow&\tilde{\mathbf{g}}_2&\rightarrow\\
\leftarrow&...&\rightarrow\\
\leftarrow&\tilde{\mathbf{g}}_L&\rightarrow
\end{matrix}\right]
\end{equation}
where each row $\tilde{\mathbf{g}}_l,l=1,\dots,L$ is a codeword. Then, if the computation were noiseless, the correct computation result $\mathbf{r}=\mathbf{s}\cdot\mathbf{A}$ could be obtained from the combined result
\begin{equation}\label{combine_result_1}
\begin{split}
  \mathbf{x}=[\mathbf{r},\mathbf{r}\cdot\mathbf{P}]=\mathbf{r}\cdot \mathbf{G}.
\end{split}
\end{equation}
Since $\mathbf{r}\cdot \mathbf{G}=\mathbf{s}\cdot\mathbf{A}\mathbf{G}=\mathbf{s}\cdot\tilde{\mathbf{G}}$,
\begin{equation}\label{combine_result}
\begin{split}
  \mathbf{x}=\mathbf{s}\cdot\tilde{\mathbf{G}}=\mathop\sum\limits_{l=1}^L s_l \tilde{\mathbf{g}}_l.
\end{split}
\end{equation}
In the following sections, we will explain how error control coding can be used to reliably compute $\mathbf{x}=\mathop\sum\limits_{l=1}^L s_l \tilde{\mathbf{g}}_l$. The basic idea is as follows: we break the computation into $L$ stages, so that the noiseless intermediate result after the $l$-th stage would be $\mathbf{x}^{(l)}=\mathop\sum\limits_{j=1}^l s_j \tilde{\mathbf{g}}_j$. When gates are noise-free, $\mathbf{x}^{(l)}$ is a codeword. When gates are noisy, during the $l$-th stage, we first compute $\mathbf{x}^{(l-1)}+s_l \tilde{\mathbf{g}}_l$ using noisy AND gates (binary multiplication) and noisy XOR gates (binary addition) and then correct errors (with high probability) using an LDPC decoder or an expander decoder to get $\mathbf{x}^{(l)}$. During the entire computing process, AND gates and XOR gates introduce errors, while the noisy decoders suppress errors. Finally, it will be proved in Theorem~\ref{Main_thm} and Theorem~\ref{expander_probabilistic} that error probability is maintained below a small constant. We summarize the ENCODED technique in Algorithm~\ref{alg_0}.
\textcolor{black}{
\begin{algorithm}\caption{ENCODED (\textbf{En}coded \textbf{Co}mputation with \textbf{De}coders Embedde\textbf{D})}\label{alg_0}
\textbf{INPUT}: A binary vector $\textbf{s}=(s_1, s_2,...s_L)$.\\
\textbf{OUTPUT}: A binary vector $\textbf{x}=(x_1, x_2,...x_N)$.\\
\textbf{INITIALIZE} \\
Compute $\tilde{\mathbf{G}}=\mathbf{A}\mathbf{G}=[\tilde{\mathbf{g}}_1; \tilde{\mathbf{g}}_2;...; \tilde{\mathbf{g}}_L]$. Store an all-zero vector $\mathbf{x}^{(0)}$ in an $N$-bit register.\\
\textbf{FOR} $l$ from $1$ to $L$
\begin{itemize}
  \item Use $N$ unreliable AND gates to multiply $s_l$ with $\tilde{\mathbf{g}}_l$, the $l$-th row in $\tilde{\mathbf{G}}$, add this result to $\mathbf{x}^{(l-1)}$ using $N$ unreliable XOR gates, and store the result in the $N$-bit register.\footnote.
  \item Use an unreliable decoder to correct errors and get $\mathbf{x}^{(l)}$.
\end{itemize}
\textbf{END}\\
Output $\mathbf{x}^{(L)}$ as the output $\mathbf{x}$.
\end{algorithm}}
Compared to many classical results~\cite{Simon_ITW_11,Spi_FCS_96,Rac_ISIT_08,Huang_TC_84} on applying error control coding to noisy computing, instead of computing after encoding, the proposed scheme combines encoding and computing into a joint module (see Fig.~\ref{coded_computing_fig}). Because there is no separation between computing and encoding, in some sense, we \emph{encode the computation}, rather than encoding the message. We briefly discuss the intuition underlying the ENCODED technique in Section~\ref{Intuition}. We note that, we change the \textbf{FOR}-loop in Alg.~\ref{alg_0} to a tree-structure (ENCODED-T) in Section~\ref{Construction} in order to reduce error accumulation as explained in Remark~\ref{tree_depth} in Section~\ref{Construction}.
\footnotetext[7]{These operations are assumed to be performed noiselessly, as discussed earlier.}

\subsection{Intuition underlying the Embedded Decoders}\label{Intuition}
The basic idea of our proposed computing scheme is to split the computation into a multistage computation of $\mathbf{x}=\mathop\sum\limits_{l=1}^L s_l \tilde{\mathbf{g}}_l$, and use embedded decoders inside the noisy circuit to repeatedly suppress errors as the computation proceeds. Since the noisy circuit can only be constructed using unreliable gates, the embedded decoders are also constituted by unreliable gates.

Why is such a multistage computation helpful? For instance, if ``uncoded''~matrix multiplication $\mathbf{r}=\mathbf{s}\mathbf{A}$ is carried out, each output bit is computed using an inner product, and $\mathcal{O}(L)$ unreliable AND and XOR-operations are required. Without repeated suppression, each output bit is erroneous with probability $\frac{1}{2}$ as $L\to \infty$. Intermediate and repeated error suppression alleviates this  \textit{error accumulation} problem. Can one use a feedback structure, as is used in Turbo codes~\cite{Ber_ICC_93} and ARA codes~\cite{Abb_TCom_07} (these codes often have a feedback structure~\cite{Cost_PHE_04} for encoding), to keep errors suppressed, instead of the LDPC codes used here? A feedback structure can be detrimental since errors persist in the feedback loop and propagate to the future, which can make the final bit error probability large. This observation motivated us to use LDPC codes.

Also note that due to the \textquoteleft last-gate\textquoteright~effect in noisy circuits, error probability cannot approach zero. Thus, our goal is not to eliminate errors, but to suppress them so that the error probability (or the error fraction) is kept bounded below a target value that depends on the error probability of the last gate.

\section{Main Results on Computing a Binary Linear Transformation with Embedded Decoders}\label{Encoder_Sec}
\textcolor{black}{In this section, we show that a linear transformation can be computed \textquoteleft reliably\textquoteright~(in accordance with the goals of Problems \ref{pro_1}-\ref{pro_3} in Section~\ref{p_s}) even in presence of noise, using error control coding. We provide three results, one each for formulations in Problem~\ref{pro_1} to Problem~\ref{pro_3}. These results are obtained using two variants of ENCODED, which we call ENCODED-T and ENCODED-F. ENCODED-T uses Gallager-B decoding algorithm for error suppression, while ENCODED-F uses the PBF algorithm. The implementation details of ENCODED-T and ENCODED-F are respectively provided in Section~\ref{Construction} and Section~\ref{expander_encoder}. We also compare resource requirements of this coding-based computation with repetition-based computation using simulations (in Section \ref{simulation_sec}.)}

\subsection{ENCODED-T: A Scheme for Reliable Computation of Linear Transformations under Gate Model I}\label{Construction}
\begin{figure*}
  \centering
  \includegraphics[scale=0.5]{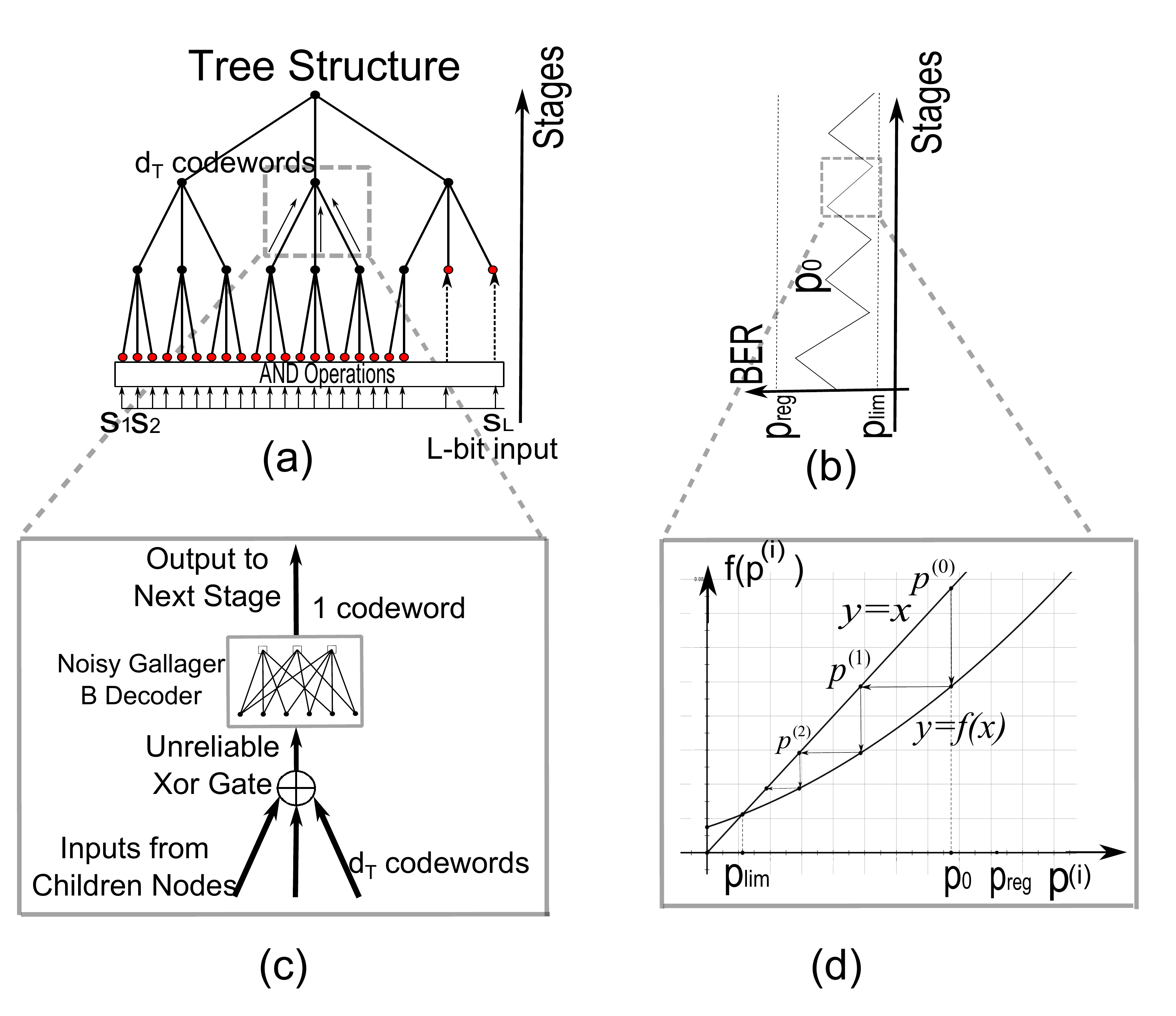}\\
  \caption{(a) shows the tree structure of the noisy computing scheme. During the computing process, the bit error probability is bounded between two constants $p_{\text{reg}}$ and $p_{\text{lim}}$ shown in (b). (c) shows a compute-and-correct structure. The bit error probability evolution in one embedded decoder is shown in (d).}\label{Encoder_tree}
\end{figure*}
The utilized unreliable gates in the computing scheme are AND gates, XOR gates and majority gates that are defined in Gate Model I, with error probabilities $p_\text{and}$, $p_\text{xor}$ and $p_\text{maj}$ respectively. We change the \textbf{FOR}-loop of ENCODED in Section~\ref{pipeline_str} slightly and use a $D$-branch tree with depth $M$ instead. We call this computing scheme ``ENCODED-T'' (ENCODED scheme with a Tree structure) and is conceptually illustrated in Fig.~\ref{Encoder_tree}(a). We use this tree structure because it reduces the number of stages of the computing scheme from $L$ in the \textbf{FOR}-loop in Alg.~\ref{alg_0} to $\Theta(\log L)$. This reduces the extent of information mixing caused by message-passing decoding (in comparison with ENCODED's sequential structure), which introduces correlation among messages and makes the density evolution analysis difficult. This issue will be detailed in Remark~\ref{tree_depth}.

The message $\textbf{s}=(s_1,...,s_L)$ is input from the leaf nodes. The output $\textbf{x}=\textbf{s}\cdot\tilde{\mathbf{G}}=(x_1,...,x_N)$ is computed from bottom to top and finally obtained at the root. Note that the tree structure is not necessarily a complete tree. Specifically, the tree is complete from the first level to the $(M-1)$-th level, i.e., the level just above the bottom level, and the number of nodes in the bottom level is chosen such that the number of leaf nodes is $L$. An illustration of a non-complete $3$-branch tree with $L=22$ leaf nodes (which are colored red) is shown in Fig.~\ref{Encoder_tree}(a). From Fig.~\ref{Encoder_tree}(a), we see that by removing the leaf children-nodes of each complete non-leaf node (a non-leaf node with $d_T$ leaf children-nodes), we effectively reduce the number of leaf-nodes by $(d_T-1)$, because the $d_T$ removed leaf children-nodes (red nodes) is replaced by one non-leaf node that will turn into a leaf node. Therefore, it is easy to see that the total number of non-leaf nodes is $\left\lceil\frac{L-1}{d_T-1}\right\rceil$.

Each of the $L$ leaf nodes has one of the $L$ rows of the matrix $\tilde{\mathbf{G}}$, i.e., $\tilde{\mathbf{g}}_1$ to $\tilde{\mathbf{g}}_L$ stored in it. At the start of the computing process, the $l$-th node of the first $L$ nodes calculates $s_l\cdot\tilde{\mathbf{g}}_l$ using $N$ unreliable AND gates and stores it as an intermediate result in a register. In the upper levels, each non-leaf node performs a component-wise XOR-operation of the $d_T$ intermediate results from $d_T$ children. Observe that if no gate errors occur, the root gets the the binary sum of all $s_l\cdot\mathbf{g}_l,i=1,\dots,L$, which is the correct codeword $\textbf{x}=\textbf{s}\cdot\tilde{\mathbf{G}}$.
\begin{algorithm}\caption{ENCODED-T}\label{alg1}
\textbf{INPUT}: A binary vector $\textbf{s}=(s_1, s_2,...s_L)$.\\
\textbf{OUTPUT}: A binary vector $\textbf{x}=(x_1, x_2,...x_N)$.\\
\textbf{NOTATION}: Denote by $v_m^k$ the $k$-th node in the $m$-th level of the tree structure. Denote the stored intermediate result in node $v$ by $\mathbf{y}_v$ and $v_m^l$ by $\mathbf{y}_m^l$. Denote the $d_T$ children-nodes of $v_m^l$ by $\mathcal{D}(v_m^l)$.\\
\textbf{INITIALIZE} \\
\textcolor{black}{For $1\le l\le L$, use noisy AND gates to create $d_v$ copies of $s_l\cdot\mathbf{g}_l$ and store them as an $E$-bit vector $\tilde{\mathbf{y}}_{M}^l$ in the register of $v_M^l$, or in the ($M-1$)-th level with the appropriate index. All of these $E$-bit vectors are stored as the first layer of intermediate results.}\\
\textbf{FOR} $m$ from $M-1$ to 1
\begin{itemize}
  \item Each non-leaf node $v_m^k$ calculates the XOR of the outputs of its $d_T$ (or less if the node is not complete) children-nodes and writes the result in its own $E$-bit register (computations could be noisy):
\begin{equation}\label{enct}
\tilde{\mathbf{y}}_{m}^k=\bigoplus_{v\in \mathcal{D}(v_{m}^k)} \tilde{\mathbf{y}}_v,
\end{equation}
  \item Each node $v_m^k$ performs one iteration of the message-passing decoding.
\end{itemize}
\textbf{END}\\
\textcolor{black}{Change the $E$-bit vector $\tilde{\mathbf{y}}_{1}^1$ back to the $N$-bit codeword $\mathbf{y}_{1}^1$ by randomly selecting one copy. Output $\mathbf{y}_{1}^1$ as the output $\mathbf{y}$}.
\end{algorithm}

In order to deal with errors caused by unreliable gates, each non-leaf tree node is a compute-and-correct unit shown in Fig.~\ref{Encoder_tree}(c). Unreliable XOR gates are used to perform the component-wise XOR-operation of the intermediate results. A noisy Gallager-B decoder \textcolor{black}{(see Appendix~\ref{GB_details}) is used to correct errors in the associated register after the XOR-operation. Note that the number of bits transmitted from variable nodes to parity check nodes during each Gallager-B decoding iteration is $E=d_vN$, where $d_v$ is the variable node degree of the Tanner graph and $E$ is the total number of edges. Therefore, at a tree node, the register stores $E=d_vN$ bits as intermediate results instead of $N$ bits as in Algorithm~\ref{alg_0}}\footnote{We have to store $E$ bits instead of an $N$-bit codeword, because we need the i.i.d. assumption of messages in the density evolution analysis. Note that by storing these $E$ bits, the corresponding $N$-bit codeword can be retrieved either using a noisy majority vote or a random selection.}. These $E$ bits of messages can be viewed as $d_v$ copies of the corresponding $N$-bit codeword, with indices from 1 to $d_v$. (The technique of replicating $N$ bits into $E=d_v N$ bits was first used in \cite{Tay_Bel_68,Kuz_PIT_73} for the correction circuit in fault-tolerant storage, which is known as the Taylor-Kuznetsov memory. A related technique was used in~\cite[Pg. 216]{Had_TIT_05} to construct fault-tolerant linear systems. See \cite{Vas_ITW_06} for details.) The XOR-operations at the beginning of the next stage are also done on codeword copies with the same index.

Before sending the output to the parent-node, each node performs $C$ iterations of the message-passing decoding \textcolor{black}{on the $E$-bit intermediate result obtained by the XOR-operations} with the embedded decoder. \textcolor{black}{Note that there are $C$ iterations of decoding at each non-leaf node of the tree structure shown in Fig.~\ref{Encoder_tree}. We will use the index $i=1,\dots,C,$ to number the decoding iterations done at a single tree node, which is different from the index $m$ used to number the level in the tree structure.} However, we will show that it suffices to use $C=1$ iteration (see Lemma~\ref{lemma7}) to bring the \textcolor{black}{bit error probability of the $E$-bit intermediate result} back to $p_\text{maj}+\frac{1}{d_T}p_\text{thr}$. In the noisy decoder, the bit error probability of the intermediate result, assuming the decoding neighborhood is cycle-free for $i+1$ iterations \textcolor{black}{(in fact, the number of levels of the decoding neighborhood grows by $2C$ at each tree node, see Remark~\ref{tree_depth} for details)}, follows the density evolution $P_e^{(i+1)}=f(P_e^{(i)})$ and the explicit expression of function $f(\cdot)$ is given in Lemma~\ref{irregular_DE} in Appendix~\ref{Upper_Bound_Analysis}. This evolution is illustrated in Fig.~\ref{Encoder_tree}(d). The bit error probability follows the directed path shown in Fig.~\ref{Encoder_tree}(d), and asymptotically approaches the fixed point of the density evolution function as number of iterations increases if decoding neighborhoods continue to satisfy the cycle-free assumption (which no fixed good code does). \textcolor{black}{However, the expression of $f(\cdot)$ is complicated, so we provide an upper bound $f(P_e^{(i)})<f_0(P_e^{(i)})$ in Lemma~\ref{Irr_DE_ge_lmm} for further analysis.}

During the computing process, the XOR-operations introduce errors, while the Gallager-B decoding process suppresses them. The bit error probability \textcolor{black}{of the intermediate result is reduced repeatedly at different stages} and is ensured to be bounded within the interval $(p_\text{maj},p_\text{reg})$ (as illustrated in Fig.~\ref{Encoder_tree}(b)), where the parameter $p_\text{reg}$ is defined as
\begin{equation}\label{p_reg}
  p_{\text{reg}}= {{p}_{\text{xor}}}+(d_T+1)p_\text{thr}.
\end{equation}

\begin{figure*}
  \centering
  \includegraphics[scale=0.38]{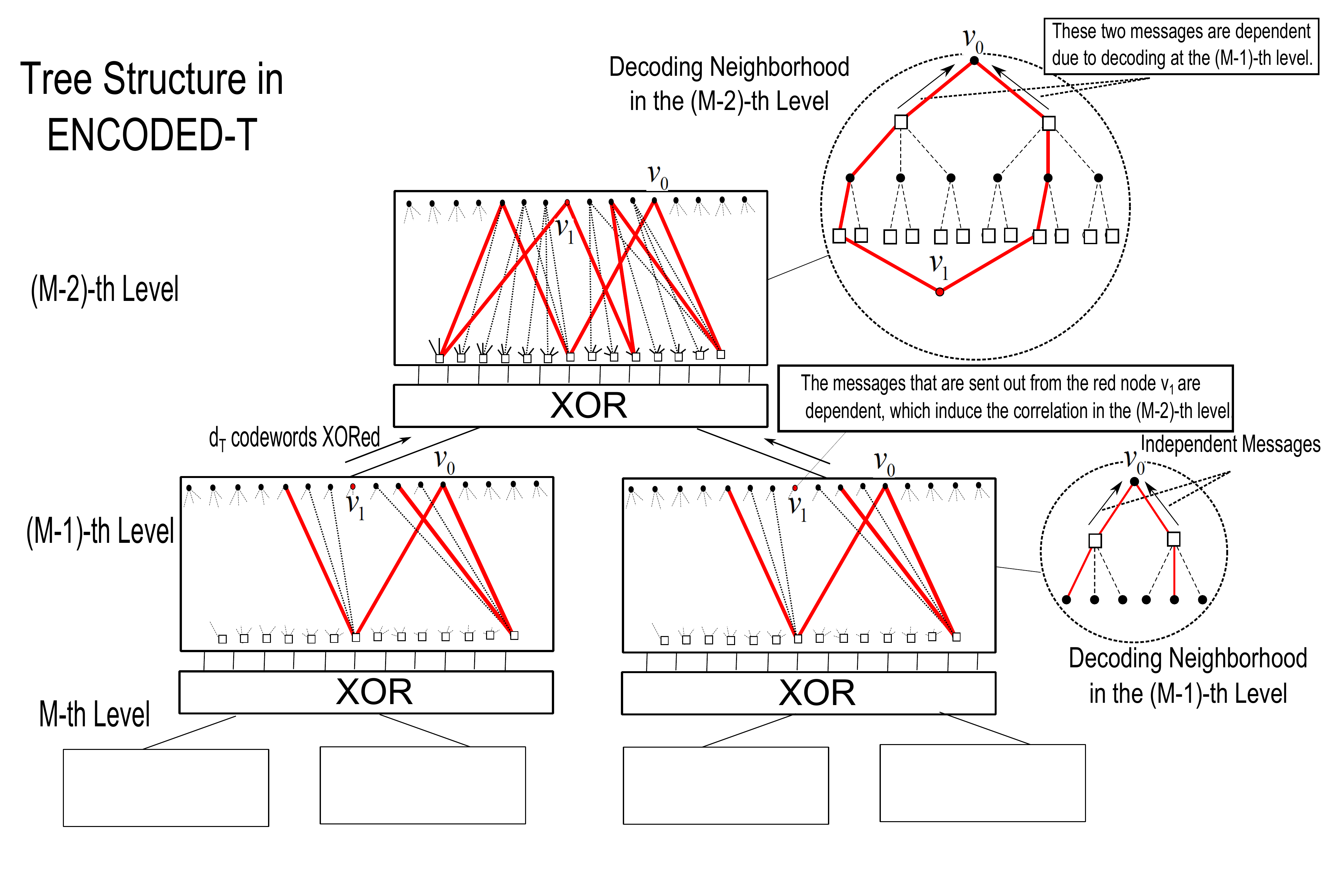}\\
  \caption{This figure shows the decoding neighborhoods in two adjacent levels of the tree structure in ENCODED-T. The black nodes in the decoding neighborhood denote variable nodes, while the white ones denote parity check nodes. The decoding neighborhood in the ($M-2$)-th level is not cycle-free due to a short cycle of length $8$. Note that the tree structure in ENCODED-T and the decoding neighborhoods are two completely different things and should not be confused.}\label{Two_Trees}
\end{figure*}
\begin{remark}\label{tree_depth}
An astute reader might notice that as the computation process proceeds, the message-passing algorithm introduces increasing correlation in messages as they are passed up the tree. This introduces a difficulty in analyzing the error probability using density-evolution techniques, because density evolution relies on the cycle-free assumption on the decoding neighborhoods\footnote{In fact, the analysis of message-passing type algorithms implemented on general graphs with cycles are the key question to many real-world problems, and convergence results are only known in some limited cases \cite{du2013network,Had_TIT_05}.}. \textcolor{black}{As shown in Fig.~\ref{Two_Trees}, the decoding neighborhoods of the same level in the ENCODED-T tree have the same neighborhood structure (because all nodes use the same LDPC code and the number of decoding iterations is a constant across the same level), while the decoding neighborhood on the upper level, the ($M-2$)-th level (see Fig.~\ref{Two_Trees}), grows by a depth of $2C$ compared to the adjacent lower level, the ($M-1$)-th level. Note that the XOR-operations before the LDPC decoding iterations do not change the nodes in the decoding neighborhood (again, because the same code is used at all nodes). Due to small girth of the LDPC code chosen in the example in Fig.~\ref{Two_Trees} (girth = $8$), the decoding neighborhood is not cycle free, and hence the messages sent to a variable node can be correlated. In Fig.~\ref{Two_Trees}, the correlation between messages at the root of the decoding neighborhood is caused by the correlation between messages sent by the red-colored node $\nu_1$.} We circumvent this issue\footnote{Another possible way could be to prove convergence to cycle-free neighborhoods through randomized constructions, as is done in~\cite{Var_TIT_11,Ric_TIT_01_2}. } by choosing a code of girth large enough so that no correlations are introduced even over multiple levels of the tree. The tree structure of ENCODED-T reduces the number of levels exponentially (in comparison with ENCODED), thereby reducing the girth requirement.

To understand this, as in~\cite{Ric_TIT_01_2,Len_TIT_01}, we define the \textit{number of independent iterations} of an LDPC code as the maximum number of iterations of message-passing decoding such that the messages sent to any variable node or parity check node are independent (i.e., number of iterations after which the code reaches its girth). We also use the phrase `\textit{overall} number of decoding iterations' in ENCODED-T to denote the sum (over levels) of maximum number of iterations over all nodes at each level of the ENCODED-T tree. For instance, if $C$ decoding iterations are executed at each tree node, then the ``overall'' number of iterations is $CM$, where $M$ is the number of levels in the tree. Intuitively, this quantifies the extent of information mixing in the Tanner graph by the message-passing algorithm. We need to ensure that the overall number of decoding iterations is smaller than the number of \textit{independent iterations} of the LDPC code used. For a fixed $C$, the tree structure of ENCODED-T requires the number of independent iterations, and hence also the  code-girth, to be $\Theta(\log N/\log d_T)$.

The importance of the ENCODED-tree structure also becomes clear: the \textbf{FOR}-loop in ENCODED (Alg.~\ref{alg_0}) has exponentially larger number of levels, and requires a large girth of $\Theta(L)$ for $L$ stages of the algorithm to maintain independence of messages in a decoding neighborhood at the root node.

The number of independent iterations of message-passing decoding of an LDPC code can be made as large as $\Theta(\frac{\log N}{\log (d_v-1)(d_c-1)})$ (see assumption (A.2)). Thus, for $d_T$ and $N$ large enough, the overall decoding neighborhoods are cycle free, and density evolution analysis is valid. More details are provided in Appendix~\ref{Upper_Bound_Analysis}.

In Section~\ref{expander_encoder}, we provide another way to handle this correlation issue that uses codes designed for correcting worst-case errors. For these codes, correlations do not matter as long as the number of errors does not exceed a certain threshold.
\end{remark}

\subsection{Analysis of ENCODED-T}\label{sketch_proof}
\vspace{0.2cm}
In what follows, we prove Theorem~\ref{Main_thm}, which characterizes the performance of Algorithm~\ref{alg1}.

\begin{theorem}[Error Suppression Using Gallager-B decoding for Problem 1]\label{Main_thm}
Using unreliable AND gates, majority gates and \textcolor{black}{XOR} gates from Gate Model I with respective gate error probabilities $p_\text{and}$, $p_\text{xor}$ and $p_\text{maj}$ that satisfy\footnote{In this result, we obtain different conditions on the error probabilities of different types of gates as sufficiency conditions, instead of a uniform bound on all gates. We do so to offer maximum flexibility to the circuit designer.}
\begin{equation}\label{p0satis}
\begin{split}
{{p}_{\text{thr}}}&:= {{\left( \begin{matrix}
   {{d}_{v}}-1  \\
   \lfloor \frac{{{d}_{v}}-1}{2}\rfloor   \\
\end{matrix} \right)}^{-\frac{1}{d-1}}}{{d}_{c}}^{-\frac{d}{d-1}}d_T^{-\frac{1}{d-1}}{{(d_T+1)}^{-\frac{d}{d-1}}},\\
{{p}_{\text{maj}}}&\le p_\text{thr},\\
{{p}_{\text{xor}}}&\le \frac{d_T+1}{{{d}_{c}}}{{p}_{\text{thr}}},\\
{{p}_{\text{and}}}&\le\frac{d_T+1}{d_T}{{p}_{\text{thr}}}.
\end{split}
\end{equation}
where $d=\left\lceil \frac{{{d}_{v}}-1}{2}\right\rceil$, and the tree width $d_T$ satisfies
\begin{equation}\label{D_satisfy}
  \left\lceil\frac{\log L}{\log d_T}\right\rceil\le \frac{\log N}{2\log(d_v-1)(d_c-1)},
\end{equation}
the binary linear transformation $\mathbf{r}=\mathbf{s}\cdot\mathbf{A}$ can be computed with output bit error probability $P_e^{\text{bit}}\le p_{\text{maj}}+p_\text{thr}\cdot\frac{1}{d_T}$ using the ENCODED technique provided in Alg.~\ref{alg1} (see Section~\ref{Construction}) that uses an LDPC code which satisfies assumptions (A.1) and (A.2). Further, the number of operations per bit satisfies
\begin{equation}\label{complexity_a}
  \mathscr{N}_\mathrm{per\text{-}bit}\le \frac{3E}{K}\cdot{\left\lceil\frac{L-1}{d_T-1}\right\rceil}+LE/K=\Theta\left(\frac{LN}{K}\right).
\end{equation}
where $E$ is the number of edges in the Tanner graph, $K$ is the number of outputs and $N$ is the code length of the utilized LDPC code. \end{theorem}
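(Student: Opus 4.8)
The plan is to propagate a bound on the bit error probability of the stored $E$-bit intermediate results as the computation sweeps from the leaves of the $d_T$-ary tree up to the root, maintaining two coupled invariants: at the input of every embedded Gallager-B decoder the per-coordinate error probability is at most $p_{\text{reg}}$ (defined in \eqref{p_reg}), and at its output it is at most $q:=p_{\text{maj}}+\frac{1}{d_T}p_{\text{thr}}$. First I would note that noiseless computation yields the correct codeword $\mathbf{x}=\mathbf{s}\cdot\tilde{\mathbf{G}}$ at the root, since each row of $\tilde{\mathbf{G}}=\mathbf{A}\mathbf{G}$ is a codeword, and then set up the base case at the leaves: an entry of $s_l\cdot\tilde{\mathbf{g}}_l$ produced by a single noisy AND gate is wrong with probability at most $p_{\text{and}}\le\frac{d_T+1}{d_T}p_{\text{thr}}$. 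When a parent node XORs its (at most) $d_T$ children coordinate-wise, Lemma~\ref{idp_odd} together with the Bernoulli inequality $1-(1-2p)^{d_T}\le 2d_Tp$ bounds the combined error of the independent inputs by $d_Tp$, and adding the failure probability $p_{\text{xor}}$ of the single XOR gate used per coordinate gives error at most $d_T p_{\text{and}}+p_{\text{xor}}\le(d_T+1)p_{\text{thr}}+p_{\text{xor}}=p_{\text{reg}}$ at the input of that node's decoder. Repeating the same computation with $q$ in place of $p_{\text{and}}$ and using $p_{\text{maj}}\le p_{\text{thr}}$ gives $d_T q+p_{\text{xor}}=d_Tp_{\text{maj}}+p_{\text{thr}}+p_{\text{xor}}\le(d_T+1)p_{\text{thr}}+p_{\text{xor}}=p_{\text{reg}}$, which is the inductive step for the XOR-combine stage.

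Next I would close the cycle by showing one Gallager-B iteration takes error $\le p_{\text{reg}}$ back to $\le q$. Here I invoke the density-evolution recursion $P_e^{(i+1)}=f(P_e^{(i)})$ of Lemma~\ref{irregular_DE} and its explicit upper bound $f(p)<f_0(p):=\binom{d_v-1}{\lfloor(d_v-1)/2\rfloor}((d_c-1)p+p_{\text{xor}})^{d}+p_{\text{maj}}$ with $d=\lceil(d_v-1)/2\rceil$ from Lemma~\ref{Irr_DE_ge_lmm}: a check-to-variable message is an XOR of $d_c-1$ independent variable-to-check messages (error $\le(d_c-1)p$ by Lemma~\ref{idp_odd}) computed by one XOR gate (extra $p_{\text{xor}}$), and an outgoing variable-to-check message is the majority of the $d_v-1$ incoming check messages, which is wrong only if at least $d$ of them are, contributing the $\binom{d_v-1}{\lfloor(d_v-1)/2\rfloor}$ bad patterns, plus the majority gate's own failure probability $p_{\text{maj}}$. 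The heart of the calculation (this is Lemma~\ref{lemma7}) is the algebraic fact that $(d_c-1)p_{\text{reg}}+p_{\text{xor}}\le d_c(d_T+1)p_{\text{thr}}$ — which follows from the definition of $p_{\text{reg}}$ together with $p_{\text{xor}}\le\frac{d_T+1}{d_c}p_{\text{thr}}$ — so that $f_0(p_{\text{reg}})\le p_{\text{maj}}+\binom{d_v-1}{\lfloor(d_v-1)/2\rfloor}(d_c(d_T+1)p_{\text{thr}})^{d}=p_{\text{maj}}+\frac{1}{d_T}p_{\text{thr}}=q$, where the last equality is exactly how the value of $p_{\text{thr}}$ in \eqref{p0satis} is tuned. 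Composing the two halves gives both invariants by induction on the tree level; applying the output invariant at the root, and noting that converting the $E$-bit vector $\tilde{\mathbf{y}}_1^1$ back to an $N$-bit word by random selection of one of the $d_v$ copies leaves the per-coordinate error probability unchanged, yields $P_e^{\text{bit}}\le p_{\text{maj}}+\frac{1}{d_T}p_{\text{thr}}$.

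The step I expect to be the main obstacle — and the reason for both the tree structure and condition \eqref{D_satisfy} — is justifying that the density-evolution recursion is actually applicable at every decoder, i.e.\ that the messages feeding each majority or XOR operation really are independent so that Lemma~\ref{idp_odd} may be used. As explained in Remark~\ref{tree_depth}, the decoding neighborhood of a node grows in depth by $2C$ per tree level, so the neighborhood at the root has depth $\Theta(CM)$ with $M=\lceil\log L/\log d_T\rceil$; I would argue that assumption (A.2) makes the number of independent iterations at least $\Theta(\log N/\log((d_v-1)(d_c-1)))$, and that \eqref{D_satisfy} guarantees $2CM$ (with $C=1$) stays below the girth, so every decoding neighborhood encountered anywhere in the computation is cycle-free and the i.i.d.\ message assumption underlying $f$ and $f_0$ holds throughout. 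Finally I would count operations: each of the $\lceil\frac{L-1}{d_T-1}\rceil$ non-leaf nodes uses at most $E$ XOR gates for the combine step and at most $2E$ gates (one XOR and one majority per Tanner-graph edge) for its single decoding iteration, while each of the $L$ leaves uses $E=d_vN$ AND gates; hence $\mathscr{N}_\mathrm{comp}\le 3E\lceil\frac{L-1}{d_T-1}\rceil+LE$, and dividing by $K$ while using $E=\Theta(N)$ and $\lceil\frac{L-1}{d_T-1}\rceil=\Theta(L)$ gives $\mathscr{N}_\mathrm{per\text{-}bit}=\Theta(LN/K)$, as claimed in \eqref{complexity_a}.
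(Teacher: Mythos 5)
Your proposal is correct and follows essentially the same route as the paper's proof: the same two alternating invariants ($p_{\text{reg}}$ before each embedded decoder, $p_{\text{maj}}+\frac{1}{d_T}p_{\text{thr}}$ after), the same use of Lemma~\ref{idp_odd} for the XOR-combine step, the same density-evolution bound $f_0$ and the algebraic identity tying $p_{\text{thr}}$ to $\frac{1}{d_T}p_{\text{thr}}$ (this is exactly Lemma~\ref{lemma7} and its proof in Appendix~\ref{Upper_Bound_Analysis}), the same girth/cycle-free justification via (A.2) and \eqref{D_satisfy}, and the same operation count $3E\lceil\frac{L-1}{d_T-1}\rceil+LE$.
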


\begin{remark}
Note that $P_e^\text{bit}$ remains bounded even as $L$ is increased. Thus, we can compute linear transforms with large size and still obtain good error performance.
\end{remark}

\begin{proof}[Proof of Theorem~\ref{Main_thm}]
We construct the computation tree of the ENCODED-T technique (see Section~\ref{Construction} and Section~\ref{sketch_proof}) for computing $\mathbf{x}=\mathbf{s}\cdot\mathbf{A}\mathbf{G}=\mathbf{s}\cdot\tilde{\mathbf{G}}$ as described in Section~\ref{Construction}. In all, there are $M$ levels, and in the $m$-th level the number of nodes is chosen such that the overall number of leaf-nodes is $L$. Each leaf node has the codeword $\tilde{\mathbf{g}}_l$ stored in it at the beginning of computation, where $\tilde{\mathbf{g}}_l$ is the $l$-th row of $\tilde{\mathbf{G}}$, an $L$-by-$N$ matrix (see~\eqref{quiv_encoding_matrix}). Note that $M$ satisfies $d_T^{M-2}<L\le d_T^{M-1}$, or
\begin{equation}\label{treebigger}
M=\left\lceil\frac{\log L}{\log d_T}+1\right\rceil.
\end{equation}
To ensure that the number of leaf nodes is $L$, the number of non-leaf nodes $S_\text{nl}$ is
\begin{equation}\label{treesize}
S_\text{nl}=\left\lceil\frac{L-1}{d_T-1}\right\rceil.
\end{equation}
\subsubsection{Error probability analysis}
Define $p_0$ to be the bit error probability at the start of the first round of the Gallager-B decoding carried out in the $(M-1)$-th level of the computing tree in Fig~\ref{Encoder_tree}(b). Since at this time, each bit is computed by XORing $d_T$ bits (see Fig~\ref{Encoder_tree}(a)) where each bit is calculated from an AND-operation (see Fig~\ref{Encoder_tree}(b)), we know from Lemma~\ref{idp_odd} that
\begin{equation}
\begin{split}
  p_0=&\frac{1}{2}\left[1-(1-2p_{\text{xor}})(1-2p_{\text{and}})^{d_T}\right]\\
  \overset{(a)}{<}&\frac{1}{2}\left[1-(1-2p_{\text{xor}})(1-2d_Tp_{\text{and}})\right]
  <p_{\text{xor}}+d_T p_{\text{and}},
\end{split}
\end{equation}
where step (a) is from the inequality $(1-x)^{d_T}>1-d_Tx$, $x\in (0,1)$. Using definition of $p_{\text{reg}}$ in~\eqref{p_reg} and condition~\eqref{p0satis}
\begin{equation}\label{p_zero2}
  p_0<p_{\text{xor}}+(d_T+1) p_{\text{thr}}=p_{\text{reg}}.
\end{equation}
In the following, we will prove that as long as the error probabilities of noisy gates satisfy~\eqref{p0satis}, the noisy Gallager-B decoder can make the bit error probability fall below $p_\text{maj}+\frac{1}{d_T}p_\text{thr}$ after one iteration of decoding.
\begin{lemma}\label{lemma7}
Suppose all noise parameters satisfy~\eqref{p0satis}. Then, as long as the bit error probability before decoding is smaller than $p_\text{reg}$ defined in~\eqref{p_reg}, after one iteration of decoding, the bit error probability $p_e^\text{dec}$ satisfies
\begin{equation}\label{Converge}
  p_\text{maj}<p_e^\text{dec}\le p_\text{maj}+\frac{1}{d_T}p_\text{thr}.
\end{equation}
\end{lemma}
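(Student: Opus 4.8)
The plan is to treat a single iteration of the noisy Gallager-B decoder as one application of the density-evolution recursion $P_e^{(i+1)}=f(P_e^{(i)})$ from Lemma~\ref{irregular_DE}, run on an input whose bits are in error with some probability $p_{\mathrm{in}}<p_\text{reg}$ (this hypothesis is exactly what the lemma assumes, and is verified for the first decoding round in~\eqref{p_zero2}). One Gallager-B iteration consists of a layer of noisy check-node XOR-operations followed by a layer of noisy variable-node majority-operations, so the post-decoding message error is $p_e^\text{dec}=f(p_{\mathrm{in}})$; the formula for $f$ is legitimate here because the depth-$2C=2$ chunk of decoding neighborhood added at this tree level stays cycle-free under (A.2) and~\eqref{D_satisfy}, so incoming messages remain independent (this is the point discussed in Remark~\ref{tree_depth} and made precise in Appendix~\ref{Upper_Bound_Analysis}). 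Since $f$ is unwieldy, I would work with the monotone explicit upper bound $f(x)<f_0(x)$ of Lemma~\ref{Irr_DE_ge_lmm}, reducing the claim to (i) $p_e^\text{dec}>p_\text{maj}$ and (ii) $f_0(p_\text{reg})\le p_\text{maj}+\tfrac1{d_T}p_\text{thr}$, the latter combined with monotonicity of $f_0$ on $(0,p_\text{reg})$.

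For the lower bound I would isolate the ``last-gate'' effect. Each output message bit is produced by a noisy majority gate that flips its output independently with probability $p_\text{maj}$; writing $\beta$ for the probability that the majority decision would already be wrong before this final gate-noise is applied, we get $p_e^\text{dec}=\beta(1-p_\text{maj})+(1-\beta)p_\text{maj}=p_\text{maj}+\beta(1-2p_\text{maj})$. Since $p_\text{maj}<\tfrac12$ and $\beta>0$ (the incoming check messages and the stored bit all pass through noisy gates and so are not perfectly reliable), this gives $p_e^\text{dec}>p_\text{maj}$, which is the left inequality of~\eqref{Converge}.

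For the upper bound I would propagate the error through the two layers. First, by Lemma~\ref{idp_odd}, treating the XOR-gate noise as an extra independent input that equals $1$ with probability $p_\text{xor}$, a check-to-variable message is in error with probability $q=\tfrac12\big[1-(1-2p_\text{xor})(1-2p_{\mathrm{in}})^{d_c-1}\big]$; bounding $(1-2p_{\mathrm{in}})^{d_c-1}\ge 1-2(d_c-1)p_{\mathrm{in}}$ and using $p_{\mathrm{in}}<p_\text{reg}=p_\text{xor}+(d_T+1)p_\text{thr}$ together with the hypothesis $p_\text{xor}\le\frac{d_T+1}{d_c}p_\text{thr}$ yields $q<d_c(d_T+1)p_\text{thr}$. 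Second, the variable-node majority decision is wrong (before the final gate-noise) essentially only when at least $d=\lceil(d_v-1)/2\rceil$ of the $d_v-1$ incoming messages are wrong, so $\beta$ is governed by $\binom{d_v-1}{d}q^{d}=\binom{d_v-1}{\lfloor(d_v-1)/2\rfloor}q^{d}$ (this is precisely the dominant term of $f_0$); inserting the bound on $q$ and the definition of $p_\text{thr}$ in~\eqref{p0satis} gives exactly $\beta\le\frac1{d_T}p_\text{thr}$, hence $p_e^\text{dec}=p_\text{maj}+\beta(1-2p_\text{maj})\le p_\text{maj}+\frac1{d_T}p_\text{thr}$, the right inequality of~\eqref{Converge}. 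The remaining hypothesis $p_\text{and}\le\frac{d_T+1}{d_T}p_\text{thr}$ is used only upstream — it is what makes $p_0<p_\text{reg}$ in~\eqref{p_zero2} — so it enters this lemma only through the standing assumption that the pre-decoding error is below $p_\text{reg}$.

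The arithmetic closes exactly because the threshold $p_\text{thr}$ in~\eqref{p0satis} is reverse-engineered for it, so the \emph{main obstacle is not the computation} but justifying the use of density evolution: one must argue that, although the ENCODED-T tree and the repeated component-wise XOR-ing mix information across levels, the per-level growth of the decoding neighborhood by depth $2C$ keeps the \emph{overall} neighborhood cycle-free across all $M=\Theta(\log L)$ levels, which is where the girth bound (A.2) combined with~\eqref{D_satisfy} is invoked. I would also need to confirm the monotonicity of $f_0$ on $(0,p_\text{reg})$, so that evaluating at the right endpoint $p_\text{reg}$ legitimately bounds $f_0(p_{\mathrm{in}})$ for every admissible input error probability.
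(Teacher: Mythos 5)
Your proposal is correct and follows essentially the same route as the paper's proof in Appendix~\ref{Upper_Bound_Analysis}: invoke the noisy density-evolution recursion of Lemma~\ref{irregular_DE} (justified by the cycle-free neighborhood argument from (A.2) and~\eqref{D_satisfy}), obtain the lower bound from the final majority-gate noise with $p_\text{maj}<\tfrac12$, and obtain the upper bound via the explicit monotone bound $f_0=p_\text{maj}+\bar\eta_0(\bar\gamma_0(\cdot))$ of Lemma~\ref{Irr_DE_ge_lmm}, with the chain $(d_c-1)p_\text{reg}+p_\text{xor}\le d_c(d_T+1)p_\text{thr}$ and the definition of $p_\text{thr}$ closing the arithmetic exactly as you describe.
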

\begin{proof}
See Appendix~\ref{Upper_Bound_Analysis}.
\end{proof}
\begin{remark}\label{remark41}Lemma~\ref{lemma7} describes the behavior shown in Fig.~\ref{Encoder_tree} (d). For noiseless LDPC decoders~\cite{Len_TIT_01}\cite{Ric_TIT_01}\cite{Ric_TIT_01_2}, the fixed point of the density evolution function is zero when the code is operated for a channel that has small enough noise. However, for noisy density evolution, it can be shown that the fixed point \textcolor{black}{$p_{\text{lim}}=p_{\text{maj}}+\mathcal{O}(p_{\text{maj}}^2+p_{\text{xor}}^2)\approx p_{\text{maj}}$}~\cite[Thm 5]{Yaz_TC_01}.
\end{remark}
Therefore, after one iteration of Gallager-B decoding, the bit error probability reduces from $p_0$ to something less than $p_\text{maj}+\frac{1}{d_T}p_{\text{thr}}$. Then, the corrected codeword is sent to the parent-node in the next level of the tree structure. After that, the XOR-operation of $d_T$ codewords from children-nodes are carried out. From Lemma~\ref{idp_odd}, we know that the error probability after this XOR-operation is upper bounded by
\begin{equation}
\begin{split}
  &\frac{1}{2}\left[1-(1-2p_{\text{xor}})\left[1-2(p_\text{maj}+\frac{1}{d_T}p_\text{thr})\right]^{d_T}\right]\\
  <&p_{\text{xor}}+d_Tp_\text{maj}+p_\text{thr} \overset{(a)}{<}p_{\text{xor}}+(d_T+1)p_\text{thr} \overset{(b)}= p_\text{reg},
\end{split}
\end{equation}
where $(a)$ follows from~\eqref{p0satis} and $(b)$ follows from~\eqref{p_reg}. Therefore, we can reuse the result in Lemma~\ref{lemma7} and carry out a new Gallager-B decoding iteration. To summarize, the bit error probability starts at $p_0$ and then oscillates between $p_{\text{reg}}$ and $p_{\text{maj}}$ during the entire computation process. This behavior is qualitatively illustrated in Fig.~\ref{Encoder_tree}(b) and numerically illustrated through simulation results in Fig.~\ref{simulation_figure}.
\subsubsection{Computational complexity analysis}
Each compute-and-correct unit is constituted by $E$ $D$-fan-in noisy XOR gates, an $E$-bit register and a Gallager-B decoder, where $E$ is the number of edges in the LDPC bipartite graph.

The operations required in one iteration of decoding are $E$ XOR-operations and $E$ majority-operations, because on each edge that connects a variable node $v$ and a parity-check node $c$, two messages $m_{c\to v}^{(i)}$ and $m_{v\to c}^{(i)}$ are calculated respectively using an XOR-operation and a majority-operation. Thus, the number of operations to carry out during one iteration of decoding is $2E$. Since the number of non-leaf tree nodes is $S_\text{nl}=\left\lceil\frac{L-1}{d_T-1}\right\rceil$ (see~\eqref{treesize}), the number of operations required in all non-leaf nodes is at most $(2E+E)\left\lceil\frac{L-1}{d_T-1}\right\rceil=\mathcal{O}(EL)$. The operations required in the first layer (input-layer) of the computing tree are $LE$ AND-operations. The computing process outputs $K$ bits, so the number of operations required per bit is $\frac{3E}{K}\cdot{\left\lceil\frac{L-1}{d_T-1}\right\rceil}+LE/K=\mathcal{O}(LE/K)$. Since the number of edges $E=d_vN=\Theta(N)$, we know that the number of operations per bit is in the order of $\mathcal{O}(LN/K)$.
\end{proof}
\subsection{ENCODED-F: A scheme for Reliable Computation of Linear Transformations under Gate Model II}\label{expander_encoder}
In this Section, we consider unreliable gates defined in Definition~\ref{ng2}, which are either perfect or defective. We construct a computing scheme using a decoding scheme different from that of ENCODED-T that is still able to attain a small error fraction in the final output. This computing scheme operates in exactly the same manner as ENCODED (see Alg.~\ref{alg_0}). However, the embedded noisy decoder is a PBF decoder. This computation scheme is named ENCODED-F (ENCODED using the flipping algorithm). We modify ENCODED as follows: we partition the entire computing process into $\left\lceil\frac{L}{d_s-1}\right\rceil$ stages, where $d_s$ is called the \emph{group size}. First, we store an all-zero codeword in the $N$-bit register. In the $l$-th stage, we first use $(d_s-1)N$ AND gates to obtain the $d_s-1$ scalar-vector multiplications $s_i\cdot \tilde{\mathbf{g}}_i$ for $i\in\{(d_s-1)(l-1)+1,(d_s-1)(l-1)+2,\dots,(d_s-1)l\}$, where $\tilde{\mathbf{g}}_i$ is the $i$-th row of the combined matrix $\tilde{\mathbf{G}}=\mathbf{A}\mathbf{G}$. Then, we use $N$ XOR gates to add the $d_s-1$ results to the $N$-bit register. The parameter $d_s$ is chosen so that $d_s\le D$, the maximum input to each noisy gate. After that, we use one iteration of the PBF algorithm (see Section~\ref{main_results}) to correct errors. We use $P$ XOR gates and $N$ majority gates in one iteration of the PBF algorithm.

We note here that the tree structure in ENCODED-T (see Alg.~\ref{alg1}) could be used in the ENCODED-F. However, we still use the \textbf{FOR}-loop structure in Alg.~\ref{alg_0}, because the resulting maximum tolerable gate error probability is smaller using the \textbf{FOR}-loop structure. This is for two reasons (i) The tree structure in ENCODED-T was motivated by induced correlations in messages as they are passed up. However, correlations do not matter when decoding using the PBF algorithm; (ii) Interestingly, there is a benefit to using the \textbf{FOR}-loop structure: in ENCODED-T, the error probability increases by a factor of $d_T$ at every level {due to XOR-ing of $d_T$ inputs}. Since these XOR operations are not needed in ENCODED-F, to keep error probability suppressed, the constraints on the gate error probability will thus be more severe in ENCODED-T than in ENCODED-F.

In what follows, we prove Theorem~\ref{expander_thm}, which quantifies the error-tolerance of ENCODED-F. The basic tool used to prove Theorem~\ref{expander_thm} is a modified version of the worst-case error correcting result in the requirement (A.3), which provides the worst-case error correcting capability of regular LDPC codes using one iteration of noisy PBF decoding.

\begin{theorem}[Error Suppression Using the PBF algorithm for Problem 3]\label{expander_thm}
Using unreliable AND gates, XOR gates and majority gates from Gate Model II ($D,n,\alpha$) with respective error fractions $\alpha_{\text{and}}$, $\alpha_{\text{xor}}$ and $\alpha_{\text{maj}}$ respectively, and using an $(d_v,d_c)$-regular LDPC code which satisfies (A.3) to implement ENCODED-F with group size $d_s$, as long as
\begin{align}\label{expander_condition1}
  (d_s-1)\alpha_{\text{and}}+[D(1-R)+1]\alpha_{\text{xor}}+\alpha_{\text{maj}}<\theta\alpha_0,
\end{align}
the binary linear transformation $\mathbf{r}=\mathbf{s}\mathbf{A}$ can be computed using $N$ AND gates, $(N + P)$ XOR gates, and $N$ majority gates, and the number of operations per bit is at most $\frac{2N+P}{K}\left\lceil \frac{L}{d_s-1}\right\rceil+\frac{NL}{K}=\Theta(\frac{LN}{K})$. Further, the error fraction of the final output is at most $\alpha_0$.
\end{theorem}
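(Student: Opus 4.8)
The plan is to prove Theorem~\ref{expander_thm} by induction over the $\left\lceil L/(d_s-1)\right\rceil$ stages of ENCODED-F (Gate Model II of Definition~\ref{ng2}). Throughout, the $N$-bit register holds an approximation of the partial codeword $\mathbf{x}^{(l)}=\sum_{j} s_j\tilde{\mathbf{g}}_j$, the sum running over the first $l$ groups of $d_s-1$ scalars; this is itself a codeword because each $\tilde{\mathbf{g}}_j$ is, so the PBF iterations always decode toward a legitimate codeword. The invariant I would maintain is that the register carries strictly fewer than $\alpha_0 N$ Hamming errors at every point, and carries at most $(1-\theta)\alpha_0 N+\Delta_{\mathrm{dec}}$ errors right after each PBF iteration, where $\Delta_{\mathrm{dec}}$ counts the errors attributable to defective gates inside that iteration. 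The base case holds since the register is initialized to the all-zero codeword. The one nonroutine ingredient is a noisy-gate refinement of requirement (A.3): for an $(d_v,d_c)$-regular code satisfying (A.3), one iteration of PBF started from at most $\alpha_0 N$ codeword errors and executed with $\mu_{\mathrm{xor}}$ defective syndrome-computing XOR gates and $\mu_{\mathrm{maj}}$ defective majority gates leaves at most $(1-\theta)\alpha_0 N+D\mu_{\mathrm{xor}}+\mu_{\mathrm{maj}}$ errors, since a defective parity-check gate corrupts at most one syndrome bit and that bit can flip the flip/no-flip decision of at most $d_c\le D$ variable nodes, while a defective majority gate corrupts one variable node's decision.

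For the inductive step I would account for all errors injected during one stage. Forming the $d_s-1$ scalar--codeword products uses $(d_s-1)N$ AND gates, of which at most $(d_s-1)\alpha_{\mathrm{and}}N$ are defective and each corrupts at most one coordinate; adding the products into the register uses $N$ XOR gates, contributing at most $\alpha_{\mathrm{xor}}N$ more; the PBF iteration's $P=(1-R)N$ syndrome-computing XOR gates (one per parity check) contribute, via the fan-out argument above, at most $D(1-R)\alpha_{\mathrm{xor}}N$; and its $N$ majority gates contribute at most $\alpha_{\mathrm{maj}}N$. Hence the errors injected in one stage (counting both the additions and the stage's own decoding) total at most $[(d_s-1)\alpha_{\mathrm{and}}+(D(1-R)+1)\alpha_{\mathrm{xor}}+\alpha_{\mathrm{maj}}]N$, which by hypothesis~\eqref{expander_condition1} is strictly below $\theta\alpha_0 N$. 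Therefore, if the register enters a stage with at most $(1-\theta)\alpha_0 N+\Delta_{\mathrm{dec}}^{\mathrm{prev}}$ errors, then after the AND/XOR additions it has fewer than $(1-\theta)\alpha_0 N+\theta\alpha_0 N=\alpha_0 N$ errors — so the noisy-PBF lemma applies — and after the PBF iteration it has at most $(1-\theta)\alpha_0 N+\Delta_{\mathrm{dec}}<\alpha_0 N$ errors, re-establishing the invariant. These bounds are simultaneously over all inputs $\mathbf{s}$ and all admissible defective-gate sets, as required by Problem~\ref{pro_3}, and they force the error fraction of the final output below $\alpha_0$.

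The complexity estimate is bookkeeping: each stage does $(d_s-1)N$ AND operations, $N$ XOR operations for the register update, and $P$ XOR plus $N$ majority operations for its single PBF iteration, so over the $\left\lceil L/(d_s-1)\right\rceil$ stages there are $NL$ AND operations and $(2N+P)\left\lceil L/(d_s-1)\right\rceil$ XOR-and-majority operations, giving $\mathscr{N}_\mathrm{per\text{-}bit}\le \frac{NL}{K}+\frac{2N+P}{K}\left\lceil\frac{L}{d_s-1}\right\rceil=\Theta(LN/K)$ because $P=\Theta(N)$ and $d_s\le D$ is constant. I expect the main obstacle to be the noisy-gate version of (A.3): one has to check that a single PBF iteration still contracts a worst-case error pattern by the factor $1-\theta$ when a constant fraction of the decoder's own gates are permanently stuck, and to track precisely how each type of defective gate propagates into codeword errors — in particular the bound $d_c\le D$ on the fan-out of a corrupted syndrome bit, which is the source of the coefficient $D(1-R)+1$ in~\eqref{expander_condition1}. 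Everything else rests on this lemma together with requirements (A.1)--(A.3) and the constraint $d_s\le D$ that lets each multiply-and-add step use single gates.
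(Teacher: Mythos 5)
Your proposal is correct and follows essentially the same route as the paper's proof: induction on the stage index with the invariant that the register holds fewer than $\alpha_0 N$ errors before each PBF iteration, the union-bound accounting $(d_s-1)\alpha_{\text{and}}N+\alpha_{\text{xor}}N$ for the compute step plus $d_c(1-R)\alpha_{\text{xor}}N+\alpha_{\text{maj}}N$ for the decoder's own defective gates (via the same fan-out argument that a corrupted syndrome bit affects at most $d_c\le D$ variable-node decisions), and the same operation count. The only cosmetic difference is that you package the noisy-decoder error injection as a standalone lemma where the paper folds it inline into the inductive step.
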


\begin{proof}[Proof of Theorem~\ref{expander_thm}]
We use induction on the stage index $l$ to derive an upper bound on the number of errors. In the first stage, $N(d_s-1)$ AND gates and $N$ XOR gates introduce at most $N[(d_s-1)\alpha_{\text{and}}+\alpha_{\text{xor}}]$ errors, which is upper bounded by
\begin{equation}
  (d_s-1)\alpha_{\text{and}}+\alpha_{\text{xor}}<\theta\alpha_0\label{expander_condition2},
\end{equation}
which can be obtained by combining~\eqref{expander_condition1}. Suppose in the $(l-1)$-th stage, after adding a set of ($d_s-1$) codewords $s_{i}\cdot\tilde{\mathbf{g}}_{i},i\in \{(d_s-1)(l-2)+1,(d_s-1)(l-2)+2,\dots,(d_s-1)(l-1)\}$ to the $N$-bit register, the number of errors is strictly less than $N\alpha_0$.

Then, according to condition (A.3), if no computation errors occur during execution of one iteration of PBF algorithm, the fraction of errors can be reduced to $N\alpha_0(1-\theta)$. Whenever an XOR gate flips the corresponding parity check value during the PBF algorithm, it affects at most $d_c$ majority gates. In total, there are $P$ XOR gates used in one iteration of the PBF algorithm, so there are at most $\alpha_\text{xor} Pd_c$ errors due to XOR errors in the PBF algorithm. There are at most $\alpha_\text{maj}N$ errors due to majority gate failures. After this iteration of bit flipping, another set of ($d_s-1$) codewords $s_i\mathbf{g}_{i}$ is added to the $N$-bit register with $N(d_s-1)$ AND gates and $N$ XOR gates. These two operations introduce $(\alpha_\text{xor}+\alpha_\text{and}(d_s-1))N$ errors. Therefore, the total error fraction before the next PBF algorithm is upper bounded (using the union bound) by
\begin{equation}
  \begin{split}\label{after_bit_flip_1}
 \alpha_\text{PBF}\le N\alpha_0(1-\theta)
 +&[d_c(1-R)+1]\alpha_{\text{xor}} \\
 +&\alpha_{\text{maj}}+ (d_s-1)\alpha_{\text{and}},
\end{split}
\end{equation}
where $R$ is the code rate ($R=\frac{N-P}{N}$).
As long as~\eqref{expander_condition1} holds and $d_c\le D$, before the next bit flipping, it holds that
\begin{equation}\label{before_bit_flip_2}
\begin{split}
  \alpha_\text{PBF}\le N\alpha_0(1-\theta)+\theta N\alpha_0=\alpha_0 N.
\end{split}
\end{equation}
Therefore, the induction can proceed.

In each stage, we need $N+P$ XOR-operations and $N$ majority-operations. During the entire computation, we need $NL$ AND-operations. Therefore, the computational complexity per output bit, which is the total number of operations in $\left\lceil L/(d_s-1)\right\rceil$ stages divided by $K$ bits, is $(2N+P)\left\lceil L/(d_s-1)\right\rceil/K+\frac{NL}{K}$.
\end{proof}

ENCODED-F can be applied to Gate Model I as well, which is characterized in the following theorem.

\begin{theorem}[Error Suppression Using PBF algorithm for Problem 2]\label{expander_probabilistic}
Using unreliable AND gates, majority gates and \textcolor{black}{XOR} gates from Gate Model I with respective gate error probabilities $p_\text{and}$, $p_\text{xor}$ and $p_\text{maj}$, and using an $(d_v,d_c)$-regular LDPC code which satisfies (A.3) to implement ENCODED-F with group size $d_s$, as long as
\begin{equation}
\begin{split}
   &\max\{p_\text{and},p_\text{xor},p_\text{maj}\}\\
   &<\lambda:=\frac{\theta\alpha_0/2}{(d_s-1)+\left[d_c(1-R)+1 \right]+1}, \label{expander_prob_condition}
\end{split}
\end{equation}
the binary linear transformation $\mathbf{r}=\mathbf{s}\cdot\mathbf{A}$ can be computed using $\frac{2N+P}{K}\left\lceil \frac{L}{d_s-1}\right\rceil+\frac{NL}{K}=\Theta(\frac{LN}{K})$ operations per bit. Further, the final error fraction $\delta_e^{\text{frac}}$ satisfies
\begin{equation}
  \Pr(\delta_e^{\text{frac}}<\alpha_0)>1-P_{e}^\text{blk},
\end{equation}
where the probability $P_{e}^\text{blk}$ satisfies
\begin{equation}\label{expander_error_prob}
  P_{e}^\text{blk}<3L\exp\left( -\lambda^*N \right),
\end{equation}
where
\begin{align}\label{expander_error_exp}
   &{{\lambda }^{*}}=D(2\lambda\|\lambda)=\left( 2\log 2-1 \right)\lambda+\mathcal{O}({{\lambda}^{2}}).
\end{align}
\end{theorem}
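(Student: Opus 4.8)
The plan is to reduce Theorem~\ref{expander_probabilistic} to the worst-case analysis of Theorem~\ref{expander_thm} by a concentration-of-measure argument. Under Gate Model~I the number of errors injected into the $N$-bit register at each stage of ENCODED-F is now a random variable rather than a deterministic worst-case quantity, so the strategy is: (i) show that, with overwhelming probability, this per-stage error increment never exceeds the slack $\theta\alpha_0 N$ that already made the induction in Theorem~\ref{expander_thm} close; and (ii) on that high-probability event, run that induction verbatim. The complexity bound is then inherited unchanged, since the operation count is exactly the same as in ENCODED-F with worst-case gates.

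Concretely, ENCODED-F runs in $\lceil L/(d_s-1)\rceil$ stages; each stage activates $(d_s-1)N$ AND gates and $N$ XOR gates to add the next batch of rows of $\tilde{\mathbf G}$, then $P=N-K$ fan-in-$d_c$ XOR gates (one per parity check) and $N$ majority gates for one PBF iteration. By Definition~\ref{ng1} all these gate uses fail independently, each with probability at most $\lambda$. I would define, for each stage, a "good event'' asserting that the number of failures within each family of gates is at most twice its expectation, i.e.\ $\mathrm{Bin}(m,p)\le 2mp$ with $m\in\{(d_s-1)N,\ N,\ P,\ N\}\ge N$ (each count is $\Theta(N)$ since $K=\Theta(N)$). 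The multiplicative Chernoff bound gives $\Pr(\mathrm{Bin}(m,p)\ge 2mp)\le\exp(-m\,D(2p\|p))\le\exp\!\left(-N\,D(2\lambda\|\lambda)\right)=\exp(-\lambda^* N)$, and a union bound over the $O(L)$ such events (a constant number per stage, at most $3\lceil L/(d_s-1)\rceil\le 3L$) bounds $\Pr(\overline{\mathcal E})$ by $3L\exp(-\lambda^* N)$, which is the claimed $P_e^{\text{blk}}$. The asymptotic expansion \eqref{expander_error_exp} is then just the Taylor expansion of the binary KL divergence, $D(2\lambda\|\lambda)=2\lambda\ln 2+(1-2\lambda)\ln\frac{1-2\lambda}{1-\lambda}=(2\ln2-1)\lambda+\mathcal O(\lambda^2)$.

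Conditioned on $\mathcal E$ every gate-failure count is deterministically bounded, so I would repeat the induction of Theorem~\ref{expander_thm}'s proof with $\alpha_{\text{and}},\alpha_{\text{xor}},\alpha_{\text{maj}}$ replaced by $2p_{\text{and}},2p_{\text{xor}},2p_{\text{maj}}$, keeping the amplification factor $d_c$ for a failed PBF-XOR (which can corrupt up to $d_c$ variable nodes). The invariant is that just before every PBF iteration the register holds $\le\alpha_0 N$ errors: it holds after the first batch since $(d_s-1)(2p_{\text{and}})+2p_{\text{xor}}<\theta\alpha_0$, and it is reproduced because, after the noiseless reduction to $(1-\theta)\alpha_0 N$ guaranteed by condition~(A.3), the PBF noise plus the next batch's noise add at most $[(d_s-1)(2p_{\text{and}})+(d_c(1-R)+1)(2p_{\text{xor}})+2p_{\text{maj}}]N$, which by the definition of $\lambda$ in \eqref{expander_prob_condition} is at most $2\lambda[(d_s-1)+(d_c(1-R)+1)+1]N=\theta\alpha_0 N$; hence the count returns to $\le\alpha_0 N$. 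In particular the final output has $\delta_e^{\text{frac}}<\alpha_0$ on $\mathcal E$, so $\Pr(\delta_e^{\text{frac}}<\alpha_0)\ge\Pr(\mathcal E)>1-3L\exp(-\lambda^* N)$, and the per-bit count is $\frac{2N+P}{K}\lceil L/(d_s-1)\rceil+\frac{NL}{K}=\Theta(LN/K)$ exactly as in Theorem~\ref{expander_thm}.

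The main obstacle is the bookkeeping in the induction step: one must check that inflating each gate-failure count to its high-probability ceiling ($\approx 2\lambda$ times the number of gates of that type) still leaves the per-stage error increment inside the $\theta\alpha_0 N$ budget that condition~(A.3) needs to close the recursion — which is precisely why $\lambda$ in \eqref{expander_prob_condition} carries the factor $\tfrac12$ in its numerator and why its denominator equals the sum of the coefficients appearing in the worst-case condition \eqref{expander_condition1} (with $d_c$ in place of $D$, since here the code is fixed), and why the $d_c$-amplification of PBF-XOR failures (equivalently the $P=(1-R)N$ parity checks of degree $d_c$) must be tracked separately from the single-bit AND/adding-XOR failures. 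A secondary point to state cleanly is that, by Definition~\ref{ng1}, all gate-use failure indicators across all stages are mutually independent even when physical gates are reused, so the union bound over the $O(L)$ concentration events is legitimate.
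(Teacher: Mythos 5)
Your overall architecture---reduce to the worst-case induction of Theorem~\ref{expander_thm} by a per-stage concentration bound plus a union bound over the $O(L)$ stages---is exactly the paper's, and your induction step, operation count, and Taylor expansion of $D(2\lambda\|\lambda)$ are all fine. The concentration step itself, however, contains a genuine error. Your ``good event'' asserts that each family's failure count satisfies $\mathrm{Bin}(m,p)\le 2mp$, a \emph{multiplicative} deviation, and you claim $\Pr(\mathrm{Bin}(m,p)\ge 2mp)\le \exp(-m\,D(2p\|p))\le\exp(-N\,D(2\lambda\|\lambda))$. The second inequality goes the wrong way: $D(2p\|p)=(2\log 2-1)p+\mathcal{O}(p^{2})$ is \emph{increasing} in $p$, so $p\le\lambda$ gives $D(2p\|p)\le D(2\lambda\|\lambda)$, and since $m/N$ is only a bounded constant the exponent $m\,D(2p\|p)$ can be far smaller than $N\,D(2\lambda\|\lambda)$. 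Worse, the theorem only assumes $\max\{p_{\text{and}},p_{\text{xor}},p_{\text{maj}}\}<\lambda$, so a gate error probability could be, say, $1/N^{2}$; then $2mp<1$, your good event forces \emph{zero} failures in that family, and its complement has probability $\approx mp=\Theta(1/N)$ --- polynomially, not exponentially, small. So the claimed bound $P_{e}^{\text{blk}}<3L\exp(-\lambda^{*}N)$ does not follow from the event you chose.

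The paper avoids this by using an \emph{additive} deviation (Lemma~\ref{Binomial_deviation}): for each empirical failure fraction it bounds $\Pr(\alpha_{(\cdot)}>p_{(\cdot)}+\lambda)\le\exp(-D(p_{(\cdot)}+\lambda\|p_{(\cdot)})N)$ and uses the fact that $D(p+\lambda\|p)$ is monotone \emph{non-increasing} in $p$ on $(0,\lambda)$, so that $D(p+\lambda\|p)\ge D(2\lambda\|\lambda)=\lambda^{*}$ uniformly over all $p<\lambda$. The budget $\theta\alpha_{0}$ is then split as $\theta\alpha_{0}/2$ for the weighted sum of the means (guaranteed by the hypothesis $\max p<\lambda$, with the denominator of $\lambda$ equal to the sum of the coefficients in~\eqref{expander_condition1}) plus $\theta\alpha_{0}/2$ for the weighted sum of the $\lambda$-deviations --- this, and not a factor-of-two ceiling on the counts, is why the $\tfrac12$ appears in the numerator of $\lambda$. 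Replacing your event $\{\mathrm{Bin}(m,p)\le 2mp\}$ by $\{\alpha_{(\cdot)}\le p_{(\cdot)}+\lambda\}$ for each gate family and rerunning your induction with these ceilings repairs the argument; everything else in your write-up then goes through.
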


However, in probabilistic settings, the number of errors at any stage could exceed $N\alpha_0$. In what follows, we use large deviation analysis to show that the probability of exceeding $N\alpha_0$ is small. First, we review the large deviation result for binomial distribution~\cite[page 502, Example 3]{Che_AMS_52}.
\begin{lemma}\label{Binomial_deviation}
Let $X_i,i=1,\dots,N$ be $N$ i.i.d. binary random variables with $\Pr [{{X}_{i}}=1]=p$. Then
\begin{equation}\label{lg_d}
  \Pr \left[\frac{1}{N}\sum\limits_{i=1}^{N}{{{X}_{i}}}>(p+\lambda )\right]<\exp \left[ -D(p+\lambda \|p)N \right],
\end{equation}
where $D(p+\lambda \|p)=(p+\lambda )\log_e \frac{p+\lambda }{p}+(1-p-\lambda )\log_e \frac{1-p-\lambda }{1-p}$. Further, if $p<\lambda$,
\begin{equation}\label{lg_d2}
  \Pr \left[\frac{1}{N}\sum\limits_{i=1}^{N}{{{X}_{i}}}>(p+\lambda )\right]<\exp \left[ -D(2\lambda \|\lambda)N\right].
\end{equation}
\end{lemma}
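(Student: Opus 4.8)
The plan is to prove the two inequalities separately: \eqref{lg_d} is the classical Chernoff bound written through the binary relative entropy, and \eqref{lg_d2} follows from \eqref{lg_d} once we show that the exponent $D(p+\lambda\|p)$, viewed as a function of the base rate $p$, is decreasing.

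For \eqref{lg_d} I would start from the exponential Markov (Chernoff) inequality. Writing $S_N=\sum_{i=1}^N X_i$ and $q=p+\lambda$, for every $t>0$ we have $\Pr[S_N> Nq]\le \Pr[S_N\ge Nq]\le e^{-tNq}\,\mathbb{E}[e^{tS_N}]$. By independence and the Bernoulli assumption $\mathbb{E}[e^{tS_N}]=((1-p)+pe^t)^N$, so the bound reads $\exp\!\big(N[-tq+\log((1-p)+pe^t)]\big)$. I would then minimize the bracket over $t>0$: setting its derivative to zero gives $\frac{pe^t}{(1-p)+pe^t}=q$, i.e.\ the optimal tilt $e^{t^\ast}=\frac{q(1-p)}{p(1-q)}$ (which is positive precisely because $q>p$). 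Substituting $t^\ast$ back and simplifying $(1-p)+pe^{t^\ast}=\frac{1-p}{1-q}$ collapses the exponent to $-[q\log\frac{q}{p}+(1-q)\log\frac{1-q}{1-p}]=-D(p+\lambda\|p)$, which is exactly \eqref{lg_d}. The strict inequality is not an issue: the non-degenerate Bernoulli makes the Markov step strict (equality would force $e^{tS_N}$ to be supported on a two-point set), so the cited result \cite{Che_AMS_52} applies verbatim.

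For \eqref{lg_d2} it suffices, in view of \eqref{lg_d}, to prove $D(p+\lambda\|p)\ge D(2\lambda\|\lambda)$ whenever $p\le\lambda$. I would establish this by monotonicity: define $g(p)=D(p+\lambda\|p)$ and show $g$ is strictly decreasing on the relevant range, so that $g(p)>g(\lambda)=D(2\lambda\|\lambda)$ for $p<\lambda$. Differentiating, and using $\partial_a D(a\|b)=\log\frac{a(1-b)}{b(1-a)}$ together with $\partial_b D(a\|b)=\frac{b-a}{b(1-b)}$ with $a=p+\lambda$, $b=p$ (both depending on $p$ with unit derivative), I get
\[ g'(p)=\log\frac{(p+\lambda)(1-p)}{p(1-p-\lambda)}-\frac{\lambda}{p(1-p)}. \]
The claim then reduces to the elementary inequality $\log\frac{a(1-b)}{b(1-a)}<\frac{a-b}{b(1-b)}$ for $b<a\le\tfrac12$. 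I would prove this by fixing $b$ and letting $\phi(a)$ denote the difference of the two sides: $\phi(b)=0$, and $\phi'(a)=\frac{1}{a(1-a)}-\frac{1}{b(1-b)}<0$ because $x\mapsto x(1-x)$ is increasing on $(0,\tfrac12]$, so that $a(1-a)>b(1-b)$ for $b<a\le\tfrac12$. Hence $\phi(a)<0$, giving $g'(p)<0$.

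The main obstacle is this monotonicity step, and specifically its range of validity. The derivative argument needs $a=p+\lambda\le\tfrac12$, which holds in the regime of interest where $\lambda$ is a small gate-error parameter (so $p+\lambda<2\lambda\le\tfrac12$); outside this regime $x(1-x)$ begins to decrease and $g'$ can change sign, so the clean monotonicity must be stated with the hypothesis $\lambda\le\tfrac14$ (equivalently, that the deviation target stays below $\tfrac12$). Everything else—the Chernoff optimization and the substitution simplifications—is routine algebra. As a consistency check one can expand $D(2\lambda\|\lambda)=(2\log2-1)\lambda+\mathcal{O}(\lambda^2)$, matching \eqref{expander_error_exp}, which both confirms the computation and shows the resulting exponent is positive for small $\lambda$.
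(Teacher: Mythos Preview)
Your proposal is correct and follows the same route as the paper: cite the Chernoff--Hoeffding inequality for \eqref{lg_d}, then obtain \eqref{lg_d2} from the monotonicity of $p\mapsto D(p+\lambda\|p)$. The paper merely asserts this monotonicity without justification, whereas you supply a derivative computation and correctly flag that it requires $p+\lambda\le\tfrac12$ (equivalently $\lambda\le\tfrac14$), a harmless hypothesis in context since $\lambda$ here is a small gate-error parameter.
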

\begin{proof}
The inequality~\eqref{lg_d} is the large deviation bound for binomial distribution and is presented in~\cite[page 502, Example 3]{Che_AMS_52}. Note that $D(p+\lambda\|p)$ is monotone non-increasing for $p\in(0,\lambda)$. When $p<\lambda$, we have $D(p+\lambda\|p)>D(2\lambda\|\lambda)$. Therefore, \eqref{lg_d} holds for $p<\lambda$.
\end{proof}
Then, Theorem~\ref{expander_probabilistic} follows from Theorem~\ref{expander_thm} and Lemma~\ref{Binomial_deviation}.

\begin{proof}[Proof of Theorem~\ref{expander_probabilistic}]
Using Theorem~\ref{expander_thm}, we know that if the error fraction in all stages is bounded by the inequality~\eqref{expander_condition1}, the final error fraction is at most $N\alpha_0$.

From Lemma~\ref{Binomial_deviation}, we know that
\begin{equation}
\begin{split}
   \Pr (\alpha_\text{and}>p_\text{and}+\lambda)&<\exp \left[ -D(p_\text{and}+\lambda\|p_\text{and})N \right] \\ &<\exp[-D(2\lambda\|\lambda)N],
\end{split}
\end{equation}
\begin{equation}
\begin{split}
  \Pr (\alpha_\text{xor}>p_\text{xor}+\lambda)&<\exp \left[ -D(p_\text{xor}+\lambda\|p_\text{xor})N \right]\\
  &<\exp[-D(2\lambda\|\lambda)N],
\end{split}
\end{equation}
\begin{equation}
\begin{split}
 \Pr (\alpha_\text{maj}>p_\text{maj}+\lambda)&<\exp \left[ -D(p_\text{maj}+\lambda\|p_\text{maj})N \right]\\
 &<\exp[-D(2\lambda\|\lambda)N].
\end{split}
\end{equation}
Setting $\lambda=\frac{\theta\alpha_0/2}{(d_s-1)+[d_c(1-R)+1]+1}$ as in the condition~\eqref{expander_prob_condition}, we have
\[\begin{split}
&(d_s-1){{p}_\text{and}}+\left[ d_c(1-R)+1 \right]{{p}_\text{xor}}+{{p}_\text{maj}}\\
&<(d_s-1)\lambda+\left[ d_c(1-R)+1 \right]\lambda+\lambda=\frac{\theta\alpha_0}{2}.\end{split}\]
Therefore,
\[\begin{split}
   &\Pr  \left( (d_s-1){\alpha_\text{and}}+\left[ d_c(1-R)+1 \right]{\alpha_\text{xor}}+{\alpha_\text{maj}}>\theta\alpha_0 \right) \\
  <&\Pr  \left( (d_s-1){\alpha_\text{and}}+\left[ d_c(1-R)+1 \right]{\alpha_\text{xor}}+{\alpha_\text{maj}}\right.\\
  &\left.\quad\quad>(d_s-1){{p}_\text{and}}+\left[ d_c(1-R)+1 \right]{{p}_\text{xor}}+{{p}_\text{maj}}+\frac{\theta\alpha_0}{2}\right) \\
  <&\Pr  ( (d_s-1){\alpha_\text{and}}>(d_s-1){{p}_\text{and}}+(d_s-1)\lambda)\\
  &+\Pr(\left[ d_c(1-R)+1 \right]{\alpha_\text{xor}}
  >\left[ d_c(1-R)+1 \right]{{p}_\text{xor}}\\
  &\qquad+\left[ d_c(1-R)+1 \right]\lambda)\\
  &+\Pr({\alpha_\text{maj}}>{p}_\text{maj}+\lambda)\\
  =&\Pr  \left(\alpha_\text{and}>p_\text{and} +\lambda\right) + \Pr \left( \alpha_\text{xor} >p_\text{xor}+\lambda\right)\\
  &+ \Pr \left( \alpha_\text{maj}>p_\text{maj}+\lambda\right)\\
 <&3\exp(-ND(2\lambda\|\lambda)),
\end{split}\]
where
\begin{equation}
\begin{split}
  & D(2\lambda\|\lambda)=2\lambda\log 2+(1-2\lambda)\log \frac{1-2\lambda}{1-\lambda} \\
 & =2\lambda\log 2-(1-2\lambda)\log \left( 1+\frac{\lambda}{1-2\lambda} \right) \\
 & =2\lambda\log 2-(1-2\lambda)\left( \frac{\lambda}{1-2\lambda}+\mathcal{O}({{\lambda^2}}) \right)\\
 & =\left( 2\log 2-1 \right)\lambda+\mathcal{O}({{\lambda^2}}).
\end{split}
\end{equation}
Since $(d_s-1){\alpha_\text{and}}+{\alpha_\text{maj}}<(d_s-1){\alpha_\text{and}}+\left[ d_c(1-R)+1 \right]{\alpha_\text{xor}}+{\alpha_\text{maj}}$, we also have
\[\Pr \left( (d_s-1){\alpha_\text{and}}+{\alpha_\text{maj}}> \theta\alpha_0\right)<3\exp \left( -D(2\lambda\|\lambda) N \right).\]
Therefore, using the union bound for the $L$ stages, the total error probability is upper bounded by
$P_{e}^\text{blk}<3L\exp\left( -D(2\lambda\|\lambda)N \right)$.
\end{proof}

\begin{remark}
The analysis of the PBF algorithm (see Appendix~\ref{worst_case}) still requires randomized code constructions. Another method to analyze the bit flipping algorithm is to use Expander codes (also see Appendix~\ref{worst_case}). However, hardware-friendly expander codes tend to be hard to construct and use in practice, while many hardware-friendly LDPC codes have been designed\cite{Kou_TIT_01,Mou_SPM_04,Tho_INP_03,Fos_TIT_04}. In fact, the ENCODED-T and ENCODED-F both have some advantages over the other, so none of them universally outperforms the other. On the one hand, ENCODED-T works for all regular LDPC codes, but it requires a tree-structure for its theoretical analysis to work. On the other hand, ENCODED-F does not require a tree-structure, and requires less redundancy than ENCODED-T (it does not need to maintain $d_v$ copies of the computation), but it requires the LDPC code to satisfy certain properties. More concretely, it requires that each single iteration of the simple Bit Flipping algorithm corrects a constant fraction of errors (say, 75\%) for any combination of less than some constant number of errors (say, 20 errors). This property is hard to verify in practice and only existence result can be obtained. This further makes it hard to say which one is better than the other. This is the reason that we keep both the density evolution analysis for general LDPC codes in Section~\ref{sketch_proof} under the assumption of large girth, and the PBF analysis for ENCODED-F.
\end{remark}

\textcolor{black}{The following converse result holds for all computation schemes. Although this converse result does not match with any of the achievable results listed above, it matches with an achievable result when a ``noiseless decoder'' is available (details will be provided in Section~\ref{nless_dec}) in the scaling of the target error probability $p_\text{tar}$. Thus, we believe the converse result captures the required computational complexity for the beginning stages of the linear transform computation.}

\begin{theorem}[Converse result]\label{low_bound_thm}
For Gate Model I with error probability $\epsilon$, maximum fan-in $D$, and linear transformation $\mathbf{r}=\mathbf{s}\cdot\mathbf{A}$ \textcolor{black}{with $\mathbf{A}$ having full row rank}, in order to achieve $P_e^{\text{blk}}$ smaller than $p_{\text{tar}}$, the number of operations required per bit is lower bounded as $\mathscr{N}_\mathrm{per\text{-}bit}\ge\frac{L\log 1/p_{\text{tar}}}{KD\log D/\epsilon}=\Omega(\frac{L\log 1/p_{\text{tar}}}{K\log 1/\epsilon})$.
\end{theorem}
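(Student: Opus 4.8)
The plan is a cut-and-counting argument. Because $\mathbf{A}$ has full row rank every input bit must be recoverable from the circuit output, and I will show that pushing $s_l$ to the output with error below $p_{\text{tar}}$ forces $s_l$ to fan out into $\Omega\!\left(\log(1/p_{\text{tar}})/\log(D/\epsilon)\right)$ noisy gates; summing these fan-outs over the $L$ inputs and charging them against the at most $D$ input wires of each gate then gives the bound. First I would use the register-removal reduction noted earlier for Gate Model I: replicating each gate so that it is activated only once turns the scheme into an equivalent feed-forward circuit of $\mathscr{N}_\mathrm{comp}$ noisy gates whose output is a deterministic function of $\mathbf{s}$ and of mutually independent noise bits (this is also the form to which one may append a noiseless decoder for free, so the bound covers that stronger model too). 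Since $\mathbf{A}$ has full row rank there is a fixed $\mathbf{B}$ with $\mathbf{s}=\mathbf{r}\mathbf{B}$, so $P_e^{\text{blk}}<p_{\text{tar}}$ forces $\Pr(\hat s_l\ne s_l)<p_{\text{tar}}$ for every $l$ (with $\hat{\mathbf{s}}=\hat{\mathbf{r}}\mathbf{B}$), and a Markov step over the remaining coordinates yields, for each $l$, a fixing of $\{s_j\}_{j\ne l}$ under which $s_l$ is uniform and $\Pr(\hat s_l\ne s_l)<2p_{\text{tar}}$.

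Fix such an $l$ and fixing, and let $g_1,\dots,g_{f_l}$ be the gates fed directly by the wire $s_l$. Every directed path from $s_l$ to the output starts with one of these gates, so $\{g_1,\dots,g_{f_l}\}$ is a cut and, by the data-processing inequality, the output can estimate $s_l$ no better than $(y_{g_1},\dots,y_{g_{f_l}})$ can. Revealing the $y_{g_i}$ in topological order and conditioning on the remaining noise, the value feeding each $g_i$ is, given the past, a deterministic function of $s_l$, so $y_{g_i}$ is a flip-$\epsilon$ corruption of that function; hence $(y_{g_1},\dots,y_{g_{f_l}})$ acts as a sequence of at most $f_l$ independent $\mathrm{BSC}(\epsilon)$ observations of $s_l$, and the optimal estimate of $s_l$ from it has error at least of order $(c\epsilon)^{f_l}$ for an absolute constant $c$ (the majority-decoding error bound, after discounting AND/majority gates that pass $s_l$ only conditionally). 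Forcing this below $2p_{\text{tar}}$ gives $f_l=\Omega\!\left(\log(1/p_{\text{tar}})/\log(1/\epsilon)\right)$.

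Summing, the number of (input, gate) wires is $\sum_{l=1}^{L}f_l=\Omega\!\left(L\log(1/p_{\text{tar}})/\log(1/\epsilon)\right)$, while it is at most $D\,\mathscr{N}_\mathrm{comp}$ since each gate reads at most $D$ wires; dividing by $K$ and replacing $\log(1/\epsilon)$ by the larger $\log(D/\epsilon)$ yields $\mathscr{N}_\mathrm{per\text{-}bit}=\Omega\!\left(L\log(1/p_{\text{tar}})/(KD\log(D/\epsilon))\right)$, with the precise constant coming out by tracking the constants in the $\mathrm{BSC}$-decoding estimate. The main obstacle I anticipate is the conditional step of the second paragraph: making rigorous that, after conditioning on the other noise and the earlier fan-out outputs, each fan-out gate really does act as a fresh $\mathrm{BSC}(\epsilon)$ observation of $s_l$, and bounding the probability that an AND or majority gate is ``uninformative'' about $s_l$ for the current conditioning. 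A secondary point is the degenerate case where an output coordinate equals a wired-in input (so $f_l=0$); such matrices must be excluded, which is harmless since the bound is non-trivial—exceeding the $\Theta(L)$ cost of merely writing the output—only in a regime that such trivial schemes cannot reach.
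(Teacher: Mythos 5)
Your proposal is correct, and its counting skeleton --- lower-bound the fan-out of each input into the first layer of noisy gates, sum over the $L$ inputs, and charge the resulting input wires against the fan-in bound $D$ --- is exactly the paper's. Where you genuinely differ is in how the per-input bound $f_l=\Omega\!\left(\log(1/p_{\text{tar}})/\log(1/\epsilon)\right)$ is established. The paper first invokes a simulation lemma (after Pippenger) that moves the gate noise onto the input wires, each wire flipping independently with probability $\epsilon/D$; the event that all $T_l$ wires carrying $s_l$ flip has probability $(\epsilon/D)^{T_l}$, is independent of everything else, and makes the circuit behave exactly as if run on the flipped input, so full row rank immediately forces a block error. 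You instead keep the noise on the gates, treat the first-layer gates as a cut, apply data processing, and lower-bound the binary hypothesis-testing error from $f_l$ sequentially conditioned $\mathrm{BSC}(\epsilon)$ observations. The obstacle you flag does close: revealing the $y_{g_i}$ in topological order, every non-$s_l$ input of $g_i$ is deterministic given the conditioning, so each conditional likelihood ratio is at least $\epsilon/(1-\epsilon)$ (uninformative AND/majority gates contribute ratio $1$, which only helps), giving MAP error at least $\tfrac{1}{2}\bigl(\epsilon/(1-\epsilon)\bigr)^{f_l}$. What each approach buys: the paper's coupling is shorter and yields the clean constant in the theorem (at the price of the weaker per-wire flip probability $\epsilon/D$, hence the $\log(D/\epsilon)$ denominator), whereas your hypothesis-testing route gives the slightly stronger denominator $\log(1/\epsilon)$ and extends verbatim to the setting with a free noiseless decoder appended. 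Both arguments share the degenerate case you note (an input wired directly into an output register, so $f_l=0$): the paper silently assumes every input enters the circuit through a noisy gate, so your explicit caveat is, if anything, more careful than the original.
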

\begin{proof}
See Appendix~\ref{Lower_Bound}.
\end{proof}

\subsection{Simulation Results for ENCODED Techniques}\label{simulation_sec}
We present simulation results in Fig.~\ref{simulation_figure}, which shows the variation of bit error probability during the process of implementing Algorithm~\ref{alg1}. In the simulation, we generate random binary $\mathbf{A}$ matrices where each entry takes value one with probability $1/2$. The x-axis is the computing steps from the bottom to the top of the noisy computing tree structure. The y-axis is the bit error probability. As we have mentioned, during the entire computing process, computation introduces errors and decoding suppresses errors. Thus, the bit error probability oscillates between two limits. This is exactly the expected behavior as shown in Fig.~\ref{Encoder_tree}(b).

This simulation uses a randomly generated $(6,12)$-regular LDPC code of length 1200. The systematic generator matrix $\mathbf{G}$ is computed by solving the equation $\mathbf{G}\mathbf{H}^\top=\mathbf{0}$ in the binary field using Gaussian elimination, where $\mathbf{H}$ is the parity check matrix. The tree in Algorithm~\ref{alg1} is set to be a two-branch tree, i.e., $d_T=2$. The failure probability values of different unreliable gates are set to be the same as the threshold value computed using the condition \eqref{p0satis} in Theorem 1: $p_\text{maj}=0.001$, $p_\text{xor}=0.00026$ and $p_\text{and}=0.002$. We still assume that each operation in the decoding process is done by a single unreliable gate, and all gates fail independently of each other. Notice that the error probability lower limit is just above $p_\text{maj}=0.001$, which is consistent with our analysis in Section~\ref{sketch_proof}. The bit error probability after each decoding iteration should be confined between $p_\text{maj}$ and $p_\text{maj}+\frac{1}{d_T}p_\text{thr}$ in theory (see Lemma~\ref{lemma7}).

It is interesting to note that the computation scheme works at fairly practical values of node degrees ($d_v=6$) and blocklengths ($N=1200$). The target error probabilities are typically much smaller, but so are gate-error probabilities. \textcolor{black}{Moreover, the scheme works even though the choice of the tree-width $d_T$ do \textit{not} satisfy the constraints~\eqref{D_satisfy}.} These suggest that the bounds in Theorem~\ref{Main_thm} are conservative. The moderate blocklength of the code suggests that the scheme could be applied in practice, but a deeper investigation is needed which is beyond the scope of this work.

We also use simulations to compare ENCODED and repetition-based schemes. In particular, we provide a comparison between ENCODED-F and a particular repetition-based scheme called ``distributed voting scheme'' \cite{Had_TIT_05}, that is designed for $p_\text{maj}>0$. This method repeats not only the computation part, but also the majority voting part of the repetition-based circuit. The illustration of the distributed voting scheme is shown in Fig.~\ref{Distributed_majority}. In this way, we can compare the (repetition-coding based) distributed voting scheme with ENCODED that both use noisy gates.

\begin{figure}
  \centering
  \includegraphics[scale=0.55]{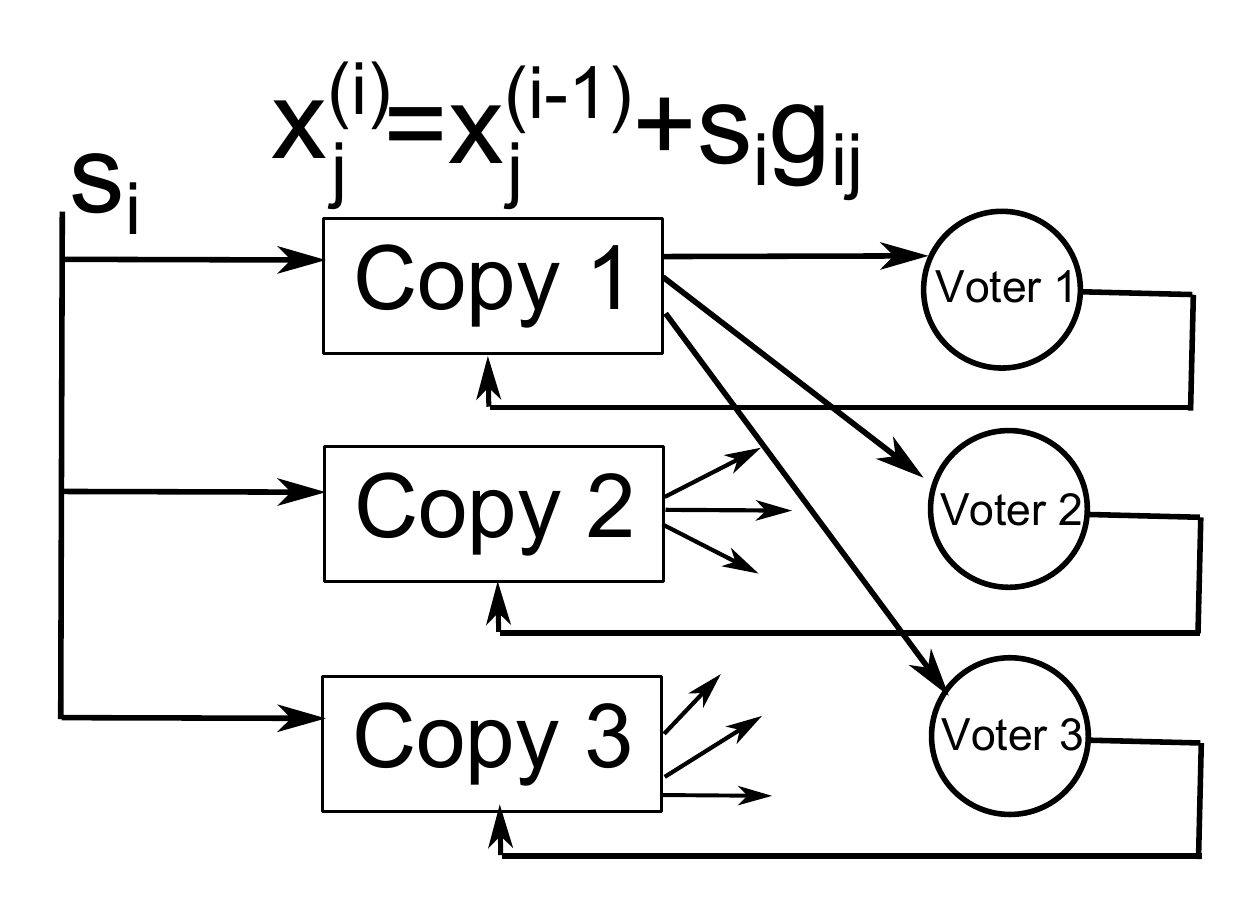}\\
  \caption{This is the illustration of the 3-time distributed voting scheme for computing an inner product $\mathbf{s}\cdot\mathbf{a}_j=s_1a_{1j}+s_2a_{2j}+\dots s_La_{Lj}$, where $\mathbf{s}$ is the input to the linear transform $\mathbf{s}\mathbf{A}$, and $\mathbf{a}_j$ is the $j$-th column of $\mathbf{A}$. The computation is divided into $L$ stages. In the $i$-th stage, the distributed voting scheme computes $x_j^{(i)}=x_j^{(i-1)}+s_ig_{ij}$ for three times using three sets of AND-gates and XOR-gates, uses three noisy majority-gates (which are called voters in \cite{Had_TIT_05}) to compute three copies of the majority votes. Then, the output of each majority value is sent to the corresponding copy for the computation in the next stage.}\label{Distributed_majority}
\end{figure}

The performance comparison is shown in Fig.~\ref{NLT_Comparison_nontree}. In the distributed majority scheme, we use three-time repetition or four-time repetition. For ENCODED-F, we set $d_v=4$, $d_c=8$, $d_s=8$, $K=2000$, $L=2100$, $N=4000$. We set $p_\text{and}=0.000125$, $p_\text{maj}=0.0005$ and $p_\text{xor}=0.001$. We set these error parameters because we assume that the error probability of each gate is proportional to its fan-in number (we use 2-input AND-gates, 4-input MAJ-gates and 8-input XOR gates). Note that the number of compute-and-correct stages in ENCODED-F should be $\left\lceil\frac{L}{d_s-1}\right\rceil=300$. In one compute-and-correct stage, we need $N$ XOR-operations of fan-in $d_s=8$ for binary addition, $P$ XOR-operations of fan-in $d_c=8$ for parity computation and $N$ MAJ-operations of fan-in $d_v=4$ for majority computation. In all $300$ stages, we also need $NL$ AND-operations of fan-in $2$. Therefore the number of operations per output bit for ENCODED-F is
\begin{equation}
  \mathscr{N}^{\text{ENC}}_\mathrm{XOR-8, per\text{-}bit}=\frac{N+P}{K}\left\lceil \frac{L}{d_s-1}\right\rceil
  =\frac{3}{7}L,
\end{equation}
\begin{equation}
  \mathscr{N}^{\text{ENC}}_\mathrm{MAJ-4,per\text{-}bit}=\frac{N}{K}\left\lceil \frac{L}{d_s-1}\right\rceil
  =\frac{2}{7}L,
\end{equation}
\begin{equation}
  \mathscr{N}^{\text{ENC}}_\mathrm{AND-2,per\text{-}bit}=\frac{NL}{K}=2L.
\end{equation}
In the distributed majority voting scheme with repetition time $t_m$ ($t_m$ can be 3 or 4 when the majority gate with fan-in $4$ is used), the number of operations per output bit is
\begin{equation}
  \mathscr{N}^{\text{Rep}}_\mathrm{XOR-8, per\text{-}bit}=t_m\left\lceil \frac{L}{d_s-1}\right\rceil
  =\frac{t_m}{7}L,
\end{equation}
\begin{equation}
  \mathscr{N}^{\text{Rep}}_\mathrm{MAJ-t_m,per\text{-}bit}=t_m\left\lceil \frac{L}{d_s-1}\right\rceil
  =\frac{t_m}{7}L,
\end{equation}
\begin{equation}
  \mathscr{N}^{\text{Rep}}_\mathrm{AND-2,per\text{-}bit}=t_m L.
\end{equation}
Therefore, when the repetition time $t_m$ is 3 or 4, the number of operations per output bit for ENCODED-F is always smaller than the number of operations per output bit for the distributed majority voting scheme.
\begin{figure}
  \centering
  \includegraphics[scale=0.5]{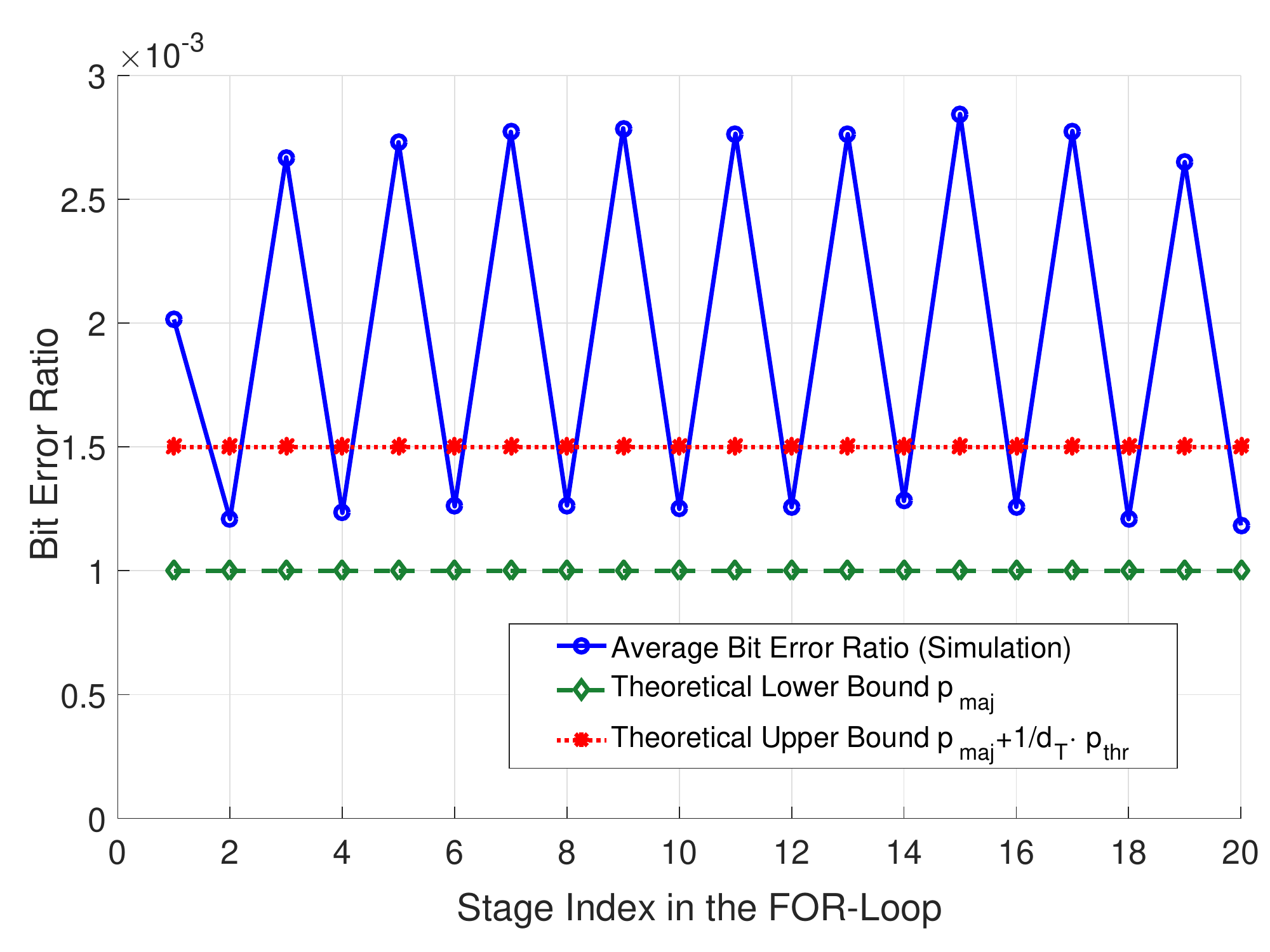}\\
  \caption{Simulated performance of ENCODED-T using a (6,12) regular LDPC with branch width $d_T=2$, and Gallager-B threshold value $b=3$. The gate error probabilities are set to $p_\text{maj}=0.001$, $p_\text{xor}=0.00026$ and $p_\text{and}=0.002$. The code length $N=1200$, the size of the linear transform satisfies $L=K=600$. The theoretical upper bound and lower bound are obtained in Theorem 1. Note that the bounds on the bit error ratio are for the error probability after the decoding stages, so they only apply for the even stages.}\label{simulation_figure}
\end{figure}
\begin{figure}
  \centering
  \includegraphics[scale=0.45]{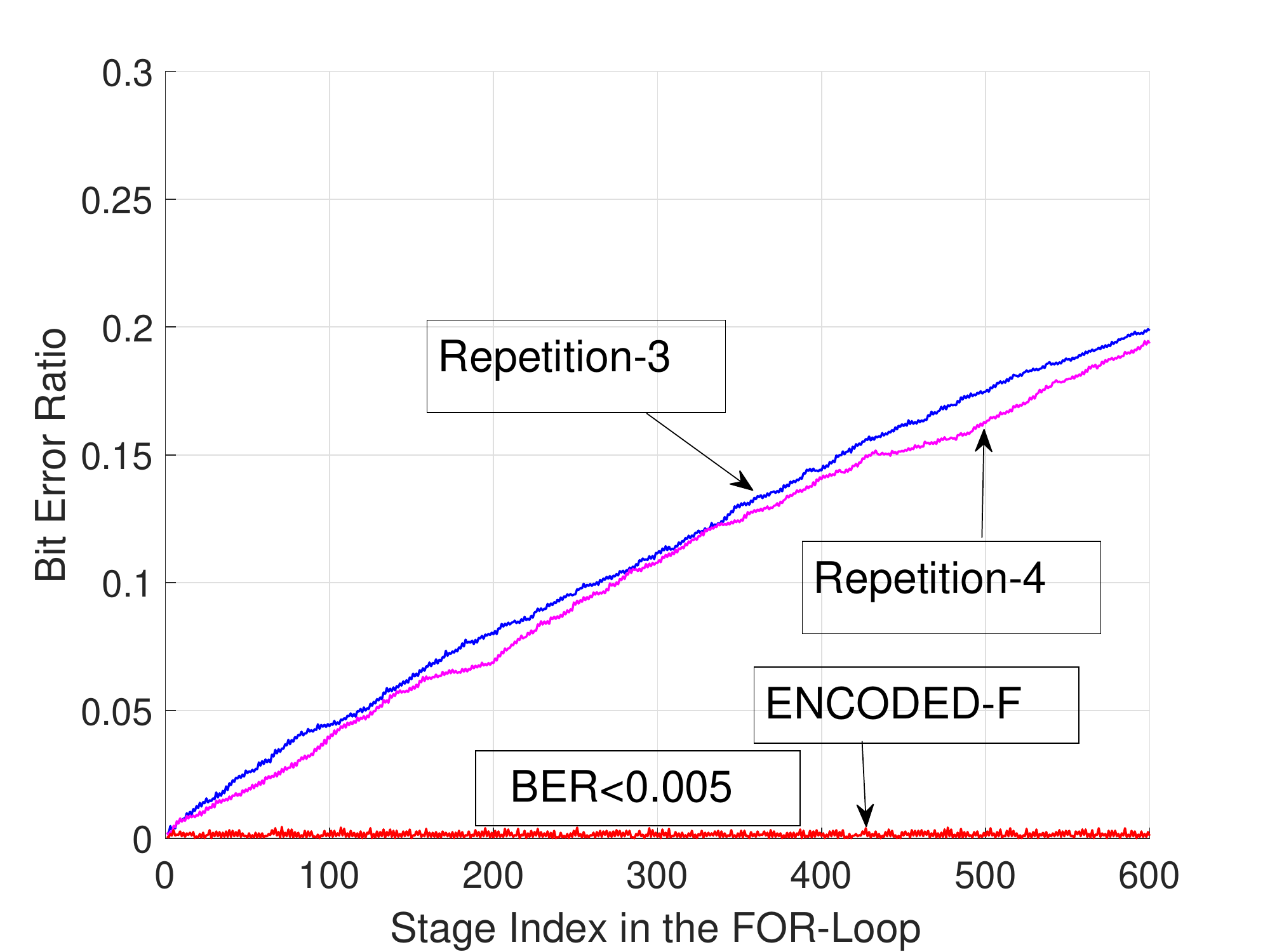}\\
  \caption{In this figure, a simulation result of ENCODED-F using a (4,8) regular LDPC with $d_s=8$ is shown. The code length $N=4000$, the size of the linear transform satisfies $L=2100$ and $K=2000$. A comparison with the distributed majority voting schemes with repetition time 3 and 4 is also shown. The gate error probabilities are set to $p_\text{and}=0.000125$, $p_\text{maj}=0.0005$ and $p_\text{xor}=0.001$ in both ENCODED-F and the distributed majority voting scheme.}\label{NLT_Comparison_nontree}
\end{figure}
\subsection{Theoretical Comparison with Repetition Coding}\label{sec:compare}

In this section, we provably show the advantage of ENCODED through theoretical analysis. We also provide a result in an online version \cite{yang2015computing} of this paper which shows ENCODED beats repetition-based techniques in scaling sense.

In this paper, although we obtain results on the number of operations for ENCODED in Theorem~\ref{Main_thm}-\ref{expander_probabilistic}, the results are biased for the comparison between ENCODED and repetition-based schemes, because the number of operations do not take into account the gate fan-in. Therefore, to compare the complexity of operations with different fan-in, we define a new concept called ``effective number of operations''.  We assume that the ``effective number of operations'' for an operation with fan-in $c$ is $\mathscr{N}_\text{c-fan-in}=c$ (the analysis for a different $\mathscr{N}_\text{c-fan-in}$ can be done similarly). We show that if we consider the problem of \textbf{find a binary linear transform scheme that achieves target error probability $p_\text{tar}=5.1\cdot 10^{-4}$ using only noisy gates with $\max(p_\text{xor},p_\text{maj},p_\text{and})<1.3\cdot 10^{-6}$, the effective number of operations of ENCODED-F is smaller than that of distributed majority voting, provided that the size of the linear transform satisfies $N=2K>9.85\cdot 10^7$, and $L>\frac{p_\text{tar}}{p_\text{and}}$}. We choose these parameters only to show that ENCODED can provably beat repetition-based schemes in situations when the parameters are not absurdly large, and hence the theoretical analysis here has potential to provide practical insight. Here $\max(p_\text{xor},p_\text{maj},p_\text{and})$ is interpreted as the maximum error probability over all types of different gates, which allows the same type of gates (i.e., MAJ-gates) with different fan-in to have different error probabilities.

\subsubsection{Counting the effective number of operations}

First, we compare the effective number of operations in both schemes. For ENCODED-F, we use a $(9,18)$ code. To make the comparison fair, we allow the distributed majority voting scheme to group several stages into one stage as well. Recall that we use $d_s$ to denote the number of stages that ENCODED-F groups into one stage. Therefore, we use $d_s'$ to denote the number of stages that distributed majority voting groups into one stage. In general, $d_s\neq d_s'$.

We compare ENCODED-F using (9,18) LDPC codes with distributed majority voting with three-time repetition. We show when
\begin{equation}
d_s>14,
\end{equation}
the ``effective'' number of operations in ENCODED-F is less than that of distributed majority voting. \textbf{Note that $d_s'$ can be arbitrary}.

The number of compute-and-correct stages in ENCODED-F is $\left\lceil\frac{L}{d_s-1}\right\rceil$, and that of distributed majority voting is $\left\lceil\frac{L}{d_s'-1}\right\rceil$. For the ease of analysis, assume $L$ is a multiple of both $d_s-1$ and $d_s'-1$. For ENCODED-F, in each compute-and-correct stage, we need $N$ XOR-operations of fan-in $d_s$ for binary addition, $P$ XOR-operations of fan-in $d_c$ and $N$ MAJ-operations of fan-in $d_v$ for LDPC decoding. In all compute-and-correct stages, the overall number of AND-operations of fan-in 2 is $NL$. Then,
\begin{equation}\label{eqn:enc_xords}
  \mathscr{N}^{\text{ENC}}_\mathrm{XOR-d_s, per\text{-}bit}=\frac{N}{K}\left\lceil \frac{L}{d_s-1}\right\rceil
  =\frac{2}{d_s-1}L,
\end{equation}
\begin{equation}\label{eqn:enc_xordc}
  \mathscr{N}^{\text{ENC}}_\mathrm{XOR-d_c, per\text{-}bit}=\frac{P}{K}\left\lceil \frac{L}{d_s-1}\right\rceil
  =\frac{1}{d_s-1}L,
\end{equation}
\begin{equation}\label{eqn:enc_majdc}
  \mathscr{N}^{\text{ENC}}_\mathrm{MAJ-d_v,per\text{-}bit}=\frac{N}{K}\left\lceil \frac{L}{d_s-1}\right\rceil
  =\frac{2}{d_s-1}L,
\end{equation}
\begin{equation}\label{eqn:enc_and2}
  \mathscr{N}^{\text{ENC}}_\mathrm{AND-2,per\text{-}bit}=\frac{NL}{K}=2L.
\end{equation}

In the distributed majority voting scheme with repetition time 3, the number of operations per output bit is
\begin{equation}\label{eqn:rep_xords}
  \mathscr{N}^{\text{Rep}}_\mathrm{XOR-d_s', per\text{-}bit}=3\left\lceil \frac{L}{d_s'-1}\right\rceil
  =\frac{3}{d_s'-1}L,
\end{equation}
\begin{equation}\label{eqn:rep_maj3}
  \mathscr{N}^{\text{Rep}}_\mathrm{MAJ-3,per\text{-}bit}=3\left\lceil \frac{L}{d_s'-1}\right\rceil
  =\frac{3}{d_s'-1}L,
\end{equation}
\begin{equation}\label{eqn:rep_and2}
  \mathscr{N}^{\text{Rep}}_\mathrm{AND-2,per\text{-}bit}=3 L.
\end{equation}

Therefore, from \eqref{eqn:enc_xords}-\eqref{eqn:enc_and2}, for ENCODED-F with $d_c=18$ and $d_v=9$, the effective number of operations is
\begin{equation}
\begin{split}
\mathscr{N}^\text{ENC}_\text{eff}&=d_s\cdot \frac{2}{d_s-1}L+ d_c\cdot \frac{1}{d_s-1}L+d_v\cdot \frac{2}{d_s-1}L+2\cdot 2L\\
&=\frac{36+2d_s}{d_s-1}L+4L=\frac{38}{d_s-1}L+6L.
\end{split}
\end{equation}
From \eqref{eqn:rep_xords} to \eqref{eqn:rep_and2}, for distributed majority voting, the effective number of operations is
\begin{equation}
\begin{split}
\mathscr{N}^\text{rep}_\text{eff}&=d_s'\cdot \frac{3}{d_s'-1}L+3\cdot \frac{3}{d_s'-1}L+2\cdot 3L\\
&=\frac{3d_s'+9}{d_s'-1}L+6L=\frac{12}{d_s'-1}L+9L>9L.
\end{split}
\end{equation}
Therefore, when $d_s>14$,
\begin{equation}
\mathscr{N}^\text{ENC}_\text{eff}<\frac{38}{13}L+6L<3L+6L=9L<\mathscr{N}^\text{rep}_\text{eff}.
\end{equation}

\subsubsection{Analyzing the probability of error}

Now, we analyze the error probability of ENCODED-F for $d_s=14$, $d_c=18$ and $d_v=9$. From Lemma \ref{Bur_lemma} in Appendix~\ref{worst_case}, using almost all codes in a ($d_v,d_c$)-regular LDPC random code ensemble with $d_v>4$ and $N$ large enough, after one iteration of the PBF algorithm, one can reduce the number of errors by at least $\theta \alpha_0 N$ for any $\alpha_0 N$ worst-case errors if $\alpha_0$ and $\theta$ are small enough. That is, using a $(d_v,d_c)$-regular LDPC code, the number of errors after one iteration of noiseless PBF algorithm will be smaller than $\alpha_0\cdot(1-\theta)$. Recall that this is the condition (A.3) that we require on the utilized LDPC code. \textcolor{black}{In Example 1 in Appendix D, for the $(9,18)$-regular LDPC code, we computed numerically the threshold value of $\alpha_0$ for $\theta=0.15$ and obtained $\alpha_0=5.1\cdot 10^{-4}$. We also obtained finite-length bounds which state that there exist $(9,18)$-regular LDPC codes with length $N=50,000$ that can reduce the number of errors by $15$\% for an arbitrary pattern of at most 20 errors, which corresponds to the case when $\alpha_0=4\cdot 10^{-4}$ and $\theta=0.15$.}

From Theorem 3, using the (9,18) code, when the maximum gate error probability $\epsilon=\max(p_\text{xor},p_\text{maj},p_\text{and})$ satisfies the condition
\begin{equation}
\begin{split}
\epsilon<\lambda=&\frac{\theta\alpha_0/2}{(d_s-1)+\left[d_c(1-R)+1 \right]+1}\\
=&\frac{\theta\alpha_0/2}{(14-1)+\left[ 18(1-\frac{1}{2})+1 \right]+1}=\frac{\theta\alpha_0}{54},
\end{split}
\end{equation}
ENCODED-F has bounded final error fraction with high probability, which is
\begin{equation}
 1-P_{e}^\text{blk}>1-3L\exp\left( -D(2\lambda\|\lambda)N \right),
\end{equation}
where $\lambda=\frac{\theta\alpha_0}{54}$ and $D(2\lambda\|\lambda)=\left( 2\log 2-1 \right)\lambda+\mathcal{O}({{\lambda}^{2}})$.

\textcolor{black}{In particular, if we choose $\epsilon=\frac{1}{60}\theta\alpha_0<\frac{1}{54}\theta\alpha_0=\lambda$, the final error fraction satisfies $\delta_e^{\text{frac}}<\alpha_0=\frac{60}{\theta}\cdot\epsilon$ with probability $1-3L\exp\left( -D(2\lambda\|\lambda)N \right)$. As we have mentioned, for $\theta=0.15$, we obtain $\alpha_0=5.1\cdot 10^{-4}$. Therefore, when the gate error probabilities satisfy $\max(p_\text{xor},p_\text{maj},p_\text{and})=\epsilon=\frac{1}{60}\theta\alpha_0=0.0043\alpha_0=1.3\cdot 10^{-6}$, the obtained error probability is smaller than $\alpha_0=60/\theta_0\cdot\epsilon=400\epsilon=5.1\cdot 10^{-4}$ with probability $1-3L\exp\left( -D(2\lambda\|\lambda)N \right)$, which is approximately 1 with reasonably large $N$, which can be guaranteed\footnote{We believe that further optimization in code design can provide techniques for error suppression for even smaller value of $N$.} if $N>\frac{50}{D(2\lambda\|\lambda)}\approx \frac{50}{(2\log 2-1)\cdot\lambda}=\frac{50}{(2\log 2-1)\cdot \frac{1}{60}\theta\alpha_0}=\frac{5.02\cdot 10^4}{\alpha_0}=9.85\cdot10^7$. }

Therefore, if we consider the problem ``find a binary linear transform scheme that achieves target error probability $p_\text{tar}=\alpha_0=5.1\cdot 10^{-4}$ using only noisy gates with $\max(p_\text{xor},p_\text{maj},p_\text{and})<1.3\cdot 10^{-6}$'', ENCODED-F has smaller ``effective number of operations'' than that of distributed majority voting. Additionally, one-time repetition or two-time repetition cannot obtain $p_\text{tar}=\alpha_0=5.1\cdot 10^{-4}$ when $L$ is reasonably large so that $\frac{1}{2}[1-(1-2p_\text{and})^L]\approx Lp_\text{and}>\alpha_0$. Thus, we conclude that ENCODED-F beats repetition-based schemes under this circumstance. Here, we acknowledge that the problem parameters (such as $\max(p_\text{xor},p_\text{maj},p_\text{and})<1.3\cdot 10^{-6}$ and $N>9.85\cdot10^7$) are chosen to show that the theoretical analysis works even when the parameter sizes are not extremely large, and thus the theoretical analysis technique has the potential to provide practical insight.

\section{Computing a Linear Transformation Reliably and Energy-Efficiently with Voltage Scaling}\label{vs}
In this section, we consider unreliable gates with tunable failure probability~\cite{Ernst_Micro_03} when {supply voltage, and hence energy consumed by gates, can be adjusted to attain a desired gate-reliability}. To model this property within Gate Model I in~\eqref{noisy_gate}, {we assume} that the added noise $z_g\sim \text{Bernoulli}(\epsilon_g(E_g))$, in which $\epsilon_g(E_v)$ is a function that depends on the supply energy $E_v$. We assume that $E_v$ is identical for all gates at any stage of the computation, while it can vary across stages. Intuitively, $\epsilon_g(\cdot)$ should be a monotonically decreasing function, since the error probability should be smaller if more energy is used. Suppose the energy-reliability tradeoff functions of AND-gates, XOR-gates and majority-gates are $\epsilon_\text{and}(\cdot)$, $\epsilon_\text{xor}(\cdot)$ and $\epsilon_\text{maj}(\cdot)$ respectively. Then, the failure probability of these three types of gates are $p_\text{and}=\epsilon_\text{and}(E_v)$, $p_\text{xor}=\epsilon_\text{xor}(E_v)$ and $p_\text{maj}=\epsilon_\text{maj}(E_v)$.

\subsection{Uncoded Matrix Multiplication vs ENCODED-T}\label{vs_static}
In this section, we compare the required energy for ENCODED-T with that for \textquoteleft uncoded\textquoteright~matrix multiplication $\mathbf{r}=\mathbf{s}\mathbf{A}$, where the circuit voltage is maintained high to ensure overall error probability is smaller than target error probability. The uncoded matrix multiplication is how almost all circuits today operate. Here, we only provide a scaling sense comparison to show the advantage of ENCODED techniques.

\begin{proposition}
When $\max\{\epsilon_\text{and}(E_v),\epsilon_\text{xor}(E_v)\}<\frac{1}{2L-2}$, to achieve bit error probability $P_e^\text{bit}<p_\text{tar}$, the energy consumption per output bit is $\Omega(L\cdot \max\{\epsilon_\text{and}^{-1}(\frac{2p_\text{tar}}{L}),\epsilon_\text{xor}^{-1}(\frac{2p_\text{tar}}{L})\})$ for the uncoded matrix multiplication scheme, while that for ENCODED-T is { $\mathcal{O}\left(\frac{LN}{K}\max\{\epsilon_\text{maj}^{-1}(\frac{1}{2}p_\text{tar}),\epsilon_\text{xor}^{-1}(\frac{1}{2}p_\text{tar}),\epsilon_\text{and}^{-1}(\frac{1}{2}p_\text{tar})\}\right)$ and $\Omega\left(\frac{LN}{K}\epsilon_\text{maj}^{-1}(p_\text{tar})\right)$.}
\end{proposition}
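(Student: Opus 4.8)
\emph{Overview.} The statement bundles three claims: a lower bound for uncoded matrix multiplication, and matching upper and lower bounds for ENCODED-T. I would prove each in turn, using throughout that under \emph{static} voltage scaling a single supply energy $E_v$ fixes the triple $(p_\text{and},p_\text{xor},p_\text{maj})=(\epsilon_\text{and}(E_v),\epsilon_\text{xor}(E_v),\epsilon_\text{maj}(E_v))$ for every gate in the circuit.

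\emph{Uncoded lower bound.} In the uncoded scheme each output bit $r_k$ is computed by its own inner-product circuit: $w_k$ noisy AND gates (with $w_k\le L$ the weight of column $k$ of $\mathbf{A}$) feeding a $w_k$-input noisy XOR tree, so $\mathscr{N}_\mathrm{per\text{-}bit}=\Theta(L)$ once $\mathbf{A}$ has columns of weight $\Theta(L)$ (merging $\Theta(L)$ inputs through fan-in-$\le D$ gates already forces $\Omega(L)$ gates). By Lemma~\ref{idp_odd}, $\Pr(\hat r_k\neq r_k)=\frac12\big[1-(1-2p_\text{and})^{w_k}(1-2p_\text{xor})^{w_k-1}\big]\ge\frac12\big[1-(1-2p_\text{and})^{w_k}\big]$. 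The hypothesis $\max\{\epsilon_\text{and}(E_v),\epsilon_\text{xor}(E_v)\}<\frac{1}{2L-2}$ forces $(w_k-1)p_\text{and}<\frac12$, which makes the binomial series of $(1-2p_\text{and})^{w_k}$ alternating with strictly decreasing terms, so $(1-2p_\text{and})^{w_k}\le 1-2w_kp_\text{and}+2w_k(w_k-1)p_\text{and}^2$ and hence $\Pr(\hat r_k\neq r_k)>\frac12 w_kp_\text{and}$; the analogous estimate on the XOR tree gives $\Pr(\hat r_k\neq r_k)>\frac12(w_k-1)p_\text{xor}$. Averaging over $k$, $P_e^{\text{bit}}<p_\text{tar}$ forces $\epsilon_\text{and}(E_v)$ and $\epsilon_\text{xor}(E_v)$ to be $O(p_\text{tar}/L)$; monotonicity of the decreasing inverses then gives $E_v=\Omega\big(\max\{\epsilon_\text{and}^{-1}(\frac{2p_\text{tar}}{L}),\epsilon_\text{xor}^{-1}(\frac{2p_\text{tar}}{L})\}\big)$, and multiplying by $\mathscr{N}_\mathrm{per\text{-}bit}=\Theta(L)$ gives the claimed energy lower bound.

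\emph{ENCODED-T upper bound.} Take $E_v=\max\{\epsilon_\text{and}^{-1}(\frac12 p_\text{tar}),\epsilon_\text{xor}^{-1}(\frac12 p_\text{tar}),\epsilon_\text{maj}^{-1}(\frac12 p_\text{tar})\}$, so that each of $p_\text{and},p_\text{xor},p_\text{maj}$ is at most $\frac12 p_\text{tar}$. In the regime where $\frac12 p_\text{tar}$ lies below the constant thresholds $p_\text{thr}$, $\frac{d_T+1}{d_c}p_\text{thr}$, $\frac{d_T+1}{d_T}p_\text{thr}$ of~\eqref{p0satis}, monotonicity makes these error probabilities satisfy~\eqref{p0satis}; picking an LDPC code meeting (A.1), (A.2), and~\eqref{D_satisfy}, with $d_T$ chosen large enough that $p_\text{thr}/d_T<\frac12 p_\text{tar}$, Theorem~\ref{Main_thm} yields $P_e^{\text{bit}}\le p_\text{maj}+\frac1{d_T}p_\text{thr}<p_\text{tar}$ with $\mathscr{N}_\mathrm{per\text{-}bit}=\Theta(LN/K)$. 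Hence energy per bit equals $\mathscr{N}_\mathrm{per\text{-}bit}\cdot E_v=\mathcal{O}\big(\frac{LN}{K}\max\{\epsilon_\text{maj}^{-1}(\frac12 p_\text{tar}),\epsilon_\text{xor}^{-1}(\frac12 p_\text{tar}),\epsilon_\text{and}^{-1}(\frac12 p_\text{tar})\}\big)$.

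\emph{ENCODED-T lower bound and the main obstacle.} For the matching lower bound I would invoke a last-gate argument: in ENCODED-T the final value of every output variable node is produced by a noisy majority gate, so $\hat r_k=h_k(\cdot)\oplus z$ with $z\sim\text{Bernoulli}(p_\text{maj})$ independent of the inputs (Gate Model I) and $p_\text{maj}=\epsilon_\text{maj}(E_v)<\frac12$; writing $q_k=\Pr(h_k(\cdot)\neq r_k)$ gives $\Pr(\hat r_k\neq r_k)=p_\text{maj}+q_k(1-2p_\text{maj})\ge p_\text{maj}$, hence $P_e^{\text{bit}}\ge p_\text{maj}$. So $P_e^{\text{bit}}<p_\text{tar}$ forces $\epsilon_\text{maj}(E_v)<p_\text{tar}$, i.e. $E_v>\epsilon_\text{maj}^{-1}(p_\text{tar})$, and since static scaling applies this $E_v$ to all $\mathscr{N}_\mathrm{comp}=\Theta(LN)$ operations, the energy per bit is $\Omega\big(\frac{LN}{K}\epsilon_\text{maj}^{-1}(p_\text{tar})\big)$. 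The binomial estimates and the monotone-inverse bookkeeping are routine; the step needing the most care is the upper bound, where the additive constant $p_\text{thr}/d_T$ in Theorem~\ref{Main_thm} must be absorbed into $p_\text{tar}$ — this is what forces the fan-in $D$ (and, through~\eqref{D_satisfy}, the code length $N$) to be large relative to $1/p_\text{tar}$, and it is essentially the only place the statement constrains the code; an alternative is to run $\Theta(1)$ extra Gallager-B iterations at the root, driving the output error down to the density-evolution fixed point $p_\text{lim}\approx p_\text{maj}$ (see Remark~\ref{remark41}), at the price of a slightly larger girth requirement. One should also note that the uncoded lower bound is meaningful only for matrices $\mathbf{A}$ with $\Theta(L)$-weight columns; for sparse $\mathbf{A}$ the uncoded cost is correspondingly smaller.
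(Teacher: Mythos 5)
Your proposal is correct and follows essentially the same route as the paper: the uncoded lower bound via Lemma~\ref{idp_odd} and the truncated binomial estimates under $\max\{p_\text{and},p_\text{xor}\}<\frac{1}{2L-2}$, the upper bound by setting all gate error probabilities to $\frac{1}{2}p_\text{tar}$ and invoking Theorem~\ref{Main_thm} so that $p_\text{maj}+\frac{1}{d_T}p_\text{thr}<p_\text{tar}$, and the lower bound for ENCODED-T from the last-gate effect forcing $\epsilon_\text{maj}(E_v)<p_\text{tar}$ on all $\Theta(LN/K)$ statically-powered operations. Your added bookkeeping (column weights $w_k$, the explicit last-gate computation) only makes the paper's argument slightly more careful, not different.
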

\begin{proof}
\textbf{``Uncoded'' scheme}: To compute each output bit using straightforward dot-product-based multiplication, one needs to compute a dot product of the message $\mathbf{s}$ with one column in the matrix $\mathbf{A}$, which needs $2L$-$1$ unreliable operations ($L$ AND-operations and $L-1$ XOR-operations). From Lemma~\ref{idp_odd}, we know that the bit error probability is
\begin{equation}
  P_e^\text{bit}=\frac{1}{2}[1-(1-2p_\text{and})^L(1-2p_\text{xor})^{L-1}].
\end{equation}
Since
\begin{equation}
  {(1-2p_\text{and} )}^{L}<1-2Lp_\text{and}+2L(L-1)p_\text{and}^2\overset{(a)}{<}1-Lp_\text{and},
\end{equation}
where step (a) follows from $p_\text{and}<\frac{1}{2L-2}$, and
\begin{equation}
\begin{split}
  {(1-2p_\text{xor} )}^{L-1}&<1-2(L-1)p_\text{xor}+2(L-1)(L-2)p_\text{xor}^2\\
  &\overset{(b)}{<}1-Lp_\text{xor},
\end{split}
\end{equation}
where step (b) follows from $p_\text{xor}<\frac{1}{2L-2}$, we get
\begin{equation}\label{vs_rep_ber}
\begin{split}
  P_e^\text{bit}>&\frac{1}{2}\left[1-(1-Lp_\text{and})(1-Lp_\text{xor})\right]\\
  =&\frac{L}{2}p_\text{and}+\frac{L}{2}p_\text{xor}-\frac{L^2}{2}p_\text{and}p_\text{xor}
  \overset{(c)}{>}\frac{L}{2}\max\{p_\text{and},p_\text{xor}\}.
\end{split}
\end{equation}
where step (c) follows from $\max\{p_\text{and},p_\text{xor}\}<\frac{1}{2L-2}<\frac{1}{L}$.
Thus, to attain a target bit error probability~$p_\text{tar}$, it must hold that $\max\{\epsilon_\text{and}(E_v),\epsilon_\text{xor}(E_v)\}<\frac{2p_\text{tar}}{L}$. Therefore, the total energy required for each output bit is $\Omega(L\cdot \max\{\epsilon_\text{and}^{-1}(\frac{2p_\text{tar}}{L}),\epsilon_\text{xor}^{-1}(\frac{2p_\text{tar}}{L})\})$.

From Theorem~\ref{Main_thm}, we know that in the ENCODED-T technique, $\mathcal{N}_\text{per-bit}=\Theta(L)$ is sufficient to achieve bit error probability smaller or equal to $p_\text{maj}+\frac{1}{d_T}p_\text{thr}$. \textcolor{black}{From~\eqref{p0satis}, it is reasonable to make $p_\text{xor}=p_\text{and}=p_\text{maj}=p_\text{thr}=\frac{1}{2}p_\text{tar}$, in which case $p_\text{maj}+\frac{1}{d_T}p_\text{thr}<2p_\text{maj}=p_\text{tar}$. Since there are $\Theta(\frac{LN}{K})$ AND-, XOR- and majority-operations in the ENCODED-T technique (see the \emph{Computational Complexity Analysis} part in the proof of Theorem~\ref{Main_thm}), the total energy required for each output bit is $\mathcal{O}\left(\frac{LN}{K}\left(\epsilon_\text{maj}^{-1}(p_\text{maj})+\epsilon_\text{xor}^{-1}(p_\text{xor})+\epsilon_\text{and}^{-1}(p_\text{and})\right)\right) =\mathcal{O}\left(\frac{LN}{K}\max\{\epsilon_\text{maj}^{-1}(\frac{1}{2}p_\text{tar}),\epsilon_\text{xor}^{-1}(\frac{1}{2}p_\text{tar}),\epsilon_\text{and}^{-1}(\frac{1}{2}p_\text{tar})\}\right)$. Furthermore, $p_\text{maj}<p_\text{tar}$ due to the `last-gate effect'. Therefore, the total energy required for each bit is at least $\Omega\left(\frac{LN}{K}\epsilon_\text{maj}^{-1}(p_\text{maj})\right)=\Omega\left(\frac{LN}{K}\epsilon_\text{maj}^{-1}(p_\text{tar})\right)$.}
\end{proof}

\begin{remark}
We show an illustrative example when $\epsilon_\text{and}(\cdot)=\epsilon_\text{xor}(\cdot)=\epsilon_\text{maj}(\cdot)=\epsilon_g(\cdot)$. Because $\epsilon_g^{-1}(u)$ typically decreases monotonically in $u$, we consider three specific cases: exponential decay, polynomial decay and sub-exponential decay. For exponential decay, we assume $\epsilon_g(u)=\exp(-c u),c>0$. Therefore, the total energy required for each output bit for the \textquoteleft uncoded\textquoteright~matrix multiplication is $\Omega(L\log\frac{L}{p_\text{tar}})$, while that for ENCODED-T is $\Theta(\frac{LN}{K}\log\frac{1}{p_\text{tar}})$. For polynomial decay, $\epsilon_g(u)=(\frac{1}{u})^{c},c>0$, the total energy required for each output bit for the \textquoteleft uncoded\textquoteright~matrix multiplication is $\Omega(L(\frac{L}{p_\text{tar}})^{\frac{1}{c}})$, while that for ENCODED-T is $\Theta(\frac{LN}{K}(\frac{1}{p_\text{tar}})^{\frac{1}{c}})$. For sub-exponential decay, we assume $\epsilon_g(u)=\exp(-c \sqrt{u}),c>0$. By sub-exponential we mean the delay is slower than exponential but faster than polynomial. The sub-exponential decay model is inspired and obtained from \cite{Patil_spin} on spintronic devices \cite{butler2012switching,kim2015spin,manipatruni2012modeling}. Therefore, the total energy required for each output bit for the \textquoteleft uncoded\textquoteright~matrix multiplication is $\Omega\left(L(\log\frac{L}{p_\text{tar}})^2\right)$, while that for ENCODED-T is $\Theta\left(\frac{LN}{K}(\log\frac{1}{p_\text{tar}})^2\right)$.
In all cases, if $K=R N$ for some constant \textquoteleft rate\textquoteright~$R$, the scaling of the required energy consumption of ENCODED-T is smaller than uncoded.
\end{remark}

\textcolor{black}{In the next subsection, we will show that using \textquoteleft dynamic\textquoteright~voltage scaling, we can achieve even lower energy by using a two-phase computation scheme called ENCODED-V. For example, when $\epsilon_g(u)=(\frac{1}{u})^{c},c>0$, the energy consumption per output bit is $\mathcal{O} \left(\frac{N}{K}\max\left\{L,\left(\frac{1}{p_\text{tar}}\right)^{\frac{1}{c}}\right\} \right)$.}

\subsection{ENCODED-V: Low-energy Linear Transformations Using Dynamic Voltage Scaling}
In this part, we modify the ENCODED-F technique in Section~\ref{expander_encoder} with \textquoteleft dynamic\textquoteright~voltage scaling to obtain arbitrarily small output error fraction. The gate model here is Model I. The original ENCODED-F technique has $\left\lceil L/(d_s-1)\right\rceil$ stages, where in each stage, a noisy decoder of the utilized LDPC code is used to carry out one (noisy) iteration of PBF decoding. In the original ENCODED-F technique, we assumed that gate failure probability is constant (and equal for all gates) throughout the duration of the computation process. Here, we partition the entire ENCODED-F technique into two phases. In the first phase, we use constant supply energy, while in the second phase, we increase the supply energy as the computation proceeds, so that the gate failure probability decreases during the computation process, in order to achieve the required output error fraction with high probability.

For ease of presentation, we consider the case when $d_s=2$, i.e., we only add $d_s-1=1$ codeword to the $N$-bit storage at each stage. The extension to general $d_s$ is straightforward. We partition the entire ENCODED-F so that there are $L-L_\text{vs}$ stages in the first phase and $L_\text{vs}$ stages in the second phase, where $L_\text{vs}$ is defined as
\begin{equation}\label{vs_A}
  {{L}_{\text{vs}}}=\left\lceil \frac{\log \frac{1}{{{p}_{\text{tar}}}}+\log {{\alpha }_{0}}}{\log \frac{1}{1-\frac{1}{2}\theta }} \right\rceil ,
\end{equation}
where $p_{\text{tar}}$ is the required final output error fraction. In the $i$-th stage of the last $L_\text{vs}$ stages, we assume that the supply energy is increased to some value to ensure that
\begin{equation}\label{vs_condition}
  [{{d}_{c}}(1-R)+1]p_{\text{xor}}^{(i+1)}+p_{\text{maj}}^{(i+1)}+p_\text{and}^{(i+1)}\le\frac{1}{4}\theta\alpha_0{\left(1-\frac{1}{2}\theta\right)^{i}}.
\end{equation}
We call this (dynamic) voltage-scaling scheme the ENCODED-V technique.
\begin{theorem}\label{thm4}
(Using dynamic voltage scaling for Problem~\ref{pro_2})
Using unreliable AND gates, majority gates and \textcolor{black}{XOR} gates defined from Gate Model I ($D,\epsilon$) with maximum fan-in $D$ and error probability $p_\text{and}$, $p_\text{xor}$ and $p_\text{maj}$, and using a regular LDPC code that satisfies assumption (A.3), the binary linear transformation $\mathbf{r}=\mathbf{s}\cdot\mathbf{A}$ can be computed using the ENCODED-F technique with dynamic voltage scaling, with per-bit energy consumption
\begin{equation}\label{vs_energy}
\begin{split}
  &E_\text{per-bit}= \\
   & \frac{L-L_\text{vs}}{K}\left[ N\epsilon_{\text{and}}^{-1}\left( {{p}_{\text{and}}} \right)+N\epsilon_{\text{maj}}^{-1}\left( {{p}_{\text{maj}}} \right)+(N+P)\epsilon_{\text{xor}}^{-1}\left( {{p}_{\text{xor}}} \right) \right] \\
  &+\frac{N}{K}\sum\limits_{i=1}^{L_\text{vs}}{\epsilon_{\text{and}}^{-1}\left( p_{\text{and}}^{(i)} \right)} +\frac{N}{K}\sum\limits_{i=1}^{L_\text{vs}}{\epsilon_{\text{maj}}^{-1}\left( p_{\text{maj}}^{(i)} \right)}\\
  &+\frac{N+P}{K}\sum\limits_{i=1}^{L_\text{vs}}{\epsilon_{\text{xor}}^{-1}\left( p_{\text{xor}}^{(i)} \right)},
 \end{split}
\end{equation}
where $L_\text{vs}$, which is a function of $p_\text{tar}$, is defined in~\eqref{vs_A}. Further, the output error fraction is below $p_\text{tar}$ with probability at least $1-P_{e}^\text{blk}$, where the probability $P_{e}^\text{blk}$ satisfies
\begin{equation}\label{vs_error_prob}
  P_{e}^\text{blk}<3(L-L_\text{vs})\exp\left( -\lambda^*N \right)+3\mathop\sum\limits_{i=1}^{L_\text{vs}}\exp \left( -{{\widetilde{\lambda }}^{(i+1)}}N \right),
\end{equation}
where
\begin{equation}\label{vs_condition_2}
  \begin{array}{*{35}{l}}
   {} & {{\lambda }^{*}}=D(2\lambda\|\lambda)=\left( 2\log 2-1 \right)\lambda+\mathcal{O}({{\lambda}^{2}}),\\
   {} & {{\widetilde{\lambda }}^{(i+1)}}=D(2\lambda^{(i+1)}\|\lambda^{(i+1)})\\
   {} & \qquad\quad=\left( 2\log 2-1 \right)\lambda^{(i+1)}+\mathcal{O}({{(\lambda^{(i+1)})}^{2}}),  \\
   {} & \lambda=\frac{\theta\alpha_0/2}{[d_c(1-R)+1]+2},\\
   {} & \lambda^{(i+1)}=\frac{\theta {{\alpha }_{0}}{{\left( 1-\frac{1}{2}\theta  \right)}^{i}}/4}{\left[ {{d}_{c}}(1-R)+1 \right]+2}.\\
\end{array}
\end{equation}
\end{theorem}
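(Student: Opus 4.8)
The plan is to analyze the two phases of ENCODED-V separately, reusing Theorem~\ref{expander_probabilistic} for the constant-voltage first phase and running a geometric-decay induction on the error fraction for the voltage-scaled second phase. For the first phase, observe that with $d_s=2$ the first $L-L_\text{vs}$ stages form an ordinary ENCODED-F run with the constant gate error probabilities $p_\text{and},p_\text{xor},p_\text{maj}$, so Theorem~\ref{expander_probabilistic} applies with its $\lambda$ specialized to $\frac{\theta\alpha_0/2}{[d_c(1-R)+1]+2}$ (the $\lambda$ in \eqref{vs_condition_2}); it guarantees that, outside an event of probability at most $3(L-L_\text{vs})\exp(-\lambda^* N)$, the error fraction in the $N$-bit register stays below $\alpha_0$ at every one of these stages, and in particular at the end of the first phase. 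This is the base case of the second-phase induction.

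For the second phase I would track $\gamma_i$, the error fraction in the register at the start of the $i$-th second-phase stage, together with the events $A_i$ that in stage $i$ the empirical failure fractions of the AND, XOR and majority gates each exceed their means by at most $\lambda^{(i+1)}$ (the quantity in \eqref{vs_condition_2}). Choosing the stage-$i$ supply energy so that each of $p_\text{and}^{(i+1)},p_\text{xor}^{(i+1)},p_\text{maj}^{(i+1)}$ is at most $\lambda^{(i+1)}$ both implies the design constraint \eqref{vs_condition} (by summing with the weights $[d_c(1-R)+1],1,1$) and permits the simplified large-deviation bound \eqref{lg_d2} of Lemma~\ref{Binomial_deviation}, giving $\Pr[A_i^c]<3\exp(-\widetilde\lambda^{(i+1)}N)$. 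Then, conditioned on the first-phase success event and on $A_1,\dots,A_{i-1}$, I would show $\gamma_i\le\alpha_0(1-\tfrac12\theta)^{i-1}$ by induction: since $\gamma_i\le\alpha_0$, one noiseless PBF iteration reduces the error count to at most $(1-\theta)\gamma_i$ by property (A.3) (used in its scale-invariant form), while the noisy PBF gates and this stage's codeword addition inject, on $A_i$, at most $\bigl([d_c(1-R)+1]p_\text{xor}^{(i+1)}+p_\text{maj}^{(i+1)}+p_\text{and}^{(i+1)}\bigr)N+\bigl([d_c(1-R)+1]+2\bigr)\lambda^{(i+1)}N\le\tfrac12\theta\alpha_0(1-\tfrac12\theta)^iN$ further errors by \eqref{vs_condition} and the definition of $\lambda^{(i+1)}$; hence $\gamma_{i+1}\le(1-\theta)\alpha_0(1-\tfrac12\theta)^{i-1}+\tfrac12\theta\alpha_0(1-\tfrac12\theta)^i=\alpha_0(1-\tfrac12\theta)^{i-1}\bigl[(1-\theta)+\tfrac12\theta(1-\tfrac12\theta)\bigr]\le\alpha_0(1-\tfrac12\theta)^i$, and after $L_\text{vs}$ stages the error fraction is at most $\alpha_0(1-\tfrac12\theta)^{L_\text{vs}}\le p_\text{tar}$ by the choice of $L_\text{vs}$ in \eqref{vs_A}.

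It then remains to assemble the bounds. A union bound over the first-phase failure event and the $L_\text{vs}$ events $A_i^c$ gives $P_e^\text{blk}<3(L-L_\text{vs})\exp(-\lambda^* N)+3\sum_{i=1}^{L_\text{vs}}\exp(-\widetilde\lambda^{(i+1)}N)$, which is \eqref{vs_error_prob}; the stated asymptotics of $\lambda^*$ and $\widetilde\lambda^{(i+1)}$ follow from the small-argument expansion of $D(2\lambda\|\lambda)$ already recorded in Lemma~\ref{Binomial_deviation}. For the energy accounting, each stage performs $N$ AND-, $N+P$ XOR-, and $N$ majority-operations ($d_s=2$); weighting each by the inverse energy-reliability function evaluated at the error probability used in that stage --- the constants $p_\text{and},p_\text{xor},p_\text{maj}$ in the first $L-L_\text{vs}$ stages and $p_\text{and}^{(i)},p_\text{xor}^{(i)},p_\text{maj}^{(i)}$ in the $i$-th second-phase stage --- summing over all $L$ stages, and dividing by $K$ output bits produces \eqref{vs_energy}.

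The main obstacle is the second-phase induction, specifically that the multiplicative contraction by $(1-\theta)$ coming from a single PBF iteration must strictly dominate the geometrically-shrinking error budget allowed by \eqref{vs_condition}; concretely this reduces to the elementary inequality $(1-\theta)+\tfrac12\theta(1-\tfrac12\theta)=1-\tfrac12\theta-\tfrac14\theta^2\le1-\tfrac12\theta$, but one must also keep the stage-wise deviation exponents $\widetilde\lambda^{(i+1)}$ correct even though the admissible deviations $\lambda^{(i+1)}$ themselves decay geometrically along the second phase. A secondary subtlety is that (A.3) has to be invoked in the scale-invariant form ``any pattern of $k\le\alpha_0N$ errors is reduced to at most $(1-\theta)k$ errors by one noiseless PBF iteration'', rather than only at the boundary $k=\alpha_0N$; this stronger statement is what the expander/Burshtein-type analysis underlying the first-phase result actually provides.
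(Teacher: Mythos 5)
Your proposal is correct and follows essentially the same route as the paper's proof in Appendix~\ref{vs_app}: invoke Theorem~\ref{expander_probabilistic} with $\lambda=\frac{\theta\alpha_0/2}{[d_c(1-R)+1]+2}$ for the constant-voltage first phase, then run the geometric induction $\alpha_\text{PBF}^{(i)}\le\alpha_0(1-\theta/2)^{i}$ in the second phase using the per-stage mean-plus-deviation split and Lemma~\ref{Binomial_deviation}, and finish with a union bound and stage-by-stage energy accounting. Your indexing introduces harmless slack ($1-\tfrac12\theta-\tfrac14\theta^2\le 1-\tfrac12\theta$ where the paper telescopes exactly), and your observation that (A.3) must be used in the form ``any $k\le\alpha_0 N$ errors are reduced to $(1-\theta)k$'' is exactly what Lemma~\ref{Bur_lemma} supplies.
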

\begin{proof}
See Appendix~\ref{vs_app}.
\end{proof}
As the analysis in Section~\ref{vs_static}, we consider three specific cases of energy-reliability tradeoff: exponential decay model $\epsilon_\text{and}(u)=\epsilon_\text{xor}(u)=\epsilon_\text{maj}(u)=\exp(-c u),c>0$, polynomial decay model $\epsilon_\text{and}(u)=\epsilon_\text{xor}(u)=\epsilon_\text{maj}(u)=(\frac{1}{u})^{c},c>0$ or sub-exponential decay model $\epsilon_\text{and}(u)=\epsilon_\text{xor}(u)=\epsilon_\text{maj}(u)=\exp(-c \sqrt{u}),c>0$. We evaluate the total energy consumption per output bit in these two cases under a specific choice of supply energy that ensures the condition~\eqref{vs_condition}.

\begin{corollary}\label{vs_coro}
(Using dynamic voltage scaling for Problem~\ref{pro_1})
Using a $(d_v,d_c)$ regular LDPC code that satisfies assumption (A.3) (with parameters $\alpha_0$ and $\theta$) and has length $N>\frac{1}{{{\theta }^{*}}}\log \left( \frac{6L}{{{p}_{\text{tar}}}} \right)$, where
\begin{equation}
\begin{split}
  &{{\theta }^{*}}=\min \left\{\lambda^*,\right.\\
  &\left.D\left(2\frac{\theta {p_\text{tar}{\left( 1-\frac{1}{2}\theta  \right)}}/4}{\left[ {{d}_{c}}(1-R)+1 \right]+2}\left\|\frac{\theta {p_\text{tar}{\left( 1-\frac{1}{2}\theta  \right)}}/4}{\left[ {{d}_{c}}(1-R)+1 \right]+2}\right.\right)\right\},
\end{split}
\end{equation}
and $\lambda^*$ is defined in~\eqref{vs_condition_2}, the ENCODED-V technique can achieve output bit error probability $p_\text{tar}$ with total energy consumption pet bit $E_\text{per-bit}$: When the energy-reliability tradeoff function $\epsilon_\text{and}(u)=\epsilon_\text{xor}(u)=\epsilon_\text{maj}(u)=(\frac{1}{u})^{c},c>0$, $E_\text{per-bit}=\mathcal{O} \left( \frac{N}{K}\max\left\{L,\left(\frac{1}{p_\text{tar}}\right)^{\frac{1}{c}}\right\} \right)$; when the energy-reliability tradeoff function $\epsilon_\text{and}(u)=\epsilon_\text{xor}(u)=\epsilon_\text{maj}(u)=\exp(-c u),c>0$, $E_\text{per-bit}=\mathcal{O}\left(\frac{N}{K}\max\{L,\log^2\frac{1}{p_\text{tar}}\}\right)$; when the energy-reliability tradeoff function $\epsilon_\text{and}(u)=\epsilon_\text{xor}(u)=\epsilon_\text{maj}(u)=\exp(-c \sqrt{u}),c>0$,
$E_\text{per-bit}=\mathcal{O}\left(\frac{N}{K}\max\{L,\log^3\frac{1}{p_\text{tar}}\}\right)$.
\end{corollary}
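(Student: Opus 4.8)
The plan is to obtain the corollary as a direct specialization of Theorem~\ref{thm4}, whose general per-bit energy formula~\eqref{vs_energy} and block-error bound~\eqref{vs_error_prob} already do the heavy lifting; what remains is to evaluate two geometric sums and check one inequality. First I would treat the two phases separately. In the first phase the gate error probabilities $p_\text{and},p_\text{xor},p_\text{maj}$ are constants (bounded by the $\lambda$ of~\eqref{expander_prob_condition}), so each inverse $\epsilon_\bullet^{-1}(p_\bullet)$ is a constant and the first-phase term of~\eqref{vs_energy} is $\Theta\!\left(\frac{N(L-L_\text{vs})}{K}\right)=\mathcal{O}\!\left(\frac{NL}{K}\right)$ since $L-L_\text{vs}\le L$. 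The substantive step is the second phase: the scheme tunes the supply energy so that, as forced by~\eqref{vs_condition}, each of $p_\text{and}^{(i+1)},p_\text{xor}^{(i+1)},p_\text{maj}^{(i+1)}$ is $\Theta(\gamma^{i})$ with $\gamma:=1-\tfrac12\theta<1$.

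Next I would evaluate the sums $\sum_{i=1}^{L_\text{vs}}\epsilon_\bullet^{-1}\!\left(p_\bullet^{(i)}\right)$ using the key identity $\gamma^{L_\text{vs}}=\Theta(p_\text{tar})$, which follows from the definition~\eqref{vs_A} of $L_\text{vs}$ once the constant $\alpha_0$ and the ceiling are absorbed. For polynomial decay $\epsilon_g(u)=u^{-c}$ one has $\epsilon_g^{-1}(p)=p^{-1/c}$, so the $i$-th summand is $\Theta(\gamma^{-i/c})$; this is a geometric series of ratio $\gamma^{-1/c}>1$, dominated by its last term, summing to $\Theta\!\left(\gamma^{-L_\text{vs}/c}\right)=\Theta\!\left((1/p_\text{tar})^{1/c}\right)$. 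For exponential decay $\epsilon_g(u)=e^{-cu}$, $\epsilon_g^{-1}(p)=\tfrac1c\log(1/p)$, the summand is $\Theta(i)$, so the sum is $\Theta(L_\text{vs}^2)=\Theta(\log^2(1/p_\text{tar}))$. For sub-exponential decay $\epsilon_g(u)=e^{-c\sqrt u}$, $\epsilon_g^{-1}(p)=\tfrac1{c^2}\log^2(1/p)$, the summand is $\Theta(i^2)$, so the sum is $\Theta(L_\text{vs}^3)=\Theta(\log^3(1/p_\text{tar}))$. Multiplying by the $\tfrac{N}{K}$ and $\tfrac{N+P}{K}\le\tfrac{2N}{K}$ prefactors in~\eqref{vs_energy} and adding the first-phase $\mathcal{O}(NL/K)$ term gives $E_\text{per-bit}=\mathcal{O}\!\left(\frac{N}{K}\max\{L,\,g(p_\text{tar})\}\right)$ with $g(p_\text{tar})$ equal to $(1/p_\text{tar})^{1/c}$, $\log^2(1/p_\text{tar})$, and $\log^3(1/p_\text{tar})$ respectively.

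Finally I would verify the reliability guarantee. Since $\lambda^{(i+1)}$ is decreasing in $i$ and the map $\lambda\mapsto D(2\lambda\|\lambda)=(2\log2-1)\lambda+\mathcal{O}(\lambda^2)$ is increasing for small $\lambda$, every exponent appearing in~\eqref{vs_error_prob} is at least $\theta^*:=\min\{\lambda^*,\,D(2\lambda^{(L_\text{vs}+1)}\|\lambda^{(L_\text{vs}+1)})\}$; and since the ceiling in~\eqref{vs_A} costs at most one extra factor of $\gamma$, one has $\alpha_0\gamma^{L_\text{vs}}\ge\gamma p_\text{tar}$, hence $\lambda^{(L_\text{vs}+1)}\ge\frac{\theta p_\text{tar}(1-\frac12\theta)/4}{[d_c(1-R)+1]+2}$, so $\theta^*$ is exactly the quantity in the statement. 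Both sums in~\eqref{vs_error_prob} contain at most $L$ terms, so $P_e^\text{blk}<6L\exp(-\theta^*N)$, which is $<p_\text{tar}$ precisely when $N>\frac{1}{\theta^*}\log(6L/p_\text{tar})$, the hypothesis of the corollary. The only mildly delicate point is bookkeeping: ensuring each geometric sum is charged to the correct endpoint (the last term for the polynomial case, the crude $\Theta(L_\text{vs}^k)$ bound for the logarithmic cases) and propagating the constants $\alpha_0,\theta,d_c,R$ and the ceilings through so that the stated $\theta^*$ and the $\max\{\cdot,\cdot\}$ structure emerge cleanly; given Theorem~\ref{thm4}, there is no genuine analytic obstacle.
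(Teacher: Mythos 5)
Your proposal follows essentially the same route as the paper: specialize Theorem~\ref{thm4}, bound the first-phase energy by $\mathcal{O}(NL/K)$, evaluate the second-phase sum $\sum_i \epsilon^{-1}(\lambda^{(i)})$ as a geometric series (dominated by the last term, $\Theta((1/p_\text{tar})^{1/c})$) in the polynomial case and as $\Theta(L_\text{vs}^2)$ resp.\ $\Theta(L_\text{vs}^3)$ in the exponential and sub-exponential cases, and control the block-error exponent by noting that $\lambda^{(i)}$ is decreasing so the worst term is $i=L_\text{vs}+1$, with $\lambda^{(L_\text{vs}+1)}$ bounded below via the ceiling in~\eqref{vs_A}. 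All of that matches the paper's Appendix~\ref{vs_app_2}, and your derivation of $\theta^*$ is in fact cleaner: you get exactly the constant $\theta p_\text{tar}(1-\tfrac12\theta)/4$ appearing in the statement, whereas the paper's own proof (which retargets the fraction to $\tfrac12 p_\text{tar}$) ends up with a factor $/8$ that it silently absorbs.

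There is one concrete missing step, and it is precisely the point that distinguishes this corollary (Problem~\ref{pro_1}) from Theorem~\ref{thm4} (Problem~\ref{pro_2}): what you establish at the end is a guarantee of the form ``$\delta_e^{\text{frac}}<p_\text{tar}$ with probability $>1-p_\text{tar}$,'' but the corollary claims a bound on the output \emph{bit error probability}, i.e.\ on $\mathbb{E}[\delta_e^{\text{frac}}]$. You need the conversion
\begin{equation*}
\mathbb{E}[\delta_e^{\text{frac}}]\;\le\;\Pr\bigl(\delta_e^{\text{frac}}>p'\bigr)\cdot 1+p'\;<\;P_e^{\text{blk}}+p',
\end{equation*}
and with your choices ($p'=p_\text{tar}$ and $P_e^{\text{blk}}<p_\text{tar}$) this only yields bit error probability $<2p_\text{tar}$. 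The paper fixes this by running the second phase with target fraction $\tfrac12 p_\text{tar}$ (i.e.\ $L_\text{vs}=\bigl\lceil\frac{\log\frac{2}{p_\text{tar}}+\log\alpha_0}{\log\frac{1}{1-\theta/2}}\bigr\rceil$) and requiring $P_e^{\text{blk}}<\tfrac12 p_\text{tar}$, which is where the condition $N>\frac{1}{\theta^*}\log(6L/p_\text{tar})$ actually comes from ($3L\exp(-\theta^*N)<\tfrac12 p_\text{tar}$). This is a constant-factor repair that changes none of the asymptotic energy expressions, but as written your argument proves the claim only up to a factor of $2$ in the achieved bit error probability.
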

\begin{proof}
See Appendix~\ref{vs_app_2}.
\end{proof}
We use Table I to show the energy-reliability tradeoff of ``uncoded'' matrix multiplication, ENCODED-T and ENCODED-V.

\begin{center}
\begin{table*}
\caption{\label{tab:energy}This table shows the energy-reliability tradeoffs of different computing schemes under different gate error probability models.}
    \centering
    \begin{tabular}{ | l | l | l | p{5cm} |}
    \hline
     & uncoded & ENCODED-T & ENCODED-V \\ \hline
     $\epsilon=\exp(-c u)$& $\Omega\left(L\log\frac{L}{p_\text{tar}}\right)$ & $\Theta\left(\frac{LN}{K}\log\frac{1}{p_\text{tar}}\right)$ & $\mathcal{O}\left(\frac{N}{K}\max\{L,\log^2\frac{1}{p_\text{tar}}\}\right)$ \\ \hline
    $\epsilon=(\frac{1}{u})^{c}$ & $\Omega\left(L(\frac{L}{p_\text{tar}})^{\frac{1}{c}}\right)$ & $\Theta\left(\frac{LN}{K}(\frac{1}{p_\text{tar}})^{\frac{1}{c}}\right)$ & $\mathcal{O} \left( \frac{N}{K}\max\left\{L,\left(\frac{1}{p_\text{tar}}\right)^{\frac{1}{c}}\right\} \right)$\\ \hline
    $\epsilon=\exp(-c \sqrt{u})$ & $\Omega\left(L\log^2\frac{L}{p_\text{tar}}\right)$ & $\Theta\left(\frac{LN}{K}\log^2\frac{1}{p_\text{tar}}\right)$ & $\mathcal{O}\left(\frac{N}{K}\max\{L,\log^3\frac{1}{p_\text{tar}}\}\right)$ \\
    \hline
    \end{tabular}
\end{table*}
\end{center}
\section{When a Noiseless Decoder Is Available}\label{nless_dec}
The conclusion in Theorem~\ref{expander_probabilistic} can be further tightened if we use a noiseless PBF decoder after the noisy computation. \textcolor{black}{Although the assumption that the last step of the entire computation process is fault-free is not valid under our Gate Model I or Gate Model II, it is often adopted in existing literature on computing with noisy components~\cite{Spi_FCS_96,Had_TIT_05,Chi_ITW_06}.}

\begin{theorem}[What if we have a noiseless decoder]\label{thm2}
Suppose the unreliable gates are drawn from Gate Model I $(D,\epsilon)$. Further assume that a noiseless PBF decoder is available. Then, the linear transformation $\mathbf{r}=\mathbf{s}\cdot\mathbf{A}$ that outputs $K$ bits can be computed with $P_e^\text{blk}<p_\text{tar}$ using $\frac{1}{\lambda^*}\log \frac{3L}{{{p}_\text{tar}}}=\Theta(\log\frac{1}{p_\text{tar}})$ unreliable operations per output bit and extra $\Theta(\log\log\frac{1}{p_\text{tar}})$ noiseless operations per output bit, where the parameter $\lambda^*$ is defined in~\eqref{expander_error_exp} in Theorem~\ref{expander_probabilistic}.
\end{theorem}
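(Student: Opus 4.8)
The plan is to run the noisy ENCODED-F scheme of Section~\ref{expander_encoder} to push the error \emph{fraction} of the $N$-bit intermediate result below a constant with high probability, and then to clean up the residual errors with a single pass of the noiseless PBF decoder. Concretely, I would implement ENCODED-F exactly as in Theorem~\ref{expander_probabilistic}, using a $(d_v,d_c)$-regular LDPC code that satisfies assumption (A.3) with a constant error fraction $\alpha_0$ and constant rate $R=K/N$, and with gate error probability $\epsilon<\lambda$ as required there. I would then pick the block length
\begin{equation}
N=\left\lceil \frac{1}{\lambda^*}\log\frac{3L}{p_\text{tar}}\right\rceil,
\end{equation}
with $\lambda^*$ as in~\eqref{expander_error_exp}, so that the bound $P_e^\text{blk}<3L\exp(-\lambda^*N)$ of Theorem~\ref{expander_probabilistic} is at most $p_\text{tar}$. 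By that theorem, with probability at least $1-p_\text{tar}$ the $N$-bit output $\hat{\mathbf{x}}$ of the noisy part differs from the true word $\mathbf{x}=\mathbf{s}\tilde{\mathbf{G}}=\mathbf{r}\mathbf{G}$ in at most $\alpha_0 N$ positions; note that $\mathbf{x}$ is a codeword of the LDPC code, since every row of $\tilde{\mathbf{G}}=\mathbf{A}\mathbf{G}$ is.

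Next I would feed $\hat{\mathbf{x}}$ to a noiseless PBF decoder. Because assumption (A.3) is a worst-case statement, it applies to whatever (possibly correlated) error pattern of weight at most $\alpha_0 N$ the noisy computation produced: after one noiseless iteration the number of errors is at most $(1-\theta)\alpha_0 N<\alpha_0 N$, so (A.3) applies again, and iterating gives at most $(1-\theta)^t\alpha_0 N$ errors after $t$ iterations. This is below $1$, hence exactly $0$, once $t>\log(\alpha_0 N)/\log\frac{1}{1-\theta}=\Theta(\log N)$, which is precisely the ``$\Theta(\log N)$ noiseless iterations correct a constant error fraction'' guarantee recalled in Section~\ref{code_section}. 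Thus, on the good event above, the noiseless decoder recovers $\mathbf{x}$ exactly; reading off its systematic part (recall $\mathbf{G}=[\mathbf{I},\mathbf{P}]$) yields $\mathbf{r}$ exactly, so $P_e^\text{blk}\le 3L\exp(-\lambda^*N)<p_\text{tar}$.

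For the resource count, the noisy part is ENCODED-F with block length $N$, which by Theorem~\ref{expander_thm} uses $\frac{2N+P}{K}\lceil \frac{L}{d_s-1}\rceil+\frac{NL}{K}=\Theta(NL/K)$ unreliable operations per output bit; for $L=\Theta(K)$ (equivalently, $\mathbf{A}$ close to square, in which case $N=\Theta(\log\frac{1}{p_\text{tar}})$ as well) this equals $\Theta(N)=\Theta(\frac{1}{\lambda^*}\log\frac{3L}{p_\text{tar}})=\Theta(\log\frac{1}{p_\text{tar}})$. The extra noiseless work is $\Theta(\log N)$ PBF iterations, each costing $O(N)$ operations (one bounded-fan-in majority per variable node and one bounded-fan-in XOR per parity-check node), i.e.\ $O(N\log N)$ noiseless operations in total, or $O(\log N)=O(\log\log\frac{1}{p_\text{tar}})$ per output bit.

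The step I expect to be most delicate is the convergence-to-zero argument: assumption (A.3) by itself only gives a one-shot fractional reduction, so I must verify that iterating it actually annihilates the errors rather than stalling at a small positive fraction. This holds only if $\alpha_0$ is taken below the threshold for which iterated noiseless PBF is known to converge geometrically (via the Burshtein-type ensemble analysis of Appendix~\ref{worst_case}, or via expander codes), so one should set $\alpha_0$ to the minimum of that threshold and the threshold needed for the single noisy iteration behind Theorem~\ref{expander_probabilistic}. A secondary caveat is that the clean per-bit figure $\frac{1}{\lambda^*}\log\frac{3L}{p_\text{tar}}$ matches ENCODED-F's $\Theta(NL/K)$ only when $L=\Theta(K)$; in general there is a harmless extra factor $L/K\le 1$, which should also be tracked when comparing against the converse of Theorem~\ref{low_bound_thm}.
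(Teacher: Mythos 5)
Your proposal is correct and follows essentially the same route as the paper's own proof: run ENCODED-F so that Theorem~\ref{expander_probabilistic} bounds the residual error fraction by $\alpha_0$ with probability $1-3L\exp(-\lambda^*N)$, choose $N=\Theta\bigl(\tfrac{1}{\lambda^*}\log\tfrac{3L}{p_\text{tar}}\bigr)$ to make this at most $p_\text{tar}$, and then remove the at most $\alpha_0 N$ residual errors with $\Theta(\log N)$ noiseless PBF iterations. Your two caveats are both well placed: the paper likewise invokes the iterated-(A.3) geometric decay only implicitly (via the remark in Section~\ref{code_section}), and its own proof actually derives the per-bit count $\Theta(\tfrac{L}{K}\log\tfrac{L}{p_\text{tar}})$, so the clean figure in the theorem statement indeed presumes $L=\Theta(K)$ as you observe.
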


\begin{proof}
We use the ENCODED-F technique to do noisy linear transformations. That is, instead of using  Gallager-B decoding algorithm to correct errors, we use the PBF algorithm. Theorem~\ref{expander_probabilistic} \textcolor{black}{shows} that \textcolor{black}{the final error fraction can be upper bounded by a small constant $\alpha_0$} with high probability as long as~\eqref{expander_prob_condition} holds. The total number of operations per bit is $\frac{(3N+P)L}{K}\le \frac{4NL}{K}=\Theta (\frac{NL}{K})$.

If we require the error probability $p_\text{tar}$ to be arbitrarily small, we have to use a noiseless decoder to correct residual errors in the final output. We can use the noiseless decoder to carry out $\Theta(\log N)$ iterations of noiseless PBF algorithms to correct all errors, which \textcolor{black}{introduces an additional} $\Theta(\log N)$ operations per bit.

The overall error probability is the same as~\eqref{expander_error_prob}. To ensure that $P_e^{\text{blk}}$ is smaller than ${{p}_\text{tar}}$, it suffices (see~\eqref{expander_error_prob}) to let
\[3L\exp\left( -\lambda^*N \right)<{{p}_\text{tar}}.\]
This is satisfied when
\[N\ge \frac{1}{\lambda^*}\log \frac{3L}{{{p}_\text{tar}}}=\Theta (\log\frac{L}{{{p}_\text{tar}}}).\]
Thus, we need $\frac{4L}{K\lambda^*}\log\frac{3L}{p_\text{tar}}=\Theta(\frac{L}{K}\log\frac{L}{p_\text{tar}})$ unreliable operations per bit and extra $\Theta(\log N)=\Theta(\log\log\frac{L}{p_\text{tar}})$ noiseless operations per bit.
\end{proof}

\begin{remark}
As discussed in Remark~\ref{remark41}, the output error probability \textcolor{black}{is at least} $p_\text{maj}$, the error probability of a majority gate, due to the `last-gate effect'. Therefore, in order to achieve arbitrarily small error probability, the noiseless operations in Theorem~\ref{thm2} are necessary. 

In fact, the bound in Theorem~\ref{low_bound_thm} is a lower bound on the number of operations that are used at the entrance stage, \textcolor{black}{i.e., operations that have one of the $L$ inputs $(s_1, s_2,...s_L)$ as an argument}, of the computation scheme. Therefore, Theorem~\ref{thm2} and Theorem~\ref{low_bound_thm} together assert that the number of noisy operations scales as $\Theta(\log\frac{1}{p_\text{tar}})$ under the setting of Problem~\ref{pro_1}, if the `last-gate effect' can be addressed using a few noiseless operations which scales as $\Theta(\log\log\frac{1}{p_\text{tar}})$.
\end{remark}

\section{Conclusions and Future Work}\label{conclusion}
Can reliable computation be performed using gates that are all equally unreliable? As we discussed, the error probability is lower bounded by the last gate's error probability $\epsilon$. We provide LDPC codes-based strategies (called ENCODED) that attain error probability close to $\epsilon$ (which we bound by $2\epsilon$). Further, we show that these strategies outperform repetition-based strategies that are commonly used today.

The key idea that ENCODED relies on is to repeatedly suppress errors in computation process by, in a sense, encoding the computation matrix of the linear transformation, instead of encoding inputs  (as is done in traditional communication). Using ENCODED, both probabilistic errors and worst-case errors can be kept suppressed.

Inspired by voltage-scaling techniques commonly used to reduce power in circuit design, we also analyzed possible gains attainable using `static' and `dynamic' voltage scaling in conjunction with our ENCODED technique. It would be meaningful to experimentally model the power-reliability tradeoffs of voltage scaling to give more insights to the designer. On the modeling side, it would also be important to include energy consumed in wiring~\cite{Gro_ISIT_12,Blake1,Blake2} (which can be a significant chunk of the total energy in decoding circuits~\cite{KarthikSips}) in these models, and observe if predicted gains due to coding are significantly reduced. Perhaps wiring energy will also motivate design of novel coding techniques that attempt to correct errors with local information as much as possible.

There are many coding-theoretic problems that fall out naturally. What are practical codes that can be used to reduce computational errors? Are there benefits to applying more recent discoveries, such as spatially coupled LDPC codes~\cite{Kud_ISIT_12}, instead of expander codes?

More broadly, the problem of computation with noisy gates is of considerable practical and intellectual interest. It is widely accepted that biological systems operate with noisy computational elements, and yet provide good performance at low energy. In engineered systems, with saturation of Dennard's scaling and Moore's law, new device technologies are being used to design circuits that are invariably error-prone. A comprehensive understanding of reliability-resource tradeoffs in error-correction coding in computing could give these novel technologies (e.g. carbon nanotubes and mechanical switches) a better chance to compete with established ones (i.e., CMOS). To that end, it will be key to identify what causes faults in these novel technologies so that they can be modeled and analyzed, and appropriate codes be designed for them. Intellectually, it is interesting (and widely acknowledged) that the remarkable gains that coding brings to communications, especially at long-range, are not easy to obtain in computational settings. The theoretical reasoning for this thus far rests on simplistic models and has rather loose bounds~\cite{Gro_ISIT_14}. Improved strategies and improved outer bounds will go a long way in characterizing how large these gains can be.

\subsection{Connections with coded computing with stragglers and ``exascale computing''}
We note an important connection between ENCODED and the recent works on coded computation in presence of ``stragglers'' \cite{lee2016speeding,YaoqingISTC,dutta2016short,duttaISIT2017,PedarsaniHeterogeneous}. These works focus on ``processor-level'' (rather than gate-level) noise, e.g. it is assumed in~\cite{lee2016speeding} that the product of input $\mathbf{s}$ with each column of $\mathbf{A}$ is ``erasure-prone'' with some probability (which depends on models of time required for computation). The formulation there is not applicable in two ways that are crucial for modern ``exascale'' computing systems: (i) there is an increasing trend in distributed systems community to consider ``soft-errors'' that are undetectable~\cite{cappello2014toward}, whereas~\cite{lee2016speeding,YaoqingISTC,dutta2016short,duttaISIT2017,PedarsaniHeterogeneous} largely focus on erasures; and more importantly, (ii) there is an increasing need for understanding scalability when the number of (fixed memory) processors increases \textit{for a fixed total problem size} (to understand the limits of gains with parallelization of a problem). This is called ``strong scaling''~\cite[Chapter 9]{kaminsky2016big}, whereas ``weak scaling'' allows for increasing problem size and number of processors, while keeping memory of each processor fixed. The works~\cite{lee2016speeding,YaoqingISTC,dutta2016short,duttaISIT2017,PedarsaniHeterogeneous} only examine a fixed number of processors with \textit{increasing} memory of each processor with problem size, which is, strictly speaking, ``weaker than weak'' scaling.

For the specific problem of matrix-vector multiplication, strong scaling allows adding more processors than the number of rows and/or columns of the matrix to increase parallelization. For example, when computing $\mathbf{s}\times \mathbf{A}$, the matrix $\mathbf{A}$ is often split into both horizontal and vertical pieces (as used in the algorithm ``SUMMA''~\cite{van1997summa}), ENCODED can be adapted to this split to suppress error propagation (horizontal decomposition is similar to \eqref{quiv_encoding_matrix}, and vertical decomposition is imposed by using limited memory gates). In ENCODED-T, the tree-structure helps introduce increased parallelism, speeding up the computation, while keeping errors in check through repeated error suppression. However, the algorithms in \cite{lee2016speeding,YaoqingISTC,dutta2016short,duttaISIT2017,PedarsaniHeterogeneous} do not easily adapt to horizontal division. If one naively uses strategies in~\cite{lee2016speeding,YaoqingISTC,dutta2016short,duttaISIT2017,PedarsaniHeterogeneous} for strong scaling with soft errors, errors will accumulate and cause the resulting output to be far from the correct output.

One limitation of the technique proposed here is that it is limited to finite fields instead of real number coding. It is important to extend ENCODED to real number codes, and a preliminary attempt is made in \cite{yang2016fault} on iterative algorithms for logistic regression, where LDPC-type real-number coding techniques (inspired from~\cite{zhang2012verification}) are used for error-correction over reals.

\section{Acknowledgements}
We thank the SONIC center members, especially Ameya Patil, Naresh Shanbhag, Andrew Bean, and Andy Singer, for discussions that motivated this paper, and pointers to spintronics models and connections with practice. We also thank Shawn Blanton and Franz Franchetti for useful discussions regarding practical aspects of the problem. We thank David Burshtein for discussions regarding the analysis of the flipping algorithm. We thank Tze Meng Low for useful discussions on strong scaling and the reference to SUMMA. We also thank Zhuo Jiacheng (Carlson) and Paul Griffioen for their careful reading of the paper, pointing out errors and typos, and helpful comments.

\appendices
\section{Details of Gallager-B decoding algorithm}\label{GB_details}
Assume a variable node $v$ is connected to $d_v$ parity check nodes in $\mathcal{N}_v$ and a parity check node $c$ is connected to $d_c$ variable nodes in $\mathcal{N}_c$. Suppose the received bits are $\textbf{r}=(r_1,...r_N)$. The decoding algorithm we use is the Gallager-B algorithm:
\begin{itemize}
  \item From variable node to check node:
  \begin{itemize}
    \item Iteration 0: $m_{v\to c}^{(0)}=r_v$ is transmitted from $v$ to every check node $c\in \mathcal{N}_v$.
    \item Iteration $i$: $m_{v\to c}^{(i)}$ is transmitted from $v$ to $c\in \mathcal{N}_v$,
    \begin{equation}\label{chnode}  m_{v\to c}^{(i)} = \left\{ {\begin{array}{*{20}{c}}
        {x,}\\
        {z,}
        \end{array}\begin{array}{*{20}{c}}
        \text{if $|{c'\in \mathcal{N}_v\setminus c: m_{c' \to v}^{(i-1)}= x}|\ge b$,}\\
        \text{otherwise,}
        \end{array}} \right.\end{equation}
        where $b=\left\lfloor {\frac{{{d_v} + 1}}{2}} \right\rfloor$ and $z$ is a randomly generated bit.
  \end{itemize}
  \item From check node to variable node:
  \begin{itemize}
    \item Iteration $i$: $m_{c\to v}^{(i)}$ is transmitted from check node $c$ to variable node $v\in \mathcal{N}_c$,
    \begin{equation}\label{varnode}  m_{c\to v}^{(i)}=\mathop \oplus \limits_{v'\in \mathcal{N}_c\slash v}m_{v'\to c}^{(i-1)}.\end{equation}
  \end{itemize}
\end{itemize}
\begin{remark}\label{Yaz_remark}
Note that the updating rule $m_{v\to c}^{(i)}=z$ in~\eqref{chnode}, which is used to break ties, is different from the original rule $m_{v\to c}^{(i)}=y_v$ in~\cite{Gal_TIT_62}, in which $y_v$  is the channel output associated with the variable node $v$. This is because the problem that we consider is a computing problem, instead of a communication problem, and hence there are no channel outputs. Note that the analysis is also done for the modified updating rule. Although the modified updating rule is theoretically sound, we acknowledge that the cost of generating a random bit may be higher than that of the majority rule.
\end{remark}

\section{Proof of Lemma~\ref{lemma7}}\label{Upper_Bound_Analysis}
In this section, we prove that bit error probability can be made below a small constant $p_\text{maj}+\frac{1}{d_T}p_\text{thr}$ after one iteration of Gallager-B decoding. We use density evolution to analyze the change of error probability before and after decoding.
\subsection{Density Evolution Analysis}
We examine the $m$-th level in the tree structure of the ENCODED-T. After the outputs from the $(m+1)$-th level \textcolor{black}{are obtained}, they are forwarded to the $m$-th level of the tree structure to perform a \textcolor{black}{component-wise XOR-operation}. The results of this XOR-operation are stored in the \textcolor{black}{$E$}-bit registers at a node $\mathbf{v}_{m}^l$ (a compute-and-correct unit) in the $m$-th level and is decoded using $C$ iterations of the Gallager-B algorithm.

Now, we focus on the $C$ iterations of Gallager-B decoding done at the node $\mathbf{v}_{m}^l$. For simplicity, we write the \textcolor{black}{message-passing result} after the $i$-th iteration as \textcolor{black}{$\tilde{\mathbf{x}}^{(i)}=(x_{v\rightarrow c}^{(i)})$, which is the vector constituted by the messages sent from variable nodes to parity check nodes}. The initial input $\tilde{\mathbf{x}}^{(0)}$ is the \textcolor{black}{output} of the \textcolor{black}{unreliable XOR-gates in the node $\mathbf{v}_{m}^l$}. Denote the correct \textcolor{black}{message-passing bits} by \textcolor{black}{$\tilde{\mathbf{w}}^{(i)}=(w_{v\rightarrow c}^{(i)})$, i.e., if no computing errors are introduced in the entire computation process, in contrast to just iteration $i$. We write $p_{v\rightarrow c}^{(i)}$ as the bit error probability of $x_{v\rightarrow c}^{(i)}$, that is,
\[p_{v\rightarrow c}^{(i)}=\Pr(x_{v\rightarrow c}^{(i)}\neq w_{v\rightarrow c}^{(i)})\]
We want to calculate the evolution of $p_{v\rightarrow c}^{(i)}$ with $i$.}

From~\cite{Ric_TIT_01,Var_TIT_11} we know that in density-evolution analysis, the bit error probability does not depend on the transmitted codeword, based on the check node and variable node symmetry of the message-passing algorithm, and the channel symmetry and the message noise symmetry~\cite[Def.~5]{Var_TIT_11}. In our problem, the channel symmetry comes from the fact that the AND gates flip different outputs with the same probability. The message wire symmetry comes from the fact that the majority gates and XOR gates flip outputs with the same probability. Note that we do not need the source symmetry, and hence we can use the proof of Theorem 1 in~\cite{Var_TIT_11} to show that the bit error probability $P_e^\text{bit}$ does not depend on the correct codeword at the node $\mathbf{v}_{m}^l$. Therefore, we can assume without loss of generality that the correct input (and hence output) in the linear computation $\mathbf{s}\cdot\tilde{\mathbf{G}}$ is an all-zero codeword and hence
\textcolor{black}{
\begin{equation}
p_{v\rightarrow c}^{(i)}=\Pr(x_{v\rightarrow c}^{(i)}\neq 0).
\end{equation}}
\textcolor{black}{From assumption (A.3)} we know that when the number of levels in the tree structure in the ENCODED-T technique is smaller or equal to $\frac{\log N}{2\log(d_v-1)(d_c-1)}$, we can assume that the decoding neighborhood for each variable node is cycle-free and all bits entering a majority-gate or an XOR-gate are independent of each other. In our case, choosing $C=1$ iterations at each level, we can indeed ensure that the constraint on number of decoding iterations holds, since the tree structure in the ENCODED-T technique makes the total number of decoding iterations equal to $C\cdot (M-1)=\left\lceil\frac{\log L}{\log d_T}\right\rceil$, (see~\eqref{treebigger}, we use $M-1$ because only non-leaf nodes have embedded decoders), and the tree-width $d_T$ can be set large enough so that~\eqref{D_satisfy} is satisfied.

Therefore, based on symmetry and independence, we can use the analysis for the noisy Gallager-B decoder in~\cite{Yaz_TC_01} and attain the performance predicted by density evolution for regular LDPC codes. For $b=\left\lfloor {\frac{{{d_v} + 1}}{2}} \right\rfloor$, $l\in \mathbb{Z}\cap[1,d_v-1]$, define four functions:
\begin{eqnarray}
&& \bar{\alpha}(u):=\frac{1-(1-{2u})^{d_c-1}}{2},  \label{alpha} \\
&& \bar{\gamma}(u):=(1-p_{\text{xor}})\bar{\alpha}(u)+p_{\text{xor}}(1-\bar{\alpha}(u)), \label{gamma} \\
&& \Lambda_l(\bar{\gamma}):=\binom{d_v-1}{l} (1-\bar{\gamma})^l \bar{\gamma}^{d_v-1-l},  \label{Lambda}\\
&& \bar{\eta}(\bar{\gamma}):=(1-p_0)\mathop\sum \limits_{l=0}^{d_v-b-1} \Lambda_l(\bar{\gamma})+p_0\mathop\sum \limits_{l=0}^{b-1} \Lambda_l(\bar{\gamma}).{\;\;\;\;\;\;\;\;} \label{Eta}
\end{eqnarray}
Intuitively, $\bar{\gamma}(u)$ denotes the error probability after the XOR-operation at a check node and $\bar{\eta}(u)$ denotes the error probability after the majority-operation at a variable node. These functions are \textcolor{black}{borrowed from~\cite{Yaz_TC_01}} and are crucial for analyzing noisy Gallager-B density evolution. Note that we change the form of functions ${\alpha}(\cdot)$, ${\gamma}(\cdot)$, $\eta(\cdot)$ in~\cite{Yaz_TC_01} into $1-\bar{\alpha}(\cdot)$, $1-\bar{\gamma}(\cdot)$, $1-\bar{\eta}(\cdot)$, in correspondence with the usual goal of analyzing error probability, instead of correctness probability.

We first state a result from~\cite{Yaz_TC_01}, and then simplify the result using an upper bound. Note that the LDPC decoding rule used in~\cite{Yaz_TC_01} is slightly different from ours, as stated in Remark~\ref{Yaz_remark}. We will address this issue in the proof of the upper bound.
\begin{lemma}[\cite{Yaz_TC_01}, pp. 1662, Theorem 1]\label{irregular_DE}
For regular LDPC codes with check node degree $d_c$ and variable node degree $d_v$
\begin{equation}\label{Irr_DE}
\begin{split}
  \noindent p^{(i+1)}=&f(p^{(i)})\\
  :=&p_{\text{maj}}\left[1-\bar{\eta}(\bar{\gamma}(p^{(i)}))\right]
  +(1-p_{\text{maj}})\bar{\eta}\left(\bar{\gamma}(p^{(i)})\right)\\
  =&p_{\text{maj}}+(1-2p_{\text{maj}})\bar{\eta}\left(\bar{\gamma}(p^{(i)})\right),
\end{split}
\end{equation}
where $\bar{\eta}(\cdot)$ is defined in~\eqref{Eta} and $\bar{\gamma}(\cdot)$ is defined in~\eqref{gamma}.
\end{lemma}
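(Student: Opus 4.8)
The plan is to prove \eqref{Irr_DE} as a routine density‑evolution computation: track the error probability of a single message through one check‑node update followed by one variable‑node update, using the independence afforded by the cycle‑free property of the decoding neighborhoods (which holds here because the girth bound and the bound on the overall number of iterations have been arranged to hold). The statement is the specialization of \cite[Thm.~1]{Yaz_TC_01} to our computing setup --- with the stored register bit, in error with probability $p_0$, playing the role of the ``channel output,'' and the reduction to the all‑zero codeword justified by the symmetry argument already given above --- so the work is largely bookkeeping: placing each of $p_\text{xor}$ and $p_\text{maj}$ in its correct spot.

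First I would handle the check‑to‑variable message. Fix an edge $(v,c)$; the message $m_{c\to v}^{(i)}$ is produced by a single noisy XOR gate as $\bigoplus_{v'\in\mathcal{N}_c\setminus v} m_{v'\to c}^{(i)}$, and by the symmetry/independence reduction its $d_c-1$ summands are in error independently, each with probability $p^{(i)}$. Lemma~\ref{idp_odd} then gives that the noiseless modulo‑$2$ sum is wrong with probability $\bar\alpha(p^{(i)})=\tfrac12\big[1-(1-2p^{(i)})^{d_c-1}\big]$, which is \eqref{alpha}; composing with the XOR gate's own independent output flip (probability $p_\text{xor}$) gives $\bar\gamma(p^{(i)})=(1-p_\text{xor})\bar\alpha(p^{(i)})+p_\text{xor}\big(1-\bar\alpha(p^{(i)})\big)$, which is \eqref{gamma}.

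Next I would handle the variable‑to‑check message. The message $m_{v\to c}^{(i+1)}$ is the output of the threshold‑$b$ rule of Appendix~\ref{GB_details} applied to the $d_v-1$ incoming check messages --- each independently wrong with probability $\bar\gamma:=\bar\gamma(p^{(i)})$ --- with the stored register bit (error probability $p_0$) acting as default when the threshold is not met. Writing $l$ for the number of correct incoming messages, so $l\sim\text{Binomial}(d_v-1,1-\bar\gamma)$ and $\Pr[l=j]=\Lambda_j(\bar\gamma)$ as in \eqref{Lambda}, I would enumerate the patterns on which the rule emits the wrong value --- at least $b$ incoming messages supporting the wrong value (equivalently $l\le d_v-b-1$), together with, on a tie, the event that the default is wrong --- and sum these to obtain the error probability of the noiseless variable‑node decision as $\bar\eta(\bar\gamma)$ in the form \eqref{Eta}. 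Finally, composing with the majority gate's independent output flip (probability $p_\text{maj}$) gives $p^{(i+1)}=p_\text{maj}\big(1-\bar\eta(\bar\gamma(p^{(i)}))\big)+(1-p_\text{maj})\bar\eta(\bar\gamma(p^{(i)}))=p_\text{maj}+(1-2p_\text{maj})\bar\eta(\bar\gamma(p^{(i)}))$, which is \eqref{Irr_DE}.

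I expect the only genuinely delicate step to be the tie handling in this last enumeration, since our decoder breaks ties with a fresh random bit rather than with the channel value $y_v$ of the original Gallager‑B rule (cf.\ Remark~\ref{Yaz_remark}). I would dispatch it by separating cases on the parity of $d_v$: for even $d_v$ no tie among $d_v-1$ messages can occur, so the sum collapses immediately (here $d_v-b-1=b-1$ and $\bar\eta$ does not even depend on $p_0$); for odd $d_v$ the only tie index is $l=d_v-b$, and one checks directly that it contributes exactly the term by which $\sum_{l=0}^{b-1}\Lambda_l$ exceeds $\sum_{l=0}^{d_v-b-1}\Lambda_l$ in \eqref{Eta}, so that the recursion matches \cite{Yaz_TC_01}. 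The remaining points --- that the computing setting carries no ``source'' randomness and that the all‑zero‑codeword reduction is legitimate --- were already settled in the discussion immediately preceding the lemma, so I would simply invoke them.
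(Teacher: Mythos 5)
Your derivation is correct, but note that the paper does not actually prove this lemma: it is imported verbatim as Theorem~1 of \cite{Yaz_TC_01}, and the only original work done with it in the paper is the subsequent upper bound of Lemma~\ref{Irr_DE_ge_lmm}. What you have written out is essentially the proof from \cite{Yaz_TC_01} itself --- the standard one-round density-evolution bookkeeping: Lemma~\ref{idp_odd} composed with the noisy XOR flip gives $\bar\gamma$ of \eqref{gamma}, the binomial enumeration over the number $l$ of correct incoming check messages gives $\bar\eta$ of \eqref{Eta}, and the final majority-gate flip gives \eqref{Irr_DE}; the independence used in both steps is exactly what the girth/iteration-count arrangement supplies. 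So your argument is a legitimate self-contained replacement for the citation. The one point to state more carefully is the tie-breaking, which your last paragraph slightly misresolves: the formula \eqref{Eta}, with its $p_0$ and $1-p_0$ weights, is the recursion for the \emph{original} Gallager-B rule in which a tie defaults to a bit that is wrong with probability $p_0$ (the channel output in \cite{Yaz_TC_01}); the decoder actually implemented in Appendix~\ref{GB_details} breaks ties with a fresh uniform bit, for which the unique tie index $l=d_v-b$ (present only for odd $d_v$) is weighted by $1/2$ rather than $p_0$, yielding \eqref{bar_eta_change} rather than \eqref{Eta}. Hence with the random-bit rule the recursion does \emph{not} literally match \eqref{Irr_DE} unless $d_v$ is even or $p_0=1/2$; the paper absorbs this discrepancy not in the lemma but inside the proof of Lemma~\ref{Irr_DE_ge_lmm}, where both versions of $\bar\eta$ are shown to obey the same bound $\bar\eta_0$. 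If you present your argument as a proof of the lemma as stated (the $p_0$-default rule), it is complete; if you intend it to cover the decoder of Appendix~\ref{GB_details}, you should conclude with \eqref{Irr_DE} with $\bar\eta$ replaced by \eqref{bar_eta_change}.
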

The lower bound $p_e^\text{dec}>p_{\text{maj}}$ in~\eqref{Converge} follows from the fact that $p_\text{maj}<\frac{1}{2}$ (see Gate Model I) and $p_e^\text{dec}=p^{(C)}=f(p^{(C-1)})$, where $C$ is the total number of iterations of decoding at each level. In what follows, we upper-bound the RHS of~\eqref{Irr_DE} by upper-bounding the functions $\bar{\gamma}(\cdot)$ and $\bar{\eta}(\cdot)$ defined in~\eqref{gamma} and \eqref{Eta}. The result is shown in Lemma~\ref{Irr_DE_ge_lmm}.
\begin{lemma}\label{Irr_DE_ge_lmm}
For regular LDPC codes with check node degree $d_c$ and variable node degree $d_v$
\begin{equation}\label{Irr_DE_ge}
  \noindent p^{(i+1)}<f_0(p^{(i)}):=p_{\text{maj}}+\bar{\eta}_0(\bar{\gamma}_0(p^{(i)})),
\end{equation}
where
\begin{equation}\label{Eta_0}
  \bar{\eta}_0(\bar{\gamma})=\left( \begin{matrix}
  {{d}_{v}}-1 \\
  \lfloor \frac{d_v+1}{2}\rfloor \\
\end{matrix} \right){\bar{\gamma} }^{\left\lceil\frac{d_v-1}{2} \right\rceil}.
\end{equation}
\begin{equation}\label{gamma_0}
  \bar{\gamma}_0(u)=(d_c-1)u+p_\text{xor}.
\end{equation}
\end{lemma}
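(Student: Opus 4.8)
The plan is to bound the density-evolution map $f(\cdot)$ of Lemma~\ref{irregular_DE} termwise. Writing $f(p^{(i)})=p_{\text{maj}}+(1-2p_{\text{maj}})\bar{\eta}\!\left(\bar{\gamma}(p^{(i)})\right)$ and using $p_{\text{maj}}<\tfrac12$ (Gate Model I), so that $1-2p_{\text{maj}}<1$, it suffices to prove the pointwise inequality $\bar{\eta}\!\left(\bar{\gamma}(u)\right)\le\bar{\eta}_0\!\left(\bar{\gamma}_0(u)\right)$; the strict inequality in~\eqref{Irr_DE_ge} then follows from the factor $1-2p_{\text{maj}}$ (and, in the degenerate case $p_{\text{maj}}=0$, from strictness of the Bernoulli step below). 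Since $\bar{\gamma}(\cdot)$, $\bar{\eta}(\cdot)$ and $\bar{\eta}_0(\cdot)$ are all non-decreasing on the operating range (each is an error probability that is monotone in its incoming error probability, and $\bar{\eta}_0$ is a positive monomial), the composed bound will follow from two one-variable inequalities: (i) $\bar{\gamma}(u)\le\bar{\gamma}_0(u)=(d_c-1)u+p_{\text{xor}}$ for $u\in[0,\tfrac12)$, and (ii) $\bar{\eta}(\bar{\gamma})\le\bar{\eta}_0(\bar{\gamma})=\binom{d_v-1}{\lfloor(d_v+1)/2\rfloor}\bar{\gamma}^{\lceil(d_v-1)/2\rceil}$ for all $\bar{\gamma}$ in the range actually attained, i.e.\ $\bar{\gamma}\le\bar{\gamma}_0(p_{\text{reg}})$.

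Step (i) is short. By Bernoulli's inequality $(1-2u)^{d_c-1}\ge 1-2(d_c-1)u$, so from~\eqref{alpha} we get $\bar{\alpha}(u)\le(d_c-1)u$. Rewriting~\eqref{gamma} as $\bar{\gamma}(u)=\bar{\alpha}(u)+p_{\text{xor}}\bigl(1-2\bar{\alpha}(u)\bigr)$ and using $\bar{\alpha}(u)\le\tfrac12$, the second term is at most $p_{\text{xor}}$, which gives $\bar{\gamma}(u)\le(d_c-1)u+p_{\text{xor}}=\bar{\gamma}_0(u)$.

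For step (ii) I would re-derive the variable-node error probability directly under our tie-breaking rule (this is the place where the discrepancy flagged after Lemma~\ref{irregular_DE} and in Remark~\ref{Yaz_remark} is handled): with $b=\lfloor(d_v+1)/2\rfloor$, the outgoing Gallager-B message is wrong only if at least $b$ of the $d_v-1$ incoming check messages are wrong, or — possible only for odd $d_v$ — exactly $(d_v-1)/2$ are wrong and the random tie-breaking bit $z$ is wrong (probability $\tfrac12$, which replaces the "$p_0$" of the quoted Yazdi formula~\eqref{Eta}; for even $d_v$ there is no tie state and the rules coincide). A union bound over the $\binom{d_v-1}{b}$ size-$b$ subsets gives $\Pr[\ge b\text{ wrong}]\le\binom{d_v-1}{b}\bar{\gamma}^{b}$, and trivially $\Pr[\text{exactly }(d_v-1)/2\text{ wrong}]\le\binom{d_v-1}{(d_v-1)/2}\bar{\gamma}^{(d_v-1)/2}$. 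For even $d_v$, since $d_v/2=\lceil(d_v-1)/2\rceil=\lfloor(d_v+1)/2\rfloor$, the first bound already yields $\bar{\eta}(\bar{\gamma})\le\binom{d_v-1}{d_v/2}\bar{\gamma}^{d_v/2}=\bar{\eta}_0(\bar{\gamma})$ for all $\bar{\gamma}$. For odd $d_v$, combining the two bounds and factoring out $\bar{\gamma}^{(d_v-1)/2}$ reduces the claim to $\binom{d_v-1}{b}\bar{\gamma}+\tfrac12\binom{d_v-1}{(d_v-1)/2}\le\binom{d_v-1}{b}$; since $\binom{d_v-1}{(d_v-1)/2}/\binom{d_v-1}{b}=(d_v+1)/(d_v-1)$ with $b=(d_v+1)/2$, this holds whenever $\bar{\gamma}\le\tfrac{d_v-3}{2(d_v-1)}$, a bound that is positive because $d_v\ge4$ (assumption (A.1)) and is satisfied with enormous room in the operating regime, since $\bar{\gamma}\le\bar{\gamma}_0(p_{\text{reg}})$ and $p_{\text{reg}}$ is tiny by~\eqref{p_reg} and~\eqref{p0satis}.

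The main obstacle I anticipate is exactly the odd-$d_v$ case in step (ii): the naive union bound produces the central binomial coefficient $\binom{d_v-1}{(d_v-1)/2}$, which is strictly larger than the coefficient $\binom{d_v-1}{\lfloor(d_v+1)/2\rfloor}$ that appears in $\bar{\eta}_0$, so the inequality cannot be obtained by counting alone — one must genuinely use the factor $\tfrac12$ from the random tie-break and check that the resulting trade-off closes for small $\bar{\gamma}$. A secondary point requiring care is ensuring the rule discrepancy with~\cite{Yaz_TC_01} is accounted for in the conservative direction, i.e.\ that our $\bar{\eta}$ (with $\tfrac12$ in the tie coefficient, hence no smaller than Yazdi's) is the quantity being upper-bounded, so that $f_0$ is a valid bound for the actual ENCODED-T decoder rather than only for Yazdi's.
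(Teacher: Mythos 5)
Your argument is correct in the regime where the lemma is actually used, and two of its three components coincide with the paper's proof: the paper likewise replaces $\bar{\eta}$ of~\eqref{Eta} by $\tfrac12\sum_{l=0}^{d_v-b-1}\Lambda_l+\tfrac12\sum_{l=0}^{b-1}\Lambda_l$ to account for the randomized tie-break of Remark~\ref{Yaz_remark}, and it bounds $\bar{\gamma}(u)\le\bar{\alpha}(u)+p_\text{xor}\le(d_c-1)u+p_\text{xor}$ exactly as in your step (i). The divergence is in step (ii). The paper first collapses the two tails into one via $d_v-b\le b$, so that $\bar{\eta}\le\sum_{l=0}^{b-1}\Lambda_l=\Pr[W\ge d_v-b]$ where $W$ is the number of wrong incoming messages, and then bounds this single tail by $\binom{d_v-1}{b-1}\bar{\gamma}^{\,d_v-b}$ through a binomial-coefficient manipulation that is the union bound in disguise; that chain holds for every $\bar{\gamma}\in[0,1]$, so the paper's version of the inequality is unconditional. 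Your route---union-bounding $\Pr[W\ge b]$ and the tie term separately and then absorbing the former into the latter---happens to produce, for odd $d_v$, the smaller coefficient $\binom{d_v-1}{(d_v+1)/2}$ that is literally written in~\eqref{Eta_0} (the paper's own computation only reaches the larger central coefficient $\binom{d_v-1}{\lfloor(d_v-1)/2\rfloor}$, which is what~\eqref{p0satis} and the completion of Lemma~\ref{lemma7} actually use; \eqref{Eta_0} appears to carry a $\pm1$ typo in the lower index that only matters for odd $d_v$). The price of your route is the extra hypothesis $\bar{\gamma}\le\frac{d_v-3}{2(d_v-1)}$.

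That hypothesis is the one substantive gap between what you prove and what Lemma~\ref{Irr_DE_ge_lmm} asserts: the statement carries no restriction on $p^{(i)}$, and for odd $d_v$ your inequality genuinely fails when $\bar{\gamma}$ is not small, since the tie term $\tfrac12\binom{d_v-1}{(d_v-1)/2}\bar{\gamma}^{(d_v-1)/2}$ alone then exceeds $\bar{\eta}_0(\bar{\gamma})$. You correctly note that the condition holds with enormous margin wherever the lemma is invoked (inside Lemma~\ref{lemma7} one has $p^{(i)}\le p_\text{reg}$, hence $\bar{\gamma}\le d_c(d_T+1)p_\text{thr}$, far below $\frac{d_v-3}{2(d_v-1)}$ for $d_v\ge4$), so nothing downstream is affected; but to get the lemma verbatim you should either state the restriction explicitly or adopt the paper's collapse $\Pr[W\ge b]\le\Pr[W\ge d_v-b]$, which removes the condition at the cost of the central binomial coefficient. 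Your remaining points---monotonicity of $\bar{\eta}_0$ to compose the two one-variable bounds, and strictness from the factor $1-2p_\text{maj}<1$---are fine.
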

\begin{proof}
First, note that the decoding algorithm used in~\cite{Yaz_TC_01} is slightly different from ours in that the tie-breaking rule in~\cite{Yaz_TC_01} is $m_{v\to c}^{(i)}=y_v$, which is different from our rule $m_{v\to c}^{(i)}=z$, where $z$ is a randomly generated bit (see Remark~\ref{Yaz_remark}). It can be shown that, if our updating rule is used, the $\bar{\eta}$ function in the density evolution function~\eqref{Irr_DE} should be changed from~\eqref{Eta} to
\begin{equation}\label{bar_eta_change}
  \bar{\eta}(\bar{\gamma}):=\frac{1}{2}\mathop\sum \limits_{l=0}^{d_v-b-1} \Lambda_l(\bar{\gamma})+\frac{1}{2}\mathop\sum \limits_{l=0}^{b-1} \Lambda_l(\bar{\gamma}).
\end{equation}
When $d_v$ is an even number, $b=\lfloor\frac{d_v+1}{2}\rfloor=\frac{d_v}{2}=d_v-b$. When $d_v$ is an odd number, $b=\lfloor\frac{d_v+1}{2}\rfloor=\frac{d_v+1}{2}=d_v-b+1$. In both cases, we have $d_v-b\le b$. Therefore,~\eqref{bar_eta_change} can be upper bounded by
\begin{equation}
\begin{split}
  \bar{\eta}(\bar{\gamma})&=\frac{1}{2}\mathop\sum \limits_{l=0}^{d_v-b-1} \Lambda_l(\bar{\gamma})+\frac{1}{2}\mathop\sum \limits_{l=0}^{b-1} \Lambda_l(\bar{\gamma})\\
  &\le\frac{1}{2}\mathop\sum \limits_{l=0}^{b-1} \Lambda_l(\bar{\gamma})+\frac{1}{2}\mathop\sum \limits_{l=0}^{b-1} \Lambda_l(\bar{\gamma})=\mathop\sum \limits_{l=0}^{b-1} \Lambda_l(\bar{\gamma}).
\end{split}
\end{equation}
Further
\[\begin{split}
   \sum\limits_{l=0}^{b-1}{{{\Lambda }_{l}}(\bar{\gamma} )}&=\sum\limits_{l=0}^{b-1}{\left( \begin{matrix}
  {{d}_{v}}-1 \\
  l \\
\end{matrix} \right){{\left( 1-\bar{\gamma}  \right)}^{l}}{{\bar{\gamma} }^{{{d}_{v}}-1-l}}} \\
 & ={{\bar{\gamma} }^{{{d}_{v}}-b}}\sum\limits_{l=0}^{b-1}{\left( \begin{matrix}
  {{d}_{v}}-1 \\
  l \\
\end{matrix} \right){{\left( 1-\bar{\gamma}  \right)}^{l}}{{\bar{\gamma} }^{b-1-l}}} \\
 & \le {{\bar{\gamma} }^{{{d}_{v}}-b}}\sum\limits_{l=0}^{b-1}{\left( \begin{matrix}
  {{d}_{v}}-1 \\
  b-1 \\
\end{matrix} \right){{\left( 1-\bar{\gamma}  \right)}^{l}}{{\bar{\gamma} }^{b-1-l}}} \\
 & \le {{\bar{\gamma} }^{{{d}_{v}}-b}}\sum\limits_{l=0}^{b-1}{\left( \begin{matrix}
  {{d}_{v}}-1 \\
  b-1 \\
\end{matrix} \right)\left( \begin{matrix}
  b-1 \\
  l \\
\end{matrix} \right){{\left( 1-\bar{\gamma}  \right)}^{l}}{{\bar{\gamma} }^{b-1-l}}} \\
 & =\left( \begin{matrix}
  {{d}_{v}}-1 \\
  b-1 \\
\end{matrix} \right){{\bar{\gamma} }^{{{d}_{v}}-b}}
\overset{(a)}{=}\left( \begin{matrix}
  {{d}_{v}}-1 \\
  b-1 \\
\end{matrix} \right){\bar{\gamma} }^{\left\lceil\frac{d_v-1}{2} \right\rceil},
\end{split}\]
where step (a) follows from $d_v-b=\left\lceil\frac{d_v-1}{2} \right\rceil$, which can be readily checked by $b=\lfloor \frac{d_v+1}{2}\rfloor$. Therefore,
\begin{equation}\label{eta_bd}
  \bar{\eta}(\bar{\gamma})\le\left( \begin{matrix}
  {{d}_{v}}-1 \\
  \lfloor \frac{d_v-1}{2}\rfloor \\
\end{matrix} \right){\bar{\gamma} }^{\left\lceil\frac{d_v-1}{2} \right\rceil}=\bar{\eta}_0(\bar{\gamma}).
\end{equation}
For the function $\bar{\gamma}(u)$ in~\eqref{gamma}, we upper bound it with the following two inequalities:
\[\bar{\alpha} (u)=\frac{1-{{(1-2u)}^{{{d}_{c}}-1}}}{2}\le \frac{1-\left[ 1-({{d}_{c}}-1)2u \right]}{2}=({{d}_{c}}-1)u,\]
\[\bar{\gamma}(u)=\bar{\alpha}(u)+p_\text{xor}-2\bar{\alpha}(u)p_\text{xor}<\bar{\alpha}(u)+p_\text{xor}.\]
Therefore
\begin{equation}\label{gamma_bd}
  \bar{\gamma}(u)<\bar{\alpha}(u)+p_\text{xor}\le({{d}_{c}}-1)u +p_\text{xor}=\bar{\gamma}_0(u).
\end{equation}
Combining~\eqref{gamma_bd} and~\eqref{eta_bd}
\[\begin{split}
   {{p}^{(i+1)}}&={{p}_{\text{maj}}}\left[ 1-\bar{\eta} (\bar{\gamma} ({{p}^{(i)}})) \right]+(1-{{p}_{\text{maj}}})\bar{\eta} \left( \bar{\gamma} ({{p}^{(i)}}) \right) \\
 & ={{p}_{\text{maj}}}+(1-2{{p}_{\text{maj}}})\bar{\eta} \left( \bar{\gamma} ({{p}^{(i)}}) \right) \\
 & \le {{p}_{\text{maj}}}+\bar{\eta} \left( \bar{\gamma} ({{p}^{(i)}}) \right) \\
 & \le {{p}_{\text{maj}}}+{{\bar{\eta} }_{0}}\left( \bar{\gamma} ({{p}^{(i)}}) \right) \overset{(a)}{\le} {{p}_{\text{maj}}}+{{\bar{\eta} }_{0}}\left( {{\bar{\gamma} }_{0}}({{p}^{(i)}}) \right),
\end{split}\]
where step (a) is due to the fact that ${\bar{\eta} }_{0}(\bar{\gamma})$ is monotonically increasing.
\end{proof}
\subsection{Completing the Proof of Lemma~\ref{lemma7}}
We need to prove that if the bit error probability before decoding is smaller than $p_\text{reg}=(D+1)p_\text{thr}+p_\text{xor}$, the bit error probability after decoding is smaller than $p_\text{maj}+\frac{1}{d_T}p_\text{thr}$. Using Lemma~\ref{Irr_DE_ge_lmm}, we only need to prove
\begin{equation}
  {{p}_{\text{maj}}}+\left( \begin{matrix}
   {{d}_{v}}-1  \\
   \lfloor \frac{{{d}_{v}}-1}{2}\rfloor   \\
\end{matrix} \right){{\left[ ({{d}_{c}}-1)p_\text{reg}+{{p}_{\text{xor}}} \right]}^{\left\lceil \frac{{{d}_{v}}-1}{2}\right\rceil }}<p_\text{maj}+\frac{1}{d_T}p_\text{thr},
\end{equation}
which is equivalent to
\begin{equation}\label{der9}
  \begin{split}
{{\left[ ({{d}_{c}}-1)p_\text{reg}+{{p}_{\text{xor}}} \right]}^{\left\lceil \frac{{{d}_{v}}-1}{2}\right\rceil }}<\frac{1}{d_T}{{\left( \begin{matrix}
   {{d}_{v}}-1  \\
   \lfloor \frac{{{d}_{v}}-1}{2}\rfloor   \\
\end{matrix} \right)}^{-1}}{{p}_{\text{thr}}}.
\end{split}
\end{equation}
We know that
\[\begin{split}
   ({{d}_{c}}-1){{p}_\text{reg}}+{{p}_{\text{xor}}}&\overset{(a)}{=}({{d}_{c}}-1)\left[ {{p}_{\text{xor}}}+(d_T+1){{p}_{\text{thr}}} \right]+{{p}_{\text{xor}}} \\
 & =({{d}_{c}}-1)(d_T+1){{p}_{\text{thr}}}+{{d}_{c}}{{p}_{\text{xor}}} \\
 & \overset{(b)}{\le} ({{d}_{c}}-1)(d_T+1){{p}_{\text{thr}}}+(d_T+1){{p}_{\text{thr}}}\\
 &\le {{d}_{c}}(d_T+1){{p}_{\text{thr}}},
\end{split}\]
where step (a) is from the definition $p_\text{reg}=(d_T+1)p_\text{thr}+p_\text{xor}$ and step (b) is from the second condition in~\eqref{p0satis}. Thus, to prove~\eqref{der9}, it suffices to prove
\[{{\left(  {{d}_{c}}(d_T+1){{p}_{\text{thr}}}\right)}^{\left\lceil \frac{{{d}_{v}}-1}{2}\right\rceil }}<\frac{1}{d_T}{{\left( \begin{matrix}
   {{d}_{v}}-1  \\
   \lfloor \frac{{{d}_{v}}-1}{2}\rfloor   \\
\end{matrix} \right)}^{-1}}{{p}_{\text{thr}}},\]
which is equivalent to
\[{{p}_{\text{thr}}}^{\left\lceil \frac{{{d}_{v}}-3}{2}\right\rceil }<{{\left( \begin{matrix}
   {{d}_{v}}-1  \\
   \lfloor \frac{{{d}_{v}}-1}{2}\rfloor   \\
\end{matrix} \right)}^{-1}}{{d}_{c}}^{-\left\lceil \frac{{{d}_{v}}-1}{2}\right\rceil }d_T^{-1}{{(d_T+1)}^{-\left\lceil \frac{{{d}_{v}}-1}{2}\right\rceil }}.\]
We have defined $d=\left\lceil \frac{{{d}_{v}}-1}{2}\right\rceil ={{d}_{v}}-b$ in Theorem~\ref{Main_thm}. Thus, the above inequality is ensured by
\[{{p}_{\text{thr}}}<{{\left( \begin{matrix}
   {{d}_{v}}-1  \\
   \lfloor \frac{{{d}_{v}}-1}{2}\rfloor   \\
\end{matrix} \right)}^{-\frac{1}{d-1}}}{{d}_{c}}^{-\frac{d}{d-1}}d_T^{-\frac{1}{d-1}}{{(d_T+1)}^{-\frac{d}{d-1}}},\]
which is the first condition in~\eqref{p0satis}.

\section{Proof of Theorem~\ref{low_bound_thm}}\label{Lower_Bound}
\textcolor{black}{Theorem~\ref{low_bound_thm} provides a lower bound on the number of operations by lower-bounding the operations done at the entrance stage of the noisy circuit, i.e., operations that have one of the $L$ inputs $(s_1, s_2,...s_L)$ as an argument.} In order to prove Theorem~\ref{low_bound_thm}, we need the following lemma (stated implicitly in~\cite[Proposition 1]{Pip_TIT_91}) which characterizes the equivalence of a noisy-gate model and a noisy-wire model.
\begin{lemma}
\textcolor{black}{For each unreliable gate from Gate Model I ($D,\epsilon$) with error probability $\epsilon$ and fan-in number $\le D$, its output variable can be stochastically simulated by (equivalent in distribution to) another unreliable gate $\tilde{g}$ that computes the same function but with the following property: each input wire flips the input independently with probability $\epsilon/D$ and the gate has additional output noise independent of input wire noise.}
\end{lemma}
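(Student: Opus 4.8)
The plan is to exhibit, for every deterministic Boolean function $g$ of fan-in $d_g\le D$, a randomized gate $\tilde g$ of the claimed form whose output has, for each fixed input, the same $\{0,1\}$-distribution as the Gate-Model-I output $y=g(\mathbf u)\oplus z_g$, and then to note that replacing $g$ by $\tilde g$ (with fresh, independent noise drawn at each use) preserves the joint law of the output of any circuit built from such gates. This is the ``push the gate noise onto the wires'' idea of \cite{Pip_TIT_91}, made precise: since flipping the output of $g$ requires flipping at least one of its $d_g\le D$ input wires, an input-wire flip probability of $\epsilon/D$ accounts for at most $\epsilon$ of the output error, and the small remainder is absorbed into a residual output noise.

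Concretely, I would fix an input $\mathbf u=(u_1,\dots,u_{d_g})$ and define $\tilde g$ by: draw wire-noise bits $W_1,\dots,W_{d_g}$ i.i.d.\ $\mathrm{Bernoulli}(\epsilon/D)$, form $\mathbf u\oplus\mathbf W=(u_1\oplus W_1,\dots,u_{d_g}\oplus W_{d_g})$, and output $\tilde y=g(\mathbf u\oplus\mathbf W)\oplus Z$, where the output-noise bit $Z$ is independent of $(W_1,\dots,W_{d_g})$ and its parameter $q$ is to be chosen. Writing $B:=\mathbbm 1_{\{g(\mathbf u\oplus\mathbf W)\ne g(\mathbf u)\}}$ and $\beta:=\Pr(B=1)$, one has $g(\mathbf u\oplus\mathbf W)=g(\mathbf u)$ on the event $\mathbf W=\mathbf 0$, so a union bound over the wires gives $\beta\le\Pr(\mathbf W\ne\mathbf 0)\le d_g\cdot\tfrac{\epsilon}{D}\le\epsilon<\tfrac12$. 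Since $Z$ is independent of $\mathbf W$, hence of $B$, and $\tilde y=g(\mathbf u)\oplus B\oplus Z$, we get $\Pr(\tilde y\ne g(\mathbf u))=\Pr(B\oplus Z=1)=\beta(1-q)+(1-\beta)q=\beta+q-2\beta q$. Taking
\[q=\frac{\epsilon-\beta}{1-2\beta}\]
makes this exactly $\epsilon$, matching $\Pr(y\ne g(\mathbf u))=\Pr(z_g=1)=\epsilon$.

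The only point that genuinely needs checking — and the main (modest) obstacle — is that this $q$ is a legitimate probability. From $0\le\beta\le\epsilon<\tfrac12$ we get $1-2\beta\ge1-2\epsilon>0$ and $\epsilon-\beta\ge0$, so $q\ge0$; and $q\le1$ is equivalent to $\epsilon+\beta\le1$, which holds because $\epsilon<\tfrac12$ and $\beta<\tfrac12$. Thus $Z\sim\mathrm{Bernoulli}(q)$ is well defined (the value of $q$, and hence the residual output noise, may depend on $\mathbf u$ and on $g$, which is permitted since the requirement is only that the output noise be independent of the input-wire noise). Consequently, for every fixed input $\tilde y$ equals $g(\mathbf u)$ with probability $1-\epsilon$ and its complement with probability $\epsilon$, so $\tilde y\overset{d}{=}y$; as both are conditionally on the input independent of all other randomness in a circuit, drawing a fresh independent copy of $(W_1,\dots,W_{d_g},Z)$ at each activation of $\tilde g$ reproduces the joint output law of any circuit composed of Gate-Model-I gates, which is exactly the asserted stochastic equivalence.
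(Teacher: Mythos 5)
Your proof is correct and takes essentially the same approach as the paper: push the gate noise onto the input wires at rate $\epsilon/D$, bound the resulting output-flip probability $\beta$ by $\epsilon$ via a union bound over the $d_g\le D$ wires, and absorb the deficit into a residual independent output noise. The only difference is that you exhibit the residual parameter explicitly as $q=(\epsilon-\beta)/(1-2\beta)$ and verify $q\in[0,1]$, whereas the paper settles for a ``standard continuity argument'' asserting the existence of some $\epsilon'<\epsilon$; your version is the more complete one.
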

\begin{proof}For an arbitrary unreliable gate
\[y=g(u_1,u_2,...,u_d)\oplus z_g,d\le D,\]
consider another unreliable gate together with noisy wires
\[\begin{split}\tilde{y}&=\tilde{g}(u_1,u_2,...,u_d)\oplus \tilde{z}_g\\
&=g(u_1\oplus w_1,u_2\oplus w_2,...,u_d\oplus w_d)\oplus \tilde{z}_g,d\le D,\end{split}\]
where $w_j$ is the noise on the $j$-th input wire and takes value $1$ with probability $\epsilon/D$. The probability that all $d$ wires convey the correct inputs is $(1-\epsilon/D)^d>1-d\frac{\epsilon}{D}>1-\epsilon$. Therefore, if $\tilde{z}_g$ is $0$ w.p.1, the error of $\tilde{g}$ will be smaller than $\epsilon$. Thus, using standard continuity arguments, we can find a random variable $\tilde{z}_g$ which equals to $1$ w.p. $\epsilon'<\epsilon$, while making $\tilde{y}$ and $y$ equivalent in distribution.
\end{proof}
Based on this lemma, we know that a noisy network defined in Section~\ref{Model} can always be replaced by another network, where each wire has an error probability $\frac{\epsilon}{D}$. Before \textcolor{black}{a specific input $s_k$} enters the noisy circuit, it is always transmitted along the wires connected to the \textcolor{black}{entrance stage} of the gates in the circuit. Because of the assumption that gates after the inputs are noisy, the bit will be `sampled' by the noisy wires. For convenience of analysis, we assume each gate can only be used once so that the number of operations is equal to the number of unreliable gates. Now that each gate only computes once, each noisy wire can only carry information once as well. We assume each $s_k$ is transmitted on $T_k$ distinct wires. Then, the probability that the message on all $T_k$ wires flips is
\begin{equation}
p_k=(\epsilon/D)^{T_k}.
\end{equation}
\textcolor{black}{Therefore, the error probability of the input bit $s_k$ satisfies $P_\text{in}^k>p_k$. Since matrix $\mathbf{A}$ is assumed to have full row rank, if the linear transformation computation is noiseless, even a single input bit error leads to an output block error. Therefore, even when the linear transformation computation is noiseless, the output block error probability $P_e^\text{blk}$ is greater than the input error probability $P_\text{in}^k$. Since the computation is noisy, $P_e^\text{blk}$ is still greater than $P_\text{in}^k$, and hence is greater than $p_k$. Therefore, if $(\epsilon/D)^{T_k}=p_k> p_{\text{tar}}$, the block error probability $P_e^\text{blk}>P_\text{in}^k>p_k>p_{\text{tar}}$, which contradicts with the aim to make the block error probability smaller than $p_{\text{tar}}$}. Thus,
\[p_{\text{tar}}>(\epsilon/D)^{T_k},\]
which means that for any bit $s_k$
\begin{equation}
T_k>\frac{\log{1/p_{\text{tar}}}}{\log{D/\epsilon}}.
\end{equation}
Therefore, the number of wires connected to each input bit must be at least $\frac{\log 1/p_{\text{tar}}}{\log D/\epsilon}$. Since the number of input bits is $L$, the total number of wires connected to all input bits is at least $\frac{L\log 1/p_{\text{tar}}}{\log D/\epsilon}$. Since we are using gates with bounded fan-in smaller than $D$, the number of gates is at least $\frac{L\log 1/p_{\text{tar}}}{D\log D/\epsilon}$, so does the number of operations. Since there are $K$ output bits, the number of operations per output bit $\mathscr{N}_\mathrm{per\text{-}bit}>\frac{L\log 1/p_{\text{tar}}}{KD\log D/\epsilon}$.

\section{Codes that satisfy the requirement (A.3)}\label{worst_case}
The existence of codes that satisfy the requirement (A.3) follows from a result in \cite{burshtein2008error}. We first present the result from \cite{burshtein2008error}.

Define $\beta_0,\beta_1,\beta_2,\beta_3$ respectively as the largest integer less than $d_v/2$, the largest integer less than or equal to $d_v/2$, the smallest integer greater than or equal to $d_v/2$, and the smallest integer greater than $d_v/2$. Create four real parameters $\gamma_{12}$, $\delta_{12}$, $\pi_0$ and $\omega_0$ that satisfy the following inequalities
\begin{equation}\label{Bur5}
(1-\theta)\alpha N\le\gamma_{12}N+\delta_{12}N,
\end{equation}
\begin{equation}\label{Bur6}
0\le \gamma_{12}N\le \alpha N,
\end{equation}
\begin{equation}\label{Bur7}
0\le \pi_0(1-R) N\le \omega_0d_vN\le\alpha d_vN,
\end{equation}
\begin{equation}\label{Bur8}
\begin{split}
\beta_3(\alpha-\gamma_{12})N\le& \omega_0d_vN\\
\le&\min\left(\frac{d'}{d_c}\pi_0 d_vN,\gamma_{12}\beta_1N+d_v(\alpha-\gamma_{12})N\right),
\end{split}
\end{equation}
where $d'$ is the largest odd number which is less than or equal to $d_c$, and
\begin{equation}\label{Bur9}
0\le \delta_{12}N\beta_2 \le(\pi_0-\omega_0)d_vN.
\end{equation}
Define the following polynomials
\begin{equation}\label{Bur10}
  F_0(x)\overset{\Delta}{=}\sum_{j=0}^{\beta_0}\binom{d_v}{j}x^j,
\end{equation}
\begin{equation}\label{Bur11}
  F_1(x)\overset{\Delta}{=}\sum_{j=0}^{\beta_1}\binom{d_v}{j}x^j,
\end{equation}
\begin{equation}\label{Bur12}
  F_2(x)\overset{\Delta}{=}\sum_{j=\beta_2}^{d_v}\binom{d_v}{j}x^j,
\end{equation}
\begin{equation}\label{Bur13}
  F_3(x)\overset{\Delta}{=}\sum_{j=\beta_3}^{d_v}\binom{d_v}{j}x^j,
\end{equation}
\begin{equation}\label{Bur14}
  G_o(x)\overset{\Delta}{=}\sum_{j=1,3,\dots,d'}\binom{d_c}{j}x^j,
\end{equation}
\begin{equation}\label{Bur15}
  G_e(x)\overset{\Delta}{=}\sum_{j=0,2,\dots,d''}^{c}\binom{d_c}{j}x^j,
\end{equation}
where $d''$ is the largest even number less than or equal to $d_c$. Then we define
\begin{equation}\label{Bur16}
\begin{split}
  \psi(\alpha,\gamma_{12},\delta_{12},\pi_0,\omega_0)\overset{\Delta}{=}h(\gamma_{12},\alpha-\gamma_{12},\delta_{12})+(1-R)h(\pi_0)\\
  +t_1+t_2+u_1+u_2-d_vh(\omega_0,\alpha-\omega_0,\pi_0-\omega_0),
\end{split}
\end{equation}
where $h(\cdot)$ is the entropy function defined as
\begin{equation}
\begin{split}
  h(\tau_1,\tau_2,\dots,\tau_i)=&-\sum_{j=1}^i\tau_j\log \tau_j\\
  &-\left(1-\sum_{j=1}^i\tau_j\right)\log\left(1-\sum_{j=1}^i\tau_j\right),
\end{split}
\end{equation}
and
\begin{equation}\label{Bur17}
  t_1=\inf_{x>0}\left\{\gamma_{12}\log F_1+(\alpha-\gamma_{12})\log F_3-\omega_0d_v\log x\right\},
\end{equation}
\begin{equation}\label{Bur18}
\begin{split}
  t_2=\inf_{x>0}\left\{\delta_{12}\log F_2+(1-\alpha-\delta_{12})\log F_0\right.\\
  \left.-(\pi_0-\omega_0)d_v\log x\right\},
\end{split}
\end{equation}
\begin{equation}\label{Bur19}
  u_1=\inf_{x>0}\left\{\pi_0(1-R)\log G_o -\omega_0d_v\log x\right\},
\end{equation}
\begin{equation}\label{Bur20}
  u_2=\inf_{x>0}\left\{(1-\pi_0)(1-R)\log G_e -(\alpha-\omega_0)d_v\log x\right\}.
\end{equation}
The base of all the logarithms is $e$. Then, Theorem 1 of \cite{burshtein2008error} and the last paragraph on page 521 of \cite{burshtein2008error} implies the following result:
\begin{lemma}(\cite[Theorem 1]{burshtein2008error})\label{Bur_lemma}
Consider the random ensemble of ($d_v,d_c$)-regular LDPC codes with $d_v>4$ and block length $N$. Let $\alpha_0$ be the smallest positive root of the function $f(\alpha)$ which is defined by
\begin{equation}\label{falpha}
  f(\alpha)=\max_{\gamma_{12},\delta_{12},\pi_0,\omega_0}\psi(\alpha,\gamma_{12},\delta_{12},\pi_0,\omega_0),
\end{equation}
where the maximization is over all values of $\gamma_{12},\delta_{12},\pi_0,\omega_0$ that satisfy \eqref{Bur5}-\eqref{Bur9}. Then, for any $\bar{\alpha}_0<\alpha_0$, if $N$ is sufficiently large, then except  for almost all codes in this ensemble can correct at least $\theta\bar{\alpha}_0 N$ errors out of any arbitrary $\bar{\alpha}_0 N$ errors using one iteration of the PBF algorithm.
\end{lemma}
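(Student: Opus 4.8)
The plan is to prove this by the first-moment (union-bound) method over the random $(d_v,d_c)$-regular LDPC ensemble, which is exactly the technique of \cite{burshtein2008error}; indeed the statement is essentially a restatement of \cite[Theorem~1]{burshtein2008error}, so the work is to identify $\psi$ as the exponential growth rate of a certain counting quantity and then invoke the definition of $\alpha_0$. Fix a target weight $w=\bar\alpha_0 N$. Call a weight-$w$ error pattern $\mathbf{e}$ \emph{bad} for a given Tanner graph if a single iteration of the PBF algorithm, started from $\mathbf{e}$, leaves strictly more than $(1-\theta)w$ erroneous variable nodes (equivalently, corrects fewer than $\theta w$ of the original errors). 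It suffices to show $\mathbb{E}[Z]\to 0$, where $Z$ counts pairs (weight-$w$ pattern, ``bad'' event), the expectation being over the uniform choice of graph in the configuration-model ensemble; Markov's inequality then gives that all but a vanishing fraction of codes admit no bad weight-$w$ pattern, which is the claim.

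Next I would stratify a bad configuration by its combinatorial ``type.'' After one PBF round each originally-erroneous variable node is either flipped back (corrected) or not, and each originally-correct node is either left alone or newly flipped; write $\gamma_{12}N$ for the number of surviving original errors and $\delta_{12}N$ for the number of newly created errors, and introduce $\pi_0$ and $\omega_0$ to record the fraction of unsatisfied parity checks and how the surviving and new errors attach to them. The ``fewer than $\theta w$ corrected'' condition becomes $(1-\theta)\alpha\le\gamma_{12}+\delta_{12}$, and the requirement that the type be realizable by \emph{some} graph with the prescribed PBF behavior (the majority vote at each variable node must produce the claimed flip/no-flip, and parity checks must be (un)satisfied consistently) yields exactly the feasibility inequalities \eqref{Bur5}--\eqref{Bur9}, with the local parity structure controlled by the generating polynomials $F_0,F_1,F_2,F_3$ (variable-node majority counts) and $G_o,G_e$ (odd/even check degrees). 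Counting, in the configuration model, the number of edge-pairings that realize a fixed type and dividing by the total number of pairings gives a ratio of multinomials whose $\tfrac1N\log$, after Stirling and a Laplace (saddle-point) evaluation of the edge-matching sums, is precisely $h(\gamma_{12},\alpha-\gamma_{12},\delta_{12})+(1-R)h(\pi_0)+t_1+t_2+u_1+u_2-d_vh(\omega_0,\alpha-\omega_0,\pi_0-\omega_0)=\psi(\alpha,\gamma_{12},\delta_{12},\pi_0,\omega_0)$. Since there are only polynomially many types, $\tfrac1N\log\mathbb{E}[Z]\to\max_{\gamma_{12},\delta_{12},\pi_0,\omega_0}\psi=f(\alpha)$, the maximization being over the feasible region.

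Finally, $\alpha_0$ is the smallest positive root of $f$, so for every $\bar\alpha_0<\alpha_0$ we have $f(\bar\alpha_0)<0$; hence $\mathbb{E}[Z]\le e^{(f(\bar\alpha_0)+o(1))N}\to 0$, and by Markov all but an exponentially small fraction of codes in the ensemble reduce any pattern of $\bar\alpha_0 N$ errors by at least $\theta\bar\alpha_0 N$ in one PBF iteration; as noted in the closing paragraph of \cite[p.~521]{burshtein2008error}, uniformity of this reduction over \emph{all} patterns of weight at most $\bar\alpha_0 N$ follows from the same union bound taken over all such weights simultaneously, which only multiplies $\mathbb{E}[Z]$ by a factor $N$. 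The main obstacle is the middle step: carefully setting up the type space (which node and edge classes to track), verifying that the feasibility constraints are exactly \eqref{Bur5}--\eqref{Bur9}, and checking that the configuration-model count matches $\psi$ term-by-term — this bookkeeping is the technical heart of \cite[Theorem~1]{burshtein2008error}, and I would cite it rather than reconstruct it in full.
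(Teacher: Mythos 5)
Your proposal is correct and follows essentially the same route as the paper's own argument: both reduce the claim to the first-moment/union bound of \cite[Theorem 1]{burshtein2008error}, identify $\psi$ as the exponential growth rate of the bad-configuration count, sum over all error weights $\alpha N\le\bar\alpha_0 N$, and conclude from $f(\alpha)<0$ for $\alpha$ below the smallest positive root that the fraction of bad codes vanishes. The paper's appendix is in fact terser than your write-up, simply quoting Burshtein's bound $\bar p_e(\bar\alpha_0 N)\le\sum_{\alpha N\le\bar\alpha_0 N}C(\alpha N)^{11/2}e^{Nf(\alpha)}$ and deferring the type-counting bookkeeping to the citation, exactly as you propose to do.
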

\begin{IEEEproof}
Here we briefly summarize the proof in \cite{burshtein2008error}. Denote by $\bar{p}_e(\bar{\alpha}_0 N)$ the fraction of (bad) codes in the ($d_v,d_c$)-regular ensemble that cannot correct a linear fraction $\theta\bar{\alpha}_0N$ of all combinations of $\bar{\alpha}_0N$ errors or less using one iteration of the PBF algorithm. Then, according to (38) in \cite{burshtein2008error}, $\bar{p}_e(\alpha_0 N)$ is upper-bounded by
\begin{equation}\label{Bur_thm}
  \bar{p}_e(\bar{\alpha}_0 N)\le \sum_{\alpha N\le\bar{\alpha}_0N} C(\alpha N)^{11/2}e^{Nf(\alpha)},
\end{equation}
where the summation is over all integer values of $\alpha N\le \bar{\alpha}_0N$, and $C=(2\pi)^{3/2}e^{1/3}\frac{d_v^{9/2}d_c^{3/2}}{\beta_2}$. Therefore, when $\bar{\alpha}_0$ is sufficiently small so that $f(\alpha)<0$ for all $\alpha<\bar{\alpha}_0$, $\bar{p}_e(\bar{\alpha}_0 N)\to 0$ as $N\to\infty$, which means that almost all codes in the ($d_v,d_c$)-regular ensemble can correct $\theta$ fraction of all possible combinations of $\bar{\alpha}_0N$ errors using one iteration of the PBF algorithm.
\end{IEEEproof}
\textcolor{black}{Theorem 1 in \cite{burshtein2008error} was stated for $\theta=0$ and the original constraint corresponding to the constraint \eqref{Bur5} ($(1-\theta)\alpha N\le\gamma_{12}N+\delta_{12}N$) was $\alpha N\le\gamma_{12}N+\delta_{12}N$. In this paper,we use the result for $\theta=\text{constant}>0$. This result can be obtained by directly changing the original constraint $\alpha N\le\gamma_{12}N+\delta_{12}N$ in \cite{burshtein2008error} to the new constraint $(1-\theta)\alpha N\le\gamma_{12}N+\delta_{12}N$ (this direct change is also stated at the bottom of page 521 in \cite{burshtein2008error} after the proof of Theorem 1). A refined bound for \eqref{Bur_thm} can be obtained using (22)(23)(25) and (33) in \cite{burshtein2008error}, which shows
\begin{equation}\label{Bur_thm_1}
\begin{split}
  &\bar{p}_e(\bar{\alpha}_0 N)\le \sum_{\alpha N\le\bar{\alpha}_0N} \left( \sum_{\gamma_{12}N,\delta_{12}N,\pi_0(1-R)N,\omega_0d_vN} \right.\\
  &\left.
  (2\pi Nd_v)^{3/2}e^{1/3}\sqrt{\omega_0(\alpha-\omega_0)(\pi_0-\omega_0)}e^{N\psi(\alpha,\gamma_{12},\delta_{12},\pi_0,\omega_0)}\right),
\end{split}
\end{equation}
where the outer summation is over all integer values of $\alpha N\le \bar{\alpha}_0N$, and the inner summation is over all integer values of $\gamma_{12}N,\delta_{12}N,\pi_0(1-R)N,\omega_0d_vN$ that satisfy \eqref{Bur5} to \eqref{Bur9}. We will use this refined bound to obtain finite-length result in the following example.}
\begin{example}\label{example1}
One example of the parameter choice is $d_v=9$, $d_c=18$ and $\theta=0.15$. In this case, we computed the first positive root of $f(\alpha)=0$ using MATLAB and obtained $\alpha=5.1\cdot 10^{-4}$. This means that using one iteration of the PBF algorithm, we can correct a fraction $\theta=0.15$ of $5.1\cdot10^{-4}\cdot N$ worst-case errors using a $(9,18)$ regular LDPC code when $N$ is sufficiently large. We can also use this result to obtain finite-length bounds (computing an upper bound on the fraction of bad codes using \eqref{Bur_thm_1}). We obtained that at least 4.86\% of $(9,18)$ regular LDPC codes of length $N=50,000$ in the random LDPC ensemble can reduce the number of errors by 15\% using one iteration of the PBF algorithm, when the number of errors is smaller than or equal to 20, which corresponds to the case when $\alpha_0=0.0004$.
\end{example}
The existence of codes that satisfy requirement (A.3) can also be established using Expander LDPC codes. Here, we review some results on expander LDPCs~\cite{Sip_FCS_96}.
\begin{definition}\label{def1} (Expander Graph)
An $(N,P,d_v,\gamma,\alpha)$ bipartite expander is a $d_v$-left-regular bipartite graph $\mathcal{G}(\mathcal{V}_L\cup \mathcal{V}_R,\mathcal{E})$ where $|\mathcal{V}_L|=N$ and $|\mathcal{V}_R|=P$. In this bipartite graph, it holds that $\forall \mathcal{S}\subset \mathcal{V}_L$ with $|\mathcal{S}|\le \gamma N$, $\mathcal{N}(\mathcal{S})\ge \alpha d_v|\mathcal{S}|$, where $\mathcal{N}(\mathcal{S})$ denotes the neighborhood of the set $\mathcal{S}$, i.e., the set of nodes in $\mathcal{V}_{R}$ connected to $\mathcal{S}$.
\end{definition}
An $(N,P,d_v,\gamma,\alpha)$ expander LDPC code is a length-$N$ LDPC code, where the Tanner graph of the code is the corresponding expander graph with $\mathcal{V}_{L}$ corresponding to the set of variable nodes and $\mathcal{V}_{R}$ the parity check nodes. We use $d_c=d_vN/P$ to denote the right-degree of the expander code.
\begin{lemma}\label{expander_fraction}(\cite[Thm11]{Sip_FCS_96})
Using an $(N,P,d_v,\gamma,\frac{3}{4}+\epsilon_e)$ regular expander LDPC code with parity check node degree $d_c=d_vN/P$, one can use one iteration of noiseless PBF algorithm to bring the fraction of errors down from $\alpha$ to $(1-4\epsilon_e)\alpha$ provided that the original corrupted codeword has at most $\gamma(1+4\epsilon_e)/2$ fraction of errors.
\end{lemma}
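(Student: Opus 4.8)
The plan is to prove this by the standard ``flip‑and‑count'' argument for expander codes, reducing everything to a double‑counting bound on parity‑check edges that follows directly from the expansion hypothesis. By linearity of the code and the symmetry of the flipping rule I may assume without loss of generality that the transmitted codeword is the all‑zero word, so that the corrupted positions form a set $V\subseteq\mathcal V_L$ with $|V|=\alpha N$, a parity check is unsatisfied iff it has an odd (hence $\ge 1$) number of neighbours in $V$, and one iteration of the PBF rule flips exactly those variable nodes with strictly more than $d_v/2$ unsatisfied neighbours (the exact‑$d_v/2$ ties, which occur only for even $d_v$, being broken by an independent fair coin, which in the all‑zero picture amounts to flipping with probability $1/2$).

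First I would extract the basic consequence of expansion. Since the expansion factor $\tfrac34+\epsilon_e\le 1$ forces $\epsilon_e\le\tfrac14$, the hypothesis $\alpha\le\gamma(1+4\epsilon_e)/2$ gives $\alpha\le\gamma$, so $|\mathcal N(V)|\ge(\tfrac34+\epsilon_e)d_v|V|$. Double‑counting the $d_v|V|$ edges leaving $V$ against $\mathcal N(V)$ then yields: the number of ``shared'' checks (those with $\ge 2$ neighbours in $V$) is at most $(\tfrac14-\epsilon_e)d_v|V|$; the number of ``unique'' checks (exactly one neighbour in $V$, all necessarily unsatisfied) is at least $(\tfrac12+2\epsilon_e)d_v|V|$; and — the inequality I will actually use — the number of edges from $V$ into shared checks is at most $(\tfrac12-2\epsilon_e)d_v|V|$.

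Next I would bound the two sources of residual error. A corrupted variable that is \emph{not} flipped has at most $d_v/2$ unsatisfied neighbours, hence at least $d_v/2$ \emph{satisfied} neighbours; a satisfied check adjacent to a corrupted node has an even, hence $\ge 2$, number of neighbours in $V$, i.e.\ it is a shared check. Counting edges from $V$ to shared checks therefore gives $(\text{number of corrupted‑but‑unflipped variables})\cdot\tfrac{d_v}{2}\le(\tfrac12-2\epsilon_e)d_v|V|$, so at most $(1-4\epsilon_e)|V|$ corrupted positions survive. For the ``false flips'' — correct variables that do get flipped — the same circle of ideas applies after enlarging the vertex set: writing $W$ for the false‑flip set, each $x\in W$ has $>d_v/2$ unsatisfied neighbours, and each such check is adjacent both to $V$ and to $x$, hence has $\ge 2$ neighbours in $V\cup W$; applying the expansion bound to $V\cup W$ bounds the number of edges from $W$ into such checks, and hence bounds $|W|$. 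Combining this with the surviving‑corrupted bound shows the new error set has size at most $(1-4\epsilon_e)\alpha N$, which is the claim; dividing by $N$ gives the statement about fractions, and the same estimate keeps the error fraction below $\gamma$ so the argument can be iterated.

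The step I expect to be the main obstacle is the false‑flip bound. Unlike the surviving‑corrupted count, which falls out of a single application of expansion to $V$, controlling $W$ needs a second, more delicate application of expansion to a set ($V\cup W$, or a carefully chosen subset of it) whose size is not a priori known to lie below the expansion threshold $\gamma N$; handling this requires a minimal‑counterexample / truncation argument, and it is precisely keeping the enlarged set within range — and forcing the combined inequality to close at the clean factor $(1-4\epsilon_e)$ rather than something weaker — that is responsible for both the strict ``$\tfrac34+\epsilon_e$'' expansion requirement and the ``$\gamma(1+4\epsilon_e)/2$'' cap on the admissible input error fraction. (The random tie‑breaking for even $d_v$ is a minor point: it affects only variables with exactly $d_v/2$ unsatisfied neighbours and can be charged to either side of the count, or avoided outright by taking $d_v$ odd.)
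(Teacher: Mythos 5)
The paper itself gives no proof of this lemma: it is imported verbatim as Theorem~11 of Sipser--Spielman, so your attempt has to be judged against that source's argument rather than anything in the text. Your first half is correct and complete: assuming the all-zero codeword, applying expansion to the corrupt set $V$ alone, extracting $\text{unique}(V)\ge(\tfrac12+2\epsilon_e)d_v|V|$ and hence at most $(\tfrac12-2\epsilon_e)d_v|V|$ edges from $V$ into shared checks, and charging each unflipped corrupt variable at least $d_v/2$ such edges does give $|V\setminus F|\le(1-4\epsilon_e)|V|$ exactly as you say. You have also correctly identified the role of the hypothesis $\alpha\le\gamma(1+4\epsilon_e)/2$: it is what keeps the enlarged set $V\cup W$ (with $W$ the false flips) inside the expansion range.

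The genuine gap is the step you yourself flag as the main obstacle, and it is not a routine one to fill: the natural executions of your plan do not close at the constant $(1-4\epsilon_e)$. Counting unique neighbours of $V\cup W$ (at most $d_v/2$ owned by each $x\in V\setminus F$, fewer than $d_v/2$ by each $x\in W$, but up to $d_v$ by each $x\in V\cap F$) yields $|V\setminus F|\le(1-4\epsilon_e)|V|-4\epsilon_e|W|$ and hence only $|V'|\le(1-4\epsilon_e)\left(|V|+|W|\right)$; the companion estimate $|N(V\cup W)|\le|N(V)|+\tfrac{d_v}{2}|W|$ gives $|W|\le\tfrac{1-4\epsilon_e}{1+4\epsilon_e}|V|$ (this is precisely where $\alpha\le\gamma(1+4\epsilon_e)/2$ enters, since then $|V\cup W|\le\tfrac{2}{1+4\epsilon_e}|V|\le\gamma N$). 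Chaining these bounds gives $|V'|\le\tfrac{2(1-4\epsilon_e)}{1+4\epsilon_e}|V|$, which exceeds the claimed $(1-4\epsilon_e)|V|$ for every $\epsilon_e<\tfrac14$ and is not even a contraction unless $\epsilon_e>\tfrac1{12}$ --- yet the paper invokes the lemma with $\epsilon_e=0.0375$. So the sentence ``combining this with the surviving-corrupted bound shows the new error set has size at most $(1-4\epsilon_e)\alpha N$'' is an assertion, not a derivation, and it is exactly the inequality that fails to close under the accounting you describe. Until you either produce a genuinely different charging scheme for the false flips or settle for a weaker per-round factor (or a stronger expansion hypothesis), the proposal is not a proof of the stated bound.
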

\begin{example}
The construction of a good Expander code has been investigated for a long time. Constructive approaches for Expander codes can be found in \cite{capalbo2002randomness,guruswami2009unbalanced}. In \cite{burshtein2001expander,feldman2007lp,richardson2008modern}, it is shown that random regular LDPC codes are expanders with high probability when the code length $N\to\infty$. In \cite[Theorem 8.7]{richardson2008modern} it is shown that, suppose $\gamma_\text{max}$ is the positive solution of the equation
\begin{equation}
\begin{split}
  &\frac{d_v-1}{d_v}h_2(\gamma)-\frac{1}{d_c}h_2(\gamma d_c*(3/4+\epsilon_e))\\
  &-\gamma(3/4+\epsilon_e)d_ch_2\left(\frac{1}{(3/4+\epsilon_e)d_c}\right)=0,
\end{split}
\end{equation}
then, for $3/4+\epsilon_e<\frac{d_v-1}{d_v}$ and $\gamma\in (0,\gamma_\text{max})$, a random regular $(d_v,d_c)$ LDPC Tanner graph is a $(d_v,d_c,\gamma,\frac{3}{4}+\epsilon_e)$ \textcolor{black}{expander with probability $1-O(N^{-\beta})$, where $\beta=d_v\left[1-(3/4+\epsilon_e)\right]-1$ is a constant greater than 0 when $3/4+\epsilon_e<\frac{d_v-1}{d_v}$} (which means that all sets of left nodes with cardinality smaller than $\gamma N$ have an expansion factor at least $\frac{3}{4}+\epsilon_e$). For $d_v=16$, $d_c=32$, and $\epsilon_e=0.0375$ (which is equivalent to $4\epsilon_e=0.15$, the same as $\theta=0.15$ in Example~\ref{example1}), we use MATLAB to numerically solve the above equation and obtained $\gamma_\text{max}\approx 4.1*10^{-5}$, which means the fraction of errors $\alpha$ can be as large as $\gamma_\text{max}(1+4\epsilon_e)/2=2.3575\cdot 10^{-5}$.
\end{example}
\section{Proof of Theorem~\ref{thm4}}\label{vs_app}
We tune the energy supply such that
\begin{equation}
  \max(p_\text{and},p_\text{xor},p_\text{maj})\le\lambda=\frac{\theta\alpha_0/2}{\left[d_c(1-R)+1 \right]+2},
\end{equation}
is satisfied for the first $L-L_{vs}$ stages (first phase), which ensures that
\begin{align}
   &{{p}_{\text{and}}}+\left[ d_c(1-R)+1 \right]{{p}_{\text{xor}}}+{{p}_\text{maj}}\le\theta {{\alpha }_{0}}/2,
\end{align}
is satisfied. We tune the energy supply such that
\begin{equation}
  \max (p_{\text{and}}^{(i+1)},p_{\text{xor}}^{(i+1)},p_{\text{maj}}^{(i+1)})\le \lambda^{(i+1)}=\frac{\theta {{\alpha }_{0}}{{\left( 1-\frac{1}{2}\theta  \right)}^{i}}/4}{\left[ {{d}_{c}}(1-R)+1 \right]+2},
\end{equation}
is satisfied for the last $L_{vs}$ stages (second phase), which ensures that
\begin{equation}\label{vs_condition_appendix}
  [{{d}_{c}}(1-R)+1]p_{\text{xor}}^{(i+1)}+p_{\text{maj}}^{(i+1)}+p_\text{and}^{(i+1)}\le \frac{1}{4}\theta\alpha_0{\left(1-\frac{1}{2}\theta\right)^{i}},
\end{equation}
is satisfied (we have mentioned this in~\eqref{vs_condition}). Since this version of ENCODED-V technique with dynamic voltage scaling has the same procedure and constant supply energy during the first $L-L_\text{vs}$ stages (first phase) as the ENCODED-F technique, from Theorem~\ref{expander_probabilistic}, we know that after the first $(L-L_\text{vs})$ stages, the output error fraction is smaller than $\alpha_0$ with probability at least $1-P_{e}^\text{blk}$, where $P_{e}^\text{blk}<3(L-L_\text{vs})\exp\left( -\lambda^*N \right)$ and $\lambda^*$ is defined in~\eqref{expander_error_exp}.

We will prove that, after the $i$-th stage of the remaining $L_\text{vs}$ stages, the error fraction is upper bounded by
\begin{equation}\label{vs_induction}
  \alpha _{\text{PBF}}^{(i)}\le \alpha_0{{(1-\theta/2)}^{i}},
\end{equation}
with high probability. Thus, after $L_\text{vs}$ iterations, we obtain
\begin{equation}\label{vs_induction_2}
  \alpha _{\text{PBF}}^{(L_\text{vs})}\le \alpha_0{{(1-\theta/2)}^{L_\text{vs}}}\le p_\text{tar},
\end{equation}
where the last step can be verified by plugging in~\eqref{vs_A}.

The case for $i=0$ is already true as argued above. Suppose~\eqref{vs_induction} holds for some $i\ge 0$, then, we prove~\eqref{vs_induction} also holds for the $(i+1)$-th stage of the second phase. Note that from~\eqref{vs_condition_appendix}, the probability that the number of new errors introduced during the PBF decoding at the $(i+1)$-th stage, which is $[d_c(1-R)+1]\alpha _{\text{xor}}^{(i+1)}+\alpha _{\text{maj}}^{(i+1)}+\alpha _{\text{and}}^{(i+1)}$, satisfies
\begin{equation}\label{vs_der_1}
\begin{split}
  \Pr &\left( [d_c(1-R)+1]\alpha _{\text{xor}}^{(i+1)}+\alpha _{\text{maj}}^{(i+1)}+\alpha _{\text{and}}^{(i+1)}\right.\\
  &\left.\quad>\frac{1}{2}\alpha_0\theta{{(1-\theta/2)}^{i}} \right)\\
  \overset{(a)}{<}\Pr &\left([d_c(1-R)+1]\alpha _{\text{xor}}^{(i+1)}+\alpha _{\text{maj}}^{(i+1)}+\alpha _{\text{and}}^{(i+1)}\right.\\
  &>[d_c(1-R)+1]p_{\text{xor}}^{(i+1)}+p_{\text{maj}}^{(i+1)}+p_{\text{and}}^{(i+1)}\\
  &\left.\qquad+\frac{1}{4}\alpha_0\theta{{(1-\theta/2)}^{i}}\right)\\
  {<}\Pr &\left( \alpha _{\text{and}}^{(i+1)}>p_{\text{and}}^{(i+1)}+\lambda^{(i+1)} \right)\\
  &+\Pr \left( \alpha _{\text{xor}}^{(i+1)}>p_{\text{xor}}^{(i+1)}+\lambda^{(i+1)} \right)\\
  &+\Pr \left( \alpha _{\text{maj}}^{(i+1)}>p_{\text{maj}}^{(i+1)}+\lambda^{(i+1)} \right)\\
  \overset{(b)}{<}3&\exp \left( -{{\widetilde{\lambda }}^{(i+1)}}N \right),
\end{split}
\end{equation}
where step (a) follows from~\eqref{vs_condition_appendix}, step (c) follows from the large deviation bound in Lemma~\ref{Binomial_deviation} and ${{\widetilde{\lambda }}^{(i+1)}}$ is defined in~\eqref{vs_condition_2}. Therefore, with probability at least $1-3\exp \left( -{{\widetilde{\lambda }}^{(i+1)}}N \right)$,
\begin{equation}\label{vs_condition_alpha}
  \begin{split}
   \alpha _{\text{PBF}}^{(i+1)}&\le \alpha _{\text{PBF}}^{(i)}(1-\theta)+[{{d}_{c}}(1-R)+1]\alpha _{\text{xor}}^{(i+1)}\\
   &\quad+\alpha _{\text{maj}}^{(i+1)}+\alpha _{\text{and}}^{(i+1)} \\
 & \overset{(a)}{\le} \alpha_0{{(1-\theta/2)}^{i}}(1-\theta)+\frac{1}{2}\alpha_0\theta{{(1-\theta/2)}^{i}}\\
 &=\alpha_0(1-\theta/2)^{i+1},
\end{split}
\end{equation}
where step (a) can be obtained by combining~\eqref{vs_condition_alpha} and~\eqref{vs_induction}. Now that we have proved \eqref{vs_induction} for the $(i+1)$-th stage, we can carry out the math induction for all $i$ that satisfies $1\le i\le L_\text{vs}$. If \eqref{vs_induction} holds for all $i$, the final error fraction is smaller than $p_\text{tar}$. Thus, the overall probability that the final error fraction is greater than $p_\text{tar}$ is upper bounded by the summation of $3(L-L_\text{vs})\exp\left( -\lambda^*N \right)$ in the first $L-L_\text{vs}$ stages and the RHS of \eqref{vs_der_1} for the last $L_\text{vs}$ stages, which is
\begin{equation}\label{vs_error_prob_app}
  P_{e}^\text{blk}<3(L-L_\text{vs})\exp\left( -\lambda^*N \right)+3\mathop\sum\limits_{i=1}^{L_\text{vs}}\exp \left( -{{\widetilde{\lambda }}^{(i+1)}}N \right).
\end{equation}
Thus,~\eqref{vs_error_prob} is proved.

Finally, we compute the overall energy consumption. The energy consumed in the $i$-th stage can be written as
\begin{equation}\label{energy_i}
  {{E}_{i}}=N\epsilon_{\text{and}}^{-1}\left( p_{\text{and}}^{(i)} \right)+N\epsilon_{\text{maj}}^{-1}\left( p_{\text{maj}}^{(i)} \right)+(N+P)\epsilon_{\text{xor}}^{-1}\left( p_{\text{xor}}^{(i)} \right).
\end{equation}
By summing over all stages both in the first phase and the second phase and normalizing by the number of outputs $K$, the total energy consumption per output bit can be written as in~\eqref{vs_energy}.
\section{Proof of Corollary~\ref{vs_coro}}\label{vs_app_2}
We choose
\begin{equation}\label{vs_1}
\begin{split}
  p_\text{and}=p_\text{xor}=p_\text{maj}=\lambda=\frac{\theta\alpha_0/2}{\left[d_c(1-R)+1 \right]+2},
\end{split}
\end{equation}
in the first $L-L_\text{vs}$ stages and
\begin{equation}\label{vs_2}
\begin{split}
  p_{\text{and}}^{(i+1)}=p_{\text{xor}}^{(i+1)}=p_{\text{maj}}^{(i+1)}= \lambda^{(i+1)}=\frac{\theta {{\alpha }_{0}}{{\left( 1-\frac{1}{2}\theta  \right)}^{i}}/4}{\left[ {{d}_{c}}(1-R)+1 \right]+2},
\end{split}
\end{equation}
in the $i$-th stage of the last $L_\text{vs}$ stages (defined in~\eqref{vs_A}).

By plugging in~\eqref{vs_1}, \eqref{vs_2} and ${{L}_{\text{vs}}}=\left\lceil \frac{\log \frac{2}{{{p}_{\text{tar}}}}+\log {{\alpha }_{0}}}{\log \frac{1}{1-\frac{1}{2}\theta }} \right\rceil $ into the error probability expression~\eqref{vs_error_prob}, we know that the ENCODED-V technique has output error fraction smaller than $\alpha_0{{(1-\theta/2)}^{L_\text{vs}}}\le \frac{1}{2}p_\text{tar}$ with probability at least $1-P_e^\text{blk}$, where $P_e^\text{blk}$ satisfies
\begin{equation}
\begin{split}
  P_{e}^\text{blk}<&3(L-L_\text{vs})\exp\left(-\lambda^*N \right)
  +3L_\text{vs}\exp(-\widetilde{\lambda}^{(L_\text{vs}+1)}N),
\end{split}
\end{equation}
where ${{\widetilde{\lambda }}^{(i+1)}}=D(2{{\lambda }^{(i+1)}}\|{{\lambda }^{(i+1)}})=\left( 2\log 2-1 \right){{\lambda }^{(i+1)}}+\mathcal{O}({{({{\lambda }^{(i+1)}})}^{2}})$ and ${{\lambda }^{*}}=D(2\lambda\|\lambda)=\left( 2\log 2-1 \right)\lambda+\mathcal{O}({{\lambda}^{2}})$. Since $\lambda^{(L_\text{vs}+1)}=\frac{\theta {{\alpha }_{0}}{{\left( 1-\frac{1}{2}\theta  \right)}^{L_\text{vs}}}/4}{\left[ {{d}_{c}}(1-R)+1 \right]+2}$ and $\alpha_0{{(1-\theta/2)}^{L_\text{vs}-1}}> \frac{1}{2}p_\text{tar}$, we have $\lambda^{(L_\text{vs}+1)}>\frac{\theta {p_\text{tar}{\left( 1-\frac{1}{2}\theta  \right)}}/8}{\left[ {{d}_{c}}(1-R)+1 \right]+2}$. Therefore,
\begin{equation}
  P_{e}^\text{blk}<3L\exp(-\theta^* N),
\end{equation}
where ${{\theta }^{*}}=\min \left\{\lambda^*,D\left(2\frac{\theta {p_\text{tar}{\left( 1-\frac{1}{2}\theta  \right)}}/4}{\left[ {{d}_{c}}(1-R)+1 \right]+2}\left\|\frac{\theta {p_\text{tar}{\left( 1-\frac{1}{2}\theta  \right)}}/4}{\left[ {{d}_{c}}(1-R)+1 \right]+2}\right.\right)\right\}$.

Denote the output error fraction by $\delta_e^\text{frac}$, which is a random variable supported on $[0,1]$. We know that $\Pr(\delta_e^\text{frac}>\frac{1}{2}p_\text{tar})<P_e^\text{blk}$. Thus, the output bit error probability is upper bounded by
\begin{equation}
\begin{split}
  \mathbb{E}[\delta_e^\text{frac}]&<\Pr(\delta_e^\text{frac}>\frac{1}{2}p_\text{tar})\mathbb{E}\left[\delta_e^\text{frac}|\delta_e^\text{frac}>\frac{1}{2}p_\text{tar}\right]\\
  &+\Pr(\delta_e^\text{frac}\le\frac{1}{2}p_\text{tar}))\mathbb{E}\left[\delta_e^\text{frac}|\delta_e^\text{frac}\le \frac{1}{2}p_\text{tar}\right]\\
  &<P_e^\text{blk}+\frac{1}{2}p_\text{tar}.
\end{split}
\end{equation}
When $N>\frac{1}{{{\theta }^{*}}}\log \left( \frac{6L}{{{p}_{\text{tar}}}} \right)$, $P_e^\text{blk}<\frac{1}{2}p_\text{tar}$, and hence the output bit error probability $\mathbb{E}[\delta_e^\text{frac}]$ satisfies $\mathbb{E}[\delta_e^\text{frac}]<p_\text{tar}$.

In this corollary, we only examine the case when $\epsilon_\text{and}(u)=\epsilon_\text{xor}(u)=\epsilon_\text{maj}(u)=\epsilon(u)$ (either polynomial decay or exponential decay). We also choose the same gate error probabilities $p_\text{and}=p_\text{xor}=p_\text{maj}=\lambda$ and $p_{\text{and}}^{(i+1)}=p_{\text{xor}}^{(i+1)}=p_{\text{maj}}^{(i+1)}= \lambda^{(i+1)}$ for different types of unreliable gates (see~\eqref{vs_1} and \eqref{vs_2}). Therefore, the energy consumption~\eqref{vs_energy} is simplified to
\begin{equation}
  {{E}_{\text{per-bit}}}\le\frac{3N+P}{K}\left( L-{{L}_{\text{vs}}} \right){{\epsilon }^{-1}}\left( \lambda  \right)\text{+}\frac{3N+P}{K}\sum\limits_{i=1}^{{{L}_{\text{vs}}}}{{{\epsilon }^{-1}}\left( \lambda^{(i)}  \right)},
\end{equation}
where ${{L}_{\text{vs}}}=\left\lceil \frac{\log \frac{2}{{{p}_{\text{tar}}}}+\log {{\alpha }_{0}}}{\log \frac{1}{1-\frac{1}{2}\theta }} \right\rceil$.

When the energy-reliability tradeoff function $\epsilon_\text{and}(u)=\epsilon_\text{xor}(u)=\epsilon_\text{maj}(u)=(\frac{1}{u})^{c},c>0$, the total energy consumption per bit
\begin{equation}
\begin{split}
  {{E}_{\text{per-bit}}}&\le\frac{3N+P}{K}{{\lambda }^{-\frac{1}{c}}}\left[ \left( L-{{L}_{\text{vs}}} \right)+\frac{{{\left( 1-\frac{1}{2}\theta  \right)}^{-\frac{{{L}_{\text{vs}}}}{c}}}-1}{{{\left( 1-\frac{1}{2}\theta  \right)}^{-\frac{1}{c}}}-1} \right]\\
  &\le\frac{3N+P}{K}{{\lambda }^{-\frac{1}{c}}}\left[ \left( L-{{L}_{\text{vs}}} \right)\right.\\
  &\qquad\qquad\left.+\frac{{{\left( 1-\frac{1}{2}\theta  \right)}^{-\frac{1}{c}}}{{\left( \frac{{{p}_{\text{tar}}}}{2{{\alpha }_{0}}} \right)}^{-\frac{1}{c}}}-1}{{{\left( 1-\frac{1}{2}\theta  \right)}^{-\frac{1}{c}}}-1} \right]\\
  &=\Theta \left( \frac{N}{K}\max\left\{L,\left(\frac{1}{p_\text{tar}}\right)^{\frac{1}{c}}\right\} \right).
\end{split}
\end{equation}

When the energy-reliability tradeoff function $\epsilon_\text{and}(u)=\epsilon_\text{xor}(u)=\epsilon_\text{maj}(u)=\exp(-c u),c>0$, the total energy consumption per bit
\begin{equation}
\begin{split}
  {{E}_{\text{per-bit}}}&=\frac{3N+P}{cK}\left( L\log \frac{1}{\lambda }+\frac{1}{2}{{L}_{\text{vs}}}\left( {{L}_{\text{vs}}}+1 \right)\log \frac{1}{1-\frac{1}{2}\theta } \right)\\
  &=\Theta\left(\frac{N}{K}\max\left\{L,\log^2\frac{1}{p_\text{tar}}\right\}\right).
\end{split}
\end{equation}

When the energy-reliability tradeoff function $\epsilon_\text{and}(u)=\epsilon_\text{xor}(u)=\epsilon_\text{maj}(u)=\exp(-c \sqrt{u}),c>0$, the total energy consumption per bit
\begin{equation}
\begin{split}
  & {{E}_{\text{per-bit}}}=\frac{3N+P}{K}\left( L-{{L}_{\text{vs}}} \right){{\left( \frac{1}{c}\log \frac{1}{\lambda } \right)}^{2}}\\
  &\qquad+\frac{3N+P}{K}\sum\limits_{i=1}^{{{L}_{\text{vs}}}}{{{\left( \frac{1}{c}\log \frac{1}{\lambda }+\frac{i-1}{c}\log \frac{1}{1-\frac{1}{2}\theta } \right)}^{2}}} \\
 & =\frac{3N+P}{K}\left[L{{\left( \frac{1}{c}\log \frac{1}{\lambda } \right)}^{2}}\right.\\
 &\qquad\qquad+\frac{2}{c}\log \frac{1}{\lambda }\cdot \frac{1}{c}\log \frac{1}{1-\frac{1}{2}\theta }\frac{({{L}_{\text{vs}}}-1){{L}_{\text{vs}}}}{2}\\
 &\qquad\qquad\left.+\frac{1}{{{c}^{2}}}{{\log }^{2}}\frac{1}{1-\frac{1}{2}\theta }\frac{1}{6}({{L}_{\text{vs}}}-1){{L}_{\text{vs}}}(2{{L}_{\text{vs}}}-1)\right] \\
 & =\Theta\left(\frac{N}{K}\max\left\{L,\log^3\frac{1}{p_\text{tar}}\right\}\right).
\end{split}
\end{equation}

\bibliographystyle{ieeetr}
\bibliography{rough}

\begin{thebibliography}{100}

\bibitem{yang2016computing}
Y.~Yang, P.~Grover, and S.~Kar, ``Computing linear transforms with unreliable
  components,'' in {\em Proc. IEEE Int. Symp. Inf. Theory}, pp.~1934--1938,
  IEEE, 2016.

\bibitem{Bor_Micro_05}
S.~Borkar, ``Designing reliable systems from unreliable components: the
  challenges of transistor variability and degradation,'' {\em IEEE Micro},
  vol.~25, no.~6, pp.~10--16, 2005.

\bibitem{Shan_DTC_08}
N.~R. Shanbhag, S.~Mitra, G.~de~Veciana, M.~Orshansky, R.~Marculescu,
  J.~Roychowdhury, D.~Jones, and J.~M. Rabaey, ``The search for alternative
  computational paradigms,'' {\em IEEE Des. Test. Comput}, vol.~25, no.~4,
  pp.~334--343, 2008.

\bibitem{haque2010hard}
I.~S. Haque and V.~S. Pande, ``Hard data on soft errors: A large-scale
  assessment of real-world error rates in {GPGPU},'' in {\em Proc. IEEE/ACM
  Int. Conf. Cluster Cloud Grid Comput.}, pp.~691--696, IEEE Computer Society,
  2010.

\bibitem{Dennard}
R.~H. Dennard, V.~L. Rideout, E.~Bassous, and A.~R. Leblanc, ``Design of
  ion-implanted {MOSFET}'s with very small physical dimensions,'' {\em IEEE J.
  Solid-State Circuits}, vol.~9, no.~5, pp.~256--268, 1974.

\bibitem{Nat_CUP_14}
R.~Nathanael and T.-J.~K. Liu, {\em CMOS and Beyond: Logic Switches for
  Terascale Integrated Circuits}, ch.~11 Mechanical switches.
\newblock Cambridge University Press, 2014.

\bibitem{Pil_ACM_01}
P.~Pillai and K.~G. Shin, ``Real-time dynamic voltage scaling for low-power
  embedded operating systems,'' {\em ACM SIGOPS Operating Systems Review},
  vol.~35, no.~5, pp.~89--102, 2001.

\bibitem{Zhao_TR_07}
C.~Zhao, X.~Bai, and S.~Dey, ``Evaluating transient error effects in digital
  nanometer circuits,'' {\em IEEE Trans. Rel}, vol.~56, no.~3, pp.~381--391,
  2007.

\bibitem{Mir_Spec_12}
M.~Miranda, ``The threat of semiconductor variability: As transistors shrink,
  the problem of chip variability grows.''
  \url{http://spectrum.ieee.org/semiconductors/design/the-threat-of-semiconductor-variability},
  June 2012.

\bibitem{Shu_Nature_13}
M.~M. Shulaker, G.~Hills, N.~Patil, H.~Wei, H.~Chen, H.-S.~P. Wong, and
  S.~Mitra, ``Carbon nanotube computer,'' {\em Nature}, vol.~501, no.~7468,
  pp.~526--530, 2013.

\bibitem{Pat_TCAD_08}
N.~Patil, J.~Deng, A.~Lin, H.-S. Wong, and S.~Mitra, ``Design methods for
  misaligned and mispositioned carbon-nanotube immune circuits,'' {\em IEEE
  Trans. Comput.-Aided Design Integr. Circuits Syst.}, vol.~27, no.~10,
  pp.~1725--1736, 2008.

\bibitem{Sha_Bel_48}
C.~Shannon, ``A mathematical theory of communication,'' {\em Bell Syst. Tech.
  J.}, vol.~27, pp.~623--656, Oct 1948.

\bibitem{Neu_Aut_56}
J.~V. Neumann, ``Probabilistic logics and the synthesis of reliable organisms
  from unreliable components,'' {\em Automata studies}, vol.~34, pp.~43--98,
  1956.

\bibitem{Abda_JSSC_13}
R.~A. Abdallah and N.~R. Shanbhag, ``An energy-efficient ecg processor in 45-nm
  {CMOS} using statistical error compensation,'' {\em IEEE J. Solid-State
  Circuits}, vol.~48, no.~11, pp.~2882--2893, 2013.

\bibitem{Han_VLSI_14}
J.~Han, E.~Leung, L.~Liu, and F.~Lombardi, ``A fault-tolerant technique using
  quadded logic and quadded transistors,'' {\em IEEE Trans. Very Large Scale
  Integr. (VLSI) Syst.}, vol.~PP, no.~99, pp.~1--1, 2014.

\bibitem{Han_DTC_05}
J.~Han, J.~Gao, P.~Jonker, Y.~Qi, and J.~Fortes, ``Toward hardware-redundant,
  fault-tolerant logic for nanoelectronics,'' {\em IEEE Des. Test. Comput},
  vol.~22, pp.~328--339, July 2005.

\bibitem{Kim_ACMMi_03}
D.~Ernst, N.~S. Kim, S.~Das, S.~Pant, R.~Rao, T.~Pham, C.~Ziesler, D.~Blaauw,
  T.~Austin, K.~Flautner, and T.~Mudge, ``Razor: A low-power pipeline based on
  circuit-level timing speculation,'' in {\em Proc. 36th Int'l Symp.
  Microarchitecture}, pp.~7--18, 2003.

\bibitem{Cho_TCAD_12}
H.~Cho, L.~Leem, and S.~Mitra, ``Ersa: Error resilient system architecture for
  probabilistic applications,'' {\em IEEE Trans. Comput.-Aided Design Integr.
  Circuits Syst.}, vol.~31, no.~4, pp.~546--558, 2012.

\bibitem{Gal_TIT_62}
R.~G. Gallager, ``Low-density parity-check codes,'' {\em IRE Trans. Inf.
  Theory}, vol.~8, pp.~21--28, January 1962.

\bibitem{Sip_FCS_96}
M.~Sipser and D.~A. Spielman, ``Expander codes,'' {\em IEEE Trans. Inf.
  Theory}, vol.~42, no.~6, 1996.

\bibitem{Tay_Bel_68}
M.~G. Taylor, ``Reliable information storage in memories designed from
  unreliable components,'' {\em Bell Syst. Tech. J.}, vol.~47, no.~10,
  pp.~2299--2337, 1968.

\bibitem{Kuz_PIT_73}
A.~V. Kuznetsov, ``Information storage in a memory assembled from unreliable
  components,'' {\em Probl. Inf. Transm.}, vol.~9, no.~3, pp.~100--114, 1973.

\bibitem{Chi_ITW_07}
S.~K. Chilappagari and B.~Vasic, ``Fault tolerant memories based on expander
  graphs,'' in {\em Proc. IEEE Inf. Theory Workshop}, 2007.

\bibitem{Gun_IAC_08}
K.~Gunnam, G.~Choi, M.~Yeary, S.~Yang, and Y.~Lee, ``Next generation iterative
  {LDPC} solutions for magnetic recording storage,'' in {\em Proc. Asilomar
  Conf. Signal, Syst. Comput.}, pp.~1148--1152, Oct 2008.

\bibitem{huang2015acoco}
C.-H. Huang, Y.~Li, and L.~Dolecek, ``{ACOCO}: {A}daptive coding for
  approximate computing on faulty memories,'' {\em IEEE Trans. Commun.},
  vol.~63, no.~12, pp.~4615--4628, 2015.

\bibitem{Var_TIT_11}
L.~Varshney, ``Performance of {LDPC} codes under faulty iterative decoding,''
  {\em IEEE Trans. Inf. Theory}, vol.~57, pp.~4427--4444, July 2011.

\bibitem{Yaz_TC_01}
S.~M.~S. Tabatabaei, H.~Cho, and L.~Dolecek, ``Gallager {B} decoder on noisy
  hardware,'' {\em IEEE Trans. Commun.}, vol.~61, pp.~1660--1673, May 2013.

\bibitem{Huang_TC_14}
C.-H. Huang, Y.~Li, and L.~Dolecek., ``Gallager {B} {LDPC} decoder with
  transient and permanent errors,'' {\em IEEE Trans. Commun.}, vol.~62,
  pp.~15--28, January 2014.

\bibitem{ref_1}
A.~Balatsoukas-Stimming and A.~Burg, ``Density evolution for min-sum decoding
  of {LDPC} codes under unreliable message storage,'' {\em IEEE Commun. Lett.},
  vol.~18, no.~5, pp.~849--852, 2014.

\bibitem{ref_2}
E.~D. C.~Kameni~Ngassa, V.~Savin and D.~Declercq, ``Density evolution and
  functional threshold for the noisy min-sum decoder,'' {\em IEEE Trans.
  Commun.}, vol.~63, no.~5, pp.~1497--1509, 2015.

\bibitem{Ric_TIT_01_2}
T.~Richardson and R.~Urbanke, ``The capacity of low-density parity-check codes
  under message-passing decoding,'' {\em IEEE Trans. Inf. Theory}, vol.~47,
  pp.~599--618, Feb 2001.

\bibitem{Len_TIT_01}
M.~Lentmaier, D.~Truhachev, K.~Zigangirov, and D.~Costello, ``An analysis of
  the block error probability performance of iterative decoding,'' {\em IEEE
  Trans. Inf. Theory}, vol.~51, pp.~3834--3855, Nov 2005.

\bibitem{Had_TIT_05}
C.~Hadjicostis and G.~C. Verghese, ``Coding approaches to fault tolerance in
  linear dynamic systems,'' {\em IEEE Trans. Inf. Theory}, vol.~51,
  pp.~210--228, Jan 2005.

\bibitem{Spi_FCS_96}
D.~A. Spielman, ``Highly fault-tolerant parallel computation,'' in {\em Proc.
  Symp. Foundations Comput. Sci.}, pp.~154--163, Oct 1996.

\bibitem{Rac_ISIT_08}
E.~Rachlin and J.~E. Savage, ``A framework for coded computation,'' in {\em
  Proc. IEEE Int. Symp. Inf. Theory}, pp.~2342--2346, July 2008.

\bibitem{Bert_CAD_05}
D.~Bertozzi, L.~Benini, and G.~De~Micheli, ``Error control schemes for on-chip
  communication links: the energy-reliability tradeoff,'' {\em IEEE Trans.
  Comput.-Aided Design Integr. Circuits Syst.}, vol.~24, pp.~818--831, June
  2005.

\bibitem{Simon_ITW_11}
F.~Simon, ``On the capacity of noisy computations,'' in {\em Proc. IEEE Inf.
  Theory Workshop}, pp.~185--189, Oct 2011.

\bibitem{Simon_ITW_10}
F.~Simon, ``Capacity of a noisy function,'' in {\em Proc. IEEE Inf. Theory
  Workshop}, pp.~1--5, Aug 2010.

\bibitem{Koch1}
T.~Koch, A.~Lapidoth, and P.~Sotiriadis, ``Channels that heat up,'' {\em IEEE
  Trans. Inf. Theory}, vol.~55, pp.~3594--3612, Aug 2009.

\bibitem{Gro_JSAC_11}
P.~Grover, K.~Woyach, and A.~Sahai, ``Towards a communication-theoretic
  understanding of system-level power consumption,'' {\em IEEE J. Sel. Areas
  Commun}, vol.~29, pp.~1744--1755, September 2011.

\bibitem{Gro_ISIT_12}
P.~Grover, A.~Goldsmith, and A.~Sahai, ``Fundamental limits on the power
  consumption of encoding and decoding,'' in {\em Proc. IEEE Int. Symp. Inf.
  Theory}, pp.~2716--2720, IEEE, 2012.

\bibitem{Gro_TIT_15}
P.~Grover, ``Information friction and its implications on minimum energy
  required for communication,'' {\em IEEE Trans. Inf. Theory}, vol.~61,
  pp.~895--907, Feb 2015.

\bibitem{Blake1}
C.~Blake and F.~R. Kschischang, ``Energy of decoding algorithms,'' in {\em
  Canadian Workshop on Inf. Theory}, pp.~1--5, June 2013.

\bibitem{Blake2}
C.~G. Blake and F.~R. Kschischang, ``Energy consumption of {VLSI} decoders,''
  {\em IEEE Trans. Inf. Theory}, vol.~61, pp.~3185--3198, June 2015.

\bibitem{Gro_ISIT_14}
P.~Grover, ``Is `shannon-capacity of noisy computing' zero?,'' in {\em Proc.
  IEEE Int. Symp. Inf. Theory}, pp.~2854--2858, June 2014.

\bibitem{Kar_TDSC_04}
T.~Karnik and P.~Hazucha, ``Characterization of soft errors caused by single
  event upsets in {CMOS} processes,'' {\em IEEE Trans. Dependable Secure
  Comput.}, vol.~1, pp.~128--143, April 2004.

\bibitem{Dob_PPI_77}
R.~L.~V. Dobrushin and S.~I. Ortyukov, ``Upper bound on the redundancy of
  self-correcting arrangements of unreliable functional elements,'' {\em Probl.
  Inf. Transm.}, vol.~13, no.~3, pp.~56--76, 1977.

\bibitem{Pip_FOC_85}
N.~Pippenger, ``On networks of noisy gates,'' in {\em Proc. Symp. Foundations
  Comput. Sci.}, pp.~30--38, IEEE, 1985.

\bibitem{Pip_TIT_91}
N.~Pippenger, G.~Stamoulis, and J.~Tsitsiklis, ``On a lower bound for the
  redundancy of reliable networks with noisy gates,'' {\em IEEE Trans. Inf.
  Theory}, vol.~37, pp.~639--643, May 1991.

\bibitem{Had_TAC_03}
C.~N. Hadjicostis, ``Nonconcurrent error detection and correction in
  fault-tolerant linear finite-state machines,'' {\em IEEE Trans. Autom.
  Control}, vol.~48, no.~12, pp.~2133--2140, 2003.

\bibitem{laurenciu2016error}
N.~C. Laurenciu, T.~Gupta, V.~Savin, and S.~Cotofana, ``Error correction code
  protected data processing units,'' in {\em Proc ACM/IEEE NANOARCH},
  pp.~37--42, IEEE, 2016.

\bibitem{Yang_All_14}
Y.~Yang, P.~Grover, and S.~Kar, ``Can a noisy encoder be used to communicate
  reliably?,'' in {\em Proc. Allerton Conf. on Commun., Control and Comput.},
  pp.~659--666, Sept 2014.

\bibitem{dupraz2016practical}
E.~Dupraz, V.~Savin, S.~K. Grandhi, E.~Popovici, and D.~Declercq, ``Practical
  {LDPC} encoders robust to hardware errors,'' in {\em 2016 IEEE International
  Conference on Communications (ICC),}, pp.~1--6, IEEE, 2016.

\bibitem{Hac_ISIT_13}
J.~Hachem, I.-H. Wang, C.~Fragouli, and S.~Diggavi, ``Coding with encoding
  uncertainty,'' in {\em Proc. IEEE Int. Symp. Inf. Theory}, pp.~276--280, July
  2013.

\bibitem{Huang_TC_84}
K.-H. Huang and J.~Abraham, ``Algorithm-based fault tolerance for matrix
  operations,'' {\em IEEE Trans. Comput.}, vol.~C-33, pp.~518--528, June 1984.

\bibitem{Wang_TC_94}
S.-J. Wang and N.~K. Jha, ``Algorithm-based fault tolerance for fft networks,''
  {\em IEEE Trans. Comput.}, vol.~43, pp.~849--854, Jul 1994.

\bibitem{Ding_ISPA_11}
C.~Ding, C.~Karlsson, H.~Liu, T.~Davies, and Z.~Chen, ``Matrix multiplication
  on {GPU}s with on-line fault tolerance,'' in {\em Proc. IEEE Int'l Symp.
  Parallel Distrib. Process. Appl.}, pp.~311--317, May 2011.

\bibitem{Anf_TC_88}
C.~Anfinson and F.~Luk, ``A linear algebraic model of algorithm-based fault
  tolerance,'' {\em IEEE Trans. Comput.}, vol.~37, pp.~1599--1604, Dec 1988.

\bibitem{Chen_HPCN_09}
Z.~Chen, ``Optimal real number codes for fault tolerant matrix operations,'' in
  {\em Proc. Conf. High Performance Computing Networking Storage and Analysis},
  SC '09, (New York, NY, USA), pp.~29:1--29:10, ACM, 2009.

\bibitem{Radha_Asil_13}
C.~Radhakrishnan and A.~C. Singer, ``Recursive least squares filtering under
  stochastic computational errors,'' in {\em Proc. Asilomar Conf. Signal, Syst.
  Comput.}, pp.~1529--1532, 2013.

\bibitem{Bowman_ADAC_07}
K.~Bowman, J.~Tschanz, C.~Wilkerson, T.~K. S.-L.~Lu, V.~De, and S.~Borkar,
  ``Circuit techniques for dynamic variation tolerance,'' in {\em Proc. 46th
  ACM/IEEE DAC}, pp.~4--7, 2007.

\bibitem{Choi_TSP_07}
J.~W. Choi, B.~Shim, A.~Singer, and N.~I. Cho, ``Low-power filtering via
  minimum power soft error cancellation,'' {\em IEEE Trans. Signal Process.},
  vol.~55, pp.~5084--5096, Oct 2007.

\bibitem{Had_CDC_01}
C.~N. Hadjicostis, ``Non-concurrent error detection and correction in
  discrete-time {LTI} dynamic systems,'' in {\em Proc. 40th IEEE Conf. Decision
  and Control}, vol.~2, pp.~1899--1904, 2001.

\bibitem{nahlus2014energy}
I.~Nahlus, E.~P. Kim, N.~R. Shanbhag, and D.~Blaauw, ``Energy-efficient dot
  product computation using a switched analog circuit architecture,'' in {\em
  Proc. Int. Symp. Low Power Electron. Des.}, pp.~315--318, ACM, 2014.

\bibitem{Goor_DTC_93}
A.~van~de Goor, ``Using march tests to test {SRAM}s,'' {\em IEEE Des. Test.
  Comput}, vol.~10, pp.~8--14, March 1993.

\bibitem{Kim_TC_98}
I.~Kim, Y.~Zorian, G.~Komoriya, H.~Pham, F.~P. Higgins, and J.~L. Lewandowski,
  ``Built in self repair for embedded high density {SRAM},'' in {\em Proc. IEEE
  Int'l Test Conf.}, pp.~1112--1119, Oct 1998.

\bibitem{AhlswedeAVCs}
R.~Ahlswede and J.~Wolfowitz, ``The capacity of a channel with arbitrarily
  varying channel probability functions and binary output alphabet,'' {\em Z
  WAHRSCHEINLICHKEIT}, vol.~15, no.~3, pp.~186--194, 1970.

\bibitem{MartonExcessDistortion}
K.~Marton, ``Error exponent for source coding with a fidelity criterion,'' {\em
  IEEE Trans. Inf. Theory}, vol.~20, pp.~197--199, March 1974.

\bibitem{knuth}
D.~E. Knuth, {\em The Art of Computer Programming, volume 1: Fundamental
  Algorithms Addison-Wesley}.
\newblock Addison-Wesley Professional, 1997.

\bibitem{burshtein2008error}
D.~Burshtein, ``On the error correction of regular {LDPC} codes using the
  flipping algorithm,'' {\em IEEE Trans. Inf. Theory}, vol.~54, no.~2,
  pp.~517--530, 2008.

\bibitem{ref_3}
E.~E. X.-Y.~Hu and D.~M. Arnold, ``Regular and irregular progressive
  edge-growth tanner graphs,'' {\em IEEE Trans. Inf. Theory}, vol.~51, no.~1,
  pp.~386--398, 2005.

\bibitem{Ber_ICC_93}
C.~Berrou and A.~Glavieux, ``Near optimum error correcting coding and decoding:
  turbo-codes,'' {\em IEEE Trans. Commun.}, vol.~44, pp.~1261--1271, Oct 1996.

\bibitem{Abb_TCom_07}
A.~Abbasfar, D.~Divsalar, and K.~Yao, ``Accumulate-repeat-accumulate codes,''
  {\em IEEE Trans. Commun.}, vol.~55, pp.~692--702, April 2007.

\bibitem{Cost_PHE_04}
S.~Lin and D.~Costello, {\em Error control coding}.
\newblock 2004.

\bibitem{Vas_ITW_06}
B.~Vasic and S.~K. Chilappagari, ``An information theoretical framework for
  analysis and design of nanoscale fault-tolerant memories based on low-density
  parity-check codes,'' {\em IEEE Trans. Circuits Syst. I, Reg. Papers},
  vol.~54, no.~11, pp.~2438--2446, 2007.

\bibitem{du2013network}
J.~Du and Y.-C. Wu, ``Network-wide distributed carrier frequency offsets
  estimation and compensation via belief propagation,'' {\em IEEE Trans. Signal
  Process}, vol.~61, no.~23, pp.~5868--5877, 2013.

\bibitem{Ric_TIT_01}
T.~Richardson and R.~Urbanke, ``Efficient encoding of low-density parity-check
  codes,'' {\em IEEE Trans. Inf. Theory}, vol.~47, pp.~638--656, Feb 2001.

\bibitem{Che_AMS_52}
H.~Chernoff, ``A measure of asymptotic efficiency for tests of a hypothesis
  based on the sum of observations,'' {\em Ann. Math. Stat.}, pp.~493--507,
  1952.

\bibitem{Kou_TIT_01}
Y.~Kou, S.~Lin, and M.~Fossorier, ``Low-density parity-check codes based on
  finite geometries: a rediscovery and new results,'' {\em IEEE Trans. Inf.
  Theory}, vol.~47, pp.~2711--2736, Nov 2001.

\bibitem{Mou_SPM_04}
J.~Moura, J.~Lu, and H.~Zhang, ``Structured low-density parity-check codes,''
  {\em IEEE Signal Process. Mag.}, vol.~21, pp.~42--55, Jan 2004.

\bibitem{Tho_INP_03}
J.~Thorpe, ``Low-density parity-check (ldpc) codes constructed from
  protographs,'' {\em IPN progress report}, vol.~42, no.~154, pp.~42--154,
  2003.

\bibitem{Fos_TIT_04}
M.~Fossorier, ``Quasicyclic low-density parity-check codes from circulant
  permutation matrices,'' {\em IEEE Trans. Inf. Theory}, vol.~50,
  pp.~1788--1793, Aug 2004.

\bibitem{yang2015computing}
Y.~Yang, P.~Grover, and S.~Kar, ``Computing linear transformations with
  unreliable components,'' {\em arXiv:1506.07234}, 2015.

\bibitem{Ernst_Micro_03}
D.~Ernst, N.~S. Kim, S.~Das, S.~Pant, R.~Rao, T.~Pham, C.~Ziesler, D.~Blaauw,
  T.~Austin, K.~Flautner, and T.~Mudge, ``Razor: a low-power pipeline based on
  circuit-level timing speculation,'' in {\em Proc. 36th Int'l Symp.
  Microarchitecture}, pp.~7--18, Dec 2003.

\bibitem{Patil_spin}
A.~D. Patil, S.~Manipatruni, D.~Nikonov, I.~A. Young, and N.~R. Shanbhag,
  ``Shannon-inspired statistical computing to enable spintronics,'' {\em arXiv
  preprint arXiv:1702.06119}, 2017.

\bibitem{butler2012switching}
W.~H. Butler, T.~Mewes, C.~K.~A. Mewes, P.~B. Visscher, W.~H. Rippard, S.~E.
  Russek, and R.~Heindl, ``Switching distributions for perpendicular
  spin-torque devices within the macrospin approximation,'' {\em IEEE Trans.
  Magn.}, vol.~48, no.~12, pp.~4684--4700, 2012.

\bibitem{kim2015spin}
J.~Kim, A.~Paul, P.~A. Crowell, S.~J. Koester, S.~S. Sapatnekar, J.-P. Wang,
  and C.~H. Kim, ``Spin-based computing: device concepts, current status, and a
  case study on a high-performance microprocessor,'' {\em Proc. IEEE},
  vol.~103, no.~1, pp.~106--130, 2015.

\bibitem{manipatruni2012modeling}
S.~Manipatruni, D.~E. Nikonov, and I.~A. Young, ``Modeling and design of
  spintronic integrated circuits,'' {\em IEEE Trans. Circuits Syst. I, Reg.
  Papers}, vol.~59, no.~12, pp.~2801--2814, 2012.

\bibitem{Chi_ITW_06}
S.~Chilappagari, M.~Ivkovic, and B.~Vasic, ``Analysis of one step majority
  logic decoders constructed from faulty gates,'' in {\em Proc. IEEE Int. Symp.
  Inf. Theory}, pp.~469--473, July 2006.

\bibitem{KarthikSips}
K.~Ganesan, P.~Grover, and J.~Rabaey, ``The power cost of over-designing
  codes,'' in {\em Proceedings of IEEE Workshop on Signal Processing Systems
  (SiPS)}, pp.~128--133, Oct 2011.

\bibitem{Kud_ISIT_12}
S.~Kudekar, T.~Richardson, and R.~Urbanke, ``Spatially coupled ensembles
  universally achieve capacity under belief propagation,'' {\em IEEE Trans.
  Inf. Theory}, vol.~59, pp.~7761--7813, Dec 2013.

\bibitem{lee2016speeding}
K.~Lee, M.~Lam, R.~Pedarsani, D.~Papailiopoulos, and K.~Ramchandran, ``Speeding
  up distributed machine learning using codes,'' in {\em Proc. IEEE Int. Symp.
  Inf. Theory}, pp.~1143--1147, IEEE, 2016.

\bibitem{YaoqingISTC}
Y.~Yang, P.~Grover, and S.~Kar, ``Fault-tolerant parallel linear filtering
  using compressive sensing,'' in {\em Proc. Int. Symp. Turbo Codes \&
  Iterative Information Processing}, pp.~201--205, IEEE, 2016.

\bibitem{dutta2016short}
S.~Dutta, V.~Cadambe, and P.~Grover, ``Short-dot: Computing large linear
  transforms distributedly using coded short dot products,'' in {\em Advances
  In Neural Information Processing Systems}, pp.~2092--2100, 2016.

\bibitem{duttaISIT2017}
S.~Dutta, V.~Cadambe, and P.~Grover, ``Coded convolution can provide
  arbitrarily large gains in successfully computing before a deadline,'' in
  {\em IEEE International Symposium on Information Theory (ISIT)}, July 2017.

\bibitem{PedarsaniHeterogeneous}
A.~Reisizadehmobarakeh, S.~Prakash, R.~Pedarsani, and S.~Avestimehr, ``Coded
  computation over heterogeneous clusters,'' in {\em 2017 Proc. Workshop Inf.
  Theory Appl.}, IEEE, 2017.

\bibitem{cappello2014toward}
F.~Cappello, A.~Geist, W.~Gropp, S.~Kale, B.~Kramer, and M.~Snir, ``Toward
  exascale resilience: 2014 update,'' {\em Supercomputing frontiers and
  innovations}, vol.~1, no.~1, pp.~5--28, 2014.

\bibitem{kaminsky2016big}
A.~Kaminsky, ``{BIG} {CPU}, {BIG} {DATA}: Solving the world's toughest
  computational problems with parallel computing,'' 2016.

\bibitem{van1997summa}
R.~A. Van De~Geijn and J.~Watts, ``Summa: Scalable universal matrix
  multiplication algorithm,'' {\em Concurrency-Practice and Experience},
  vol.~9, no.~4, pp.~255--274, 1997.

\bibitem{yang2016fault}
Y.~Yang, P.~Grover, and S.~Kar, ``Fault-tolerant distributed logistic
  regression using unreliable components,'' in {\em Proc. Allerton Conf. on
  Commun., Control and Comput.}, pp.~940--947, IEEE, 2016.

\bibitem{zhang2012verification}
F.~Zhang and H.~D. Pfister, ``Verification decoding of high-rate {LDPC} codes
  with applications in compressed sensing,'' {\em IEEE Trans. Inf. Theory},
  vol.~58, no.~8, pp.~5042--5058, 2012.

\bibitem{capalbo2002randomness}
M.~Capalbo, O.~Reingold, S.~Vadhan, and A.~Wigderson, ``Randomness conductors
  and constant-degree lossless expanders,'' in {\em Proc. 34th ACM Symp. Theory
  Comput.}, pp.~659--668, ACM, 2002.

\bibitem{guruswami2009unbalanced}
V.~Guruswami, C.~Umans, and S.~Vadhan, ``Unbalanced expanders and randomness
  extractors from {P}arvaresh-{V}ardy codes,'' {\em J. ACM}, vol.~56, no.~4,
  p.~20, 2009.

\bibitem{burshtein2001expander}
D.~Burshtein and G.~Miller, ``Expander graph arguments for message-passing
  algorithms,'' {\em IEEE Trans. Inf. Theory}, vol.~47, no.~2, pp.~782--790,
  2001.

\bibitem{feldman2007lp}
J.~Feldman, T.~Malkin, R.~A. Servedio, C.~Stein, and M.~J. Wainwright, ``{LP}
  decoding corrects a constant fraction of errors,'' {\em IEEE Trans. Inf.
  Theory}, vol.~53, no.~1, pp.~82--89, 2007.

\bibitem{richardson2008modern}
T.~Richardson and R.~Urbanke, {\em Modern coding theory}.
\newblock Cambridge University Press, 2008.

\end{thebibliography}

\end{document}